\newcommand{\cmark}{\ding{51}}%
\newcommand{\xmark}{\ding{55}}%
\newtheorem{theorem}{Theorem}[section]
\newtheorem{proposition}[theorem]{Proposition}
\newtheorem{lemma}[theorem]{Lemma}
\newtheorem{corollary}[theorem]{Corollary}
\newtheorem{observation}[theorem]{Observation}
\newtheorem{claim}[theorem]{Claim}
\theoremstyle{definition}
\newtheorem{example}[theorem]{Example}
\newcommand{\ag}{N} 
\newcommand{\Partitions}{\Pi_N}
\newcommand{\caf}{CAF} 
\newcommand{\cafs}{CAFs}
\pgfplotsset{compat=1.3}
\title{Causes of Stability in Dynamic Coalition Formation}
\author[1]{Niclas Boehmer}
\author[2]{Martin Bullinger}
\author[3]{Anna Maria Kerkmann}
\affil[1]{ \small Algorithmics and Computational Complexity, Technische Universit\"at Berlin}
\affil[2]{ \small School of Computation, Information and Technology, Technische Universit\"at M\"unchen}
\affil[3]{ \small Institut f\"ur Informatik, Heinrich-Heine-Universit\"at D\"usseldorf\protect\\ \vspace*{0.05cm} niclas.boehmer@tu-berlin.de, martin.bullinger@in.tum.de, anna.kerkmann@hhu.de}
\begin{document}

\maketitle

\begin{abstract}
	We study the formation of stable outcomes via simple dynamics in cardinal hedonic games, where the utilities of agents change over time depending on the history of the coalition formation process.
	Specifically, we analyze situations where members of a coalition decrease their utility for a leaving agent (\emph{resent}) or increase their utility for a joining agent (\emph{appreciation}).
	We show that in contrast to classical dynamics, for resentful or appreciative agents, dynamics are guaranteed to converge under mild conditions for various stability concepts.
	Thereby, we establish that both resent and appreciation are strong stability-driving forces. 
\end{abstract}

\section{Introduction}

Coalition formation is a vibrant topic in multi-agent systems that has been continuously researched during the last  decades.
It concerns the question of dividing a set of agents, for example, humans or machines, into disjoint coalitions such as research teams. 
Agents carry preferences over these coalition structures. 
A common assumption is that externalities, that is, the coalition structure outside one's own coalition, play no role.
This is captured in the prominent framework of hedonic games.
Moreover, the desirability of a coalition structure is usually measured with respect to stability.
Abstractly speaking, a coalition structure is stable if there is no agent or set of agents that can perform a beneficial deviation by joining existing coalitions or by forming new coalitions. 

There are two specific properties of hedonic games crucially influencing past research. 
First, the number of possible coalitions an agent can be part of is exponentially large.
Therefore, a repeatedly considered challenge is to come up with reasonable succinctly representable settings. 
It is very prominent in this context to aggregate utilities from cardinal valuations of other agents.
Second, most established stability concepts suffer from non-existence under strong restrictions which often leads to computational boundaries such as hardness of the decision problem whether a stable state exists. 
Much of the research has therefore focused on identifying suitable conditions guaranteeing stable states.

The dominant coalition formation framework is static in two dimensions. First, stability is usually a static concept in the sense that, since a coalition structure is either stable or not, we are only interested in \emph{finding} these stable structures. 
The underlying assumption here is that we operate in a centralized system where a (desirable) coalition structure can be created by a central authority. 
This paradigm has only recently been complemented by interpreting deviations of agents as a dynamic process. 
The goal here is to reach stable coalition structures through decentralized individual decisions \citep{BBW21a,BBT22a}.
Second, utility functions are static. 
To demonstrate the implications of this assumption, we describe a run-and-chase example, which is present in many classes of hedonic games. 
Consider a situation where there are only the two agents Alice and Bob. 
Alice wants to be alone in her coalition, whereas Bob wants to be in a joint coalition with Alice. 
It is clear that in the two possible coalition structures, there is always an agent who wants to change their situation. 
From a centralized perspective, this simply means that no coalition structure has the prospect of stability. 
In a distributed, dynamic setting where utilities are static, the following occurs indefinitely: Whenever Alice and Bob are in a joint coalition, then Alice leaves the coalition to be alone. 
However, whenever Alice and Bob are in two separate coalitions, then Bob joins Alice. 
In practice, such an infinite situation is unreasonable: 
After playing run-and-chase for a while, either Alice or Bob are likely to change their behavior and therefore their preferences.
On the one hand, Bob might get frustrated  because he is constantly left by Alice and therefore stops his efforts to join her. 
On the other hand, Alice could realize the high effort that Bob makes to be in a coalition with her and feels sufficient appreciation to eventually accept Bob in her coalition. 
In both scenarios, we reach a state that is stable because of the \emph{history} of the coalition formation process.

In this paper, we model situations where the history influences the agents' utilities, offering a new perspective on the reachability of stable coalition structures.
We study a dynamic coalition formation process where agents perform deviations based on stability concepts.
However, in contrast to previous work on dynamics, we assume that a deviation has an effect on the \emph{perception} of the deviator, resulting in agents changing their utility for the deviator.
We distinguish two approaches.
First, an agent might act \emph{resentfully} in the sense that, like Bob, she lowers her utility for an agent abandoning her. 
A deviator abandoning a resentful agent again and again eventually looses all of her attraction to the resentful agent.
On the other hand, an agent could \emph{appreciate} the effort of another agent to be part of her coalition, and therefore, like Alice, increase her utility for an agent whenever the agent joins her.
After sufficient effort, the urge to leave the deviator ceases.

\begin{table}[t!]\centering
	\caption{Overview of results. ``\cmark'' means that the corresponding dynamics is guaranteed to converge; ``\xmark'' means that we have an example for an infinite sequence. See \Cref{se:prelims} for definitions. For each result, we include the number of the respective statement.}\label{se:ov-table}
	\setlength{\tabcolsep}{4pt}
		\begin{tabular}{lc|c|c|c|c|c|c}
			\toprule
			& \multicolumn{1}{c}{SCS} & \multicolumn{1}{c}{CS} & \multicolumn{1}{c}{IS} & \multicolumn{1}{c}{CNS} & \multicolumn{1}{c}{NS} \\
			\midrule
			\makebox[-1pt][l]{\textbf{ASHG}}\\
			resent & \cmark~(\ref{thm:resent:SC-converges}) & \cmark~(\ref{thm:resent:SC-converges})& \cmark~(\ref{thm:resent:SC-converges})& \textbf{?} & \textbf{?} \\
			resent+IR & \cmark~(\ref{thm:resent:SC-converges})& \cmark~(\ref{thm:resent:SC-converges})& \cmark~(\ref{thm:resent:SC-converges})& \cmark~(\ref{thm:resent:NS-IR-converges})& \cmark~(\ref{thm:resent:NS-IR-converges}) \\
			appreciation & \xmark~(\ref{th:appreciative-core-cylce}) & \xmark~(\ref{th:appreciative-core-cylce}) & \textbf{?} & \cmark~(\ref{thm:apprec-CNS})& \textbf{?} \\
			\midrule
			\makebox[-1pt][l]{\textbf{MFHG}}\\
			resent &   \textbf{?} &  \textbf{?} &  \textbf{?} & \textbf{?} & \xmark~(\ref{th:mfhg-resent})  \\
			resent+IR & \cmark~(\ref{thm:resent:SC-can-converge}) & \cmark~(\ref{thm:resent:SC-can-converge})& \cmark~(\ref{thm:resent:SC-can-converge})& \cmark~(\ref{thm:resent:NS-IR-converges})& \cmark~(\ref{thm:resent:NS-IR-converges}) \\
			appreciation & \xmark~(\ref{th:appreciative-core-cylce}) & \xmark~(\ref{th:appreciative-core-cylce})& \textbf{?} & \textbf{?} & \xmark~(\ref{th:mfhg-apprec})\\
			\bottomrule
		\end{tabular}
\end{table}

\subsection{Contribution}

We initiate the study of cardinal hedonic games under utility functions changing over time. 
In particular, we consider utility modifications based on the resentful and appreciative perception of other agents. 
We investigate whether decentralized dynamics based on various types of deviations are guaranteed to converge.
Deviations might be constrained to be individually rational (IR), that is, a deviating agent needs to prefer her new coalition to being alone.
We showcase our results by considering additively separable hedonic games (ASHGs) and modified fractional hedonic games (MFHGs), where an agent's utility for a coalition is the sum or average utility for the other agents in the coalition, respectively.
\Cref{se:ov-table} provides an overview of these results.
First, for resentful agents performing individually rational deviations, convergence is guaranteed in all considered cases. 
If deviations may also violate individual rationality,
the situation becomes more complicated and elusive to a complete understanding; nevertheless, we establish several convergence guarantees while also having an involved example of a cycling dynamics in MFHGs.
In contrast, appreciation is usually not sufficient to guarantee convergence.
Notably, as proved in \Cref{co:equivalence} four of our open questions concerning both resentful and appreciative agents are in some sense equivalent. 
 
In fact, most of our results do not only apply to ASHGs and/or MFHGs but to larger classes of hedonic games. 
For this, we develop an axiomatic framework for utility aggregation  based on the perception of friends and enemies, that is, agents yielding positive and negative utility, respectively.

In our simulations, we observe that our model of dynamic utilities leads to the (quick) convergence of Nash dynamics. 
Moreover, we analyze the structure and expressiveness 
of the produced outcomes.
Finally, we outline results for other perception models and for computational questions concerned with finding shortest converging sequences.

\subsection{Related Work}

Hedonic games originate from economic theory \citep{DrGr80a}, but their constant and broad consideration only started with key publications by \citet{BKS01a}, \citet{CeRo01a}, and \citet{BoJa02a}.
An overview of hedonic games is provided in the survey by \citet{AzSa15a}.
The search for suitable representations of reasonable classes of hedonic games has led to various proposals \citep[see, e.g.,][]{CeRo01a,BoJa02a,Ball04a,ElWo09a,Olse12a,ABB+17a}.

Various stability concepts and their computational boundaries have been previously studied. We focus on results concerning ASHGs \citep{BoJa02a} and MFHGs \citep{Olse12a}.
\citet{SuDi10a} show prototype \NP-hardness reductions for single-agent stability concepts in ASHGs, paving the way for many similar results for single-agent and group stability \citep[see, e.g.,][]{ABS11c,BBT22a,Bull22a}. 
\citet{GaSa19a} consider ASHGs under symmetric utilities and show \PLS-completeness of computing stable states, while \citet{Woeg13a} and \citet{pet:c:precise-hed-games} show $\Sigma_2^\P$-completeness of the (strict) core in ASHGs. 
\citet{pet-elk:c:simple-causes-complex-hed-games} provide a meta view on computational hardness.
For MFHGs, there seem to be less computational boundaries. Indeed, for symmetric and binary utilities, stable states exist and can be efficiently computed. Core stability is even tractable for symmetric and arbitrarily weighted utilities \citep{MMV18a}.
Apart from the consideration of stability, other desirable notions of efficiency or fairness such as Pareto optimality, envy-freeness, or popularity have been studied for ASHGs and MFHGs \citep{ABS11c,EFF20a,Bull19a,BrBu22a}. 
These papers provide more evidence that MFHGs seem to be less complex than ASHGs.

The dynamical, distributed approach to coalition formation received increased attention very recently \citep{HVW18a,BFFMM18a,CMM19a,BBW21a,FMM21a,BBT22a,BMM22a}. 
There, \citet{BFFMM18a,BBW21a,BBT22a} consider stability based on single-agent deviations, whereas \citet{CMM19a,FMM21a} consider group stability.

 Finally, another recent approach to achieve a guarantee to stability in hedonic games is the consideration of loyalty \citep{BuKo21a} which is based on altruism in hedonic games \citep{KNRRRSW22}.

\section{Preliminaries and Model}\label{se:prelims}

In this section, we define the basic coalition formation setting, our specific model, and provide some first observations. For an integer $i\in \mathbb{N}$, we define $[i]=\{1,\ldots,i\}$.

\subsection{Cardinal Hedonic Games}
Let $N=[n]$ 
be a finite set of \emph{agents}.
A \emph{coalition} is any subset of $N$.
We denote the set of all possible coalitions containing agent $i\in N$ by
$\mathcal N_i = \{C\subseteq N\colon i\in C\}$.
Any \emph{partition} of the agents $N$ 
is also called \emph{coalition structure} and
we denote the set of all partitions of $N$ by $\Partitions$.
Given an agent $i\in N$ and a partition $\pi \in \Partitions$, let $\pi(i)$ denote the coalition of $i$, i.e., the unique coalition $C\in \pi$ with $i\in C$.
A \emph{(cardinal) hedonic game} is a pair $(N,u)$ consisting of
a set $N$ of agents and a utility profile $u=(u_i)_{i\in N}$ where $u_i\colon N \to \mathbb Q$ is the \emph{utility function} of agent $i$. 
Thus, for $i,j\in N$, $u_i(j)$ is $i$'s utility for agent $j$.
We sometimes equivalently view a utility function as a vector $u_i\in \mathbb Q^n$. 
An agent $j\in N$ is a \emph{friend} (or \emph{enemy}) of an agent $i\in N$ if $u_i(j) > 0$ (or $u_i(j) < 0$).

To move from utilities for single agents to utilities over coalitions, we use
\emph{cardinal aggregation functions} (\cafs).  
For every agent $i\in N$,
the {\caf} $A_i\colon \mathcal N_i \times \mathbb Q^n \to \mathbb Q$ specifies $i$'s utility for a given coalition 
for her given utility vector. 
Then, the utility of an agent for a partition $\pi$ with respect to aggregation function $A_i$ is $u_i^{A_i}(\pi) = A_i(\pi(i), u_i)$. To keep notation concise, we sometimes omit the {\caf} as a superscript when it is clear from the context.
For an agent $i\in N$ with utility function $u_i$, a coalition $C\in \mathcal{N}_i$ is \emph{individually rational} (IR) if $A_i(C,u_i) \ge A_i(\{i\},u_i)$. 
Further, a partition $\pi$ is  \emph{individually rational} (IR) for agent $i$ if $\pi(i)$ is an individually rational coalition.

Common classes of cardinal hedonic games such as the two specific classes studied in this paper have a straightforward representation with respect to \cafs. For each agent $i\in N$ with utility function $u_i$,
\begin{itemize}
	\item additively separable hedonic games (ASHGs) \citep{BoJa02a}
		use the aggregation function $\mathit{AS}$ defined by
		$\mathit{AS}_i(C, u_i)= \sum_{j\in C\setminus\{i\}}u_i(j)$ and
	\item modified fractional hedonic games (MFHGs) \citep{Olse12a}
		use the aggregation function $\mathit{MF}$ defined by
		$\mathit{MF}_i(C, u_i)= \frac{\sum_{j\in C\setminus \{i\}}u_i(j)}{|C|-1}$ if $|C|\ge 2$ and $\mathit{MF}_i(C, u_i) = 0$, otherwise.
\end{itemize}

\subsection{Deviations and Stability}

As indicated in the introduction, we 
distinguish different stability notions
based on single-agent deviations
and group deviations.
Given a partition $\pi\in \Partitions$,
agent $i\in N$ might perform 
a \emph{single-agent deviation} from $\pi(i)$ to any coalition $C\in \pi\cup\{\emptyset\}$,
resulting in the partition 
$\pi'
=(\pi \setminus \{\pi(i), C \})\cup \{\pi(i)\setminus\{i\},C\cup\{i\}\}$;
and a group of agents $C\subseteq N$ might perform
a \emph{group deviation},
leading to the partition 
$\pi'
=(\pi \setminus\{\pi(j)\mid j\in C \})\cup\{\pi(j)\setminus C\mid j\in C\}\cup\{C\}$.

Depending on which agents improve as a result of a deviation,
we distinguish the following \emph{types of deviations}. 
Agent $i$'s single-agent deviation from $\pi(i)$ to $C\in \pi\cup\{\emptyset\}$, 
resulting in partition $\pi'$,
is a
\emph{Nash} (NS) deviation if $u_i(\pi') > u_i(\pi)$.
An NS deviation of $i$ from $\pi$ to $\pi'$
is called
\begin{itemize}
	\item an \emph{individual} (IS) deviation if $u_j(\pi')\ge u_j(\pi)$
	for all $j\in C$, where $C$ is the coalition to which $i$ deviated;~and
	\item a \emph{contractual Nash} (CNS) deviation if $u_j(\pi')\ge u_j(\pi)$
	for all $j\in \pi(i)\setminus\{i\}$.
\end{itemize}
A group deviation of coalition $C$ from $\pi$ to $\pi'$ is 
\begin{itemize}
	\item a \emph{core} (CS) deviation if $u_i(\pi') > u_i(\pi)$
	for all $i\in C$; and
	\item a \emph{strict core} (SCS) deviation if $u_i(\pi') \ge u_i(\pi)$
	for all $i\in C$ and
	$u_j(\pi') > u_j(\pi)$
	for some $j\in C$.
\end{itemize}

Finally, for all types of deviations introduced above, we 
define the  respective stability notion of a partition by the absence of a corresponding deviation.
For example, a partition $\pi$ is said to be Nash-stable (NS) if 
there is no NS deviation from $\pi$ to another partition.
The logical relations among the resulting stability concepts are illustrated in \Cref{fig:stability-relations} \citep[see also][]{AzSa15a}.

For a given partition, 
several single-agent or group deviations might be possible.
Yet, some deviations seem to be more reasonable than others.
We say that a deviation is IR if 
the resulting partition is IR 
for all deviating agents. 
For all our considered stability concepts it holds that if an agent has a deviation (that is potentially not IR), then she also has an IR deviation where she forms a singleton coalition.

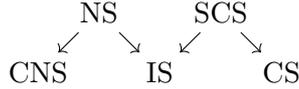
\begin{figure}
	\centering
	\begin{tikzpicture}
	
	\pgfmathsetmacro\lengthunit{0.8}
	\pgfmathsetmacro\heightunit{0.8}
	\node (NS) at (0,2*\heightunit) {NS};
	\node (IS) at (\lengthunit,\heightunit) {IS};
	\node (CNS) at (-\lengthunit,\heightunit) {CNS};
	\node (SC) at (2*\lengthunit,2*\heightunit) {SCS};
	\node (C) at (3*\lengthunit,\heightunit) {CS};
	
	\draw[->] (NS) edge (CNS);
	\draw[->] (NS) edge (IS);
	\draw[->] (SC) edge (IS);
	\draw[->] (SC) edge (C);
	
	\end{tikzpicture}
	\caption{Relations among our stability concepts.
		Arrows 
		indicate implications. 
		For example, 
		strict core stability (SCS) implies
		core stability (CS) and
		individual stability~(IS). 
		\label{fig:stability-relations}}
\end{figure}

\subsection{Dynamic Coalition Formation}
We now introduce our model of dynamic coalition formation over time, and the concepts of \emph{resent} and \emph{appreciation}.
Throughout the paper, we consider sequences of partitions $(\pi^t)_{t\ge 0}$, where for every $t\ge 1$, 
$\pi^t$ evolves from $\pi^{t-1}$ by means of some single-agent or group deviation.
We assume that both the initial coalition structure $\pi^0$ and the initial utility vectors $u_i^0$ for each agent $i\in N$ are given. 
However, utilities change over time as follows. 
Under \emph{resent}, agents 
decrease their utilities for all deviators that leave them (by one), while under \emph{appreciation}, agents increase their utilities 
for all deviators that join them (by one).\footnote{Note that our choice of decreasing, resp., increasing the utilities by one is somewhat arbitrary, as our theoretical results hold for any fixed increase or decrease of utilities. However, note that in case the utility change in each round is not constant, our convergence guarantees are no longer applicable, as, for instance, run-and-chase situations can occur.}
More formally, if for some $t\ge 1$, $\pi^t$ evolves from $\pi^{t-1}$ via a
single-agent deviation of agent $k\in N$, then, for $i,j\in N$,
\begin{itemize}
	\item for \emph{resentful} agents, $u_i^t(j)$ arises from $u_i^{t-1}(j)$ as
	\[u_i^t(j) = \begin{cases}
		u_i^{t-1}(j) -1 & i\neq k, j=k, j\in \pi^{t-1}(i),\\
		u_i^{t-1}(j) & \text{else.}\\
	\end{cases}\]
	\item for \emph{appreciative} agents, $u_i^t(j)$ arises from $u_i^{t-1}(j)$ as
	\[u_i^t(j) = \begin{cases}
		u_i^{t-1}(j) +1 & i\neq k, j=k, j\in \pi^{t}(i),\\
		u_i^{t-1}(j) & \text{else.}\\
	\end{cases}\]
\end{itemize}

If for $t\ge 1$, $\pi^t$ evolves from $\pi^{t-1}$ via a group deviation of $C\subseteq N$, then, for $i,j\in N$,
\begin{itemize}
	\item for \emph{resentful} agents, $u_i^t(j)$ arises from $u_i^{t-1}(j)$ as
	\[u_i^t(j) = \begin{cases}
		u_i^{t-1}(j) -1 & i\notin C, j\in C, j\in \pi^{t-1}(i),\\
		u_i^{t-1}(j) & \text{else.}\\
	\end{cases}\]
	\item for \emph{appreciative} agents, $u_i^t(j)$ arises from $u_i^{t-1}(j)$ as
	\[u_i^t(j) = \begin{cases}
		u_i^{t-1}(j) +1 & i\neq j, i\in C, j\in C,\\
		u_i^{t-1}(j) & \text{else.}\\
	\end{cases}\]
\end{itemize}

We are concerned about sequences of partitions that evolve by deviations with respect to the current utilities of the agents.
For any stability concept $\alpha \in \{\mathit{NS, IS, CNS, CS, SCS}\}$, a sequence of partitions $(\pi^t)_{t\ge 0}$ is called an execution of an $\alpha$ \emph{dynamics} if $\pi^t$ evolves from $\pi^{t-1}$ through an $\alpha$ deviation with respect to the utility functions $(u_i^{t-1})_{i\in N}$.
If all deviations are individually rational, we call the dynamics \emph{individually rational}, e.g., individually rational NS dynamics in the case of Nash stability.

An execution of an $\alpha$ dynamics \emph{converges} if it terminates after a finite number of $T$ steps in a partition $\pi^T$ that is stable with respect to 
$(u_i^T)_{i\in N}$ 
under the stability notion $\alpha$.
We say that the $\alpha$ dynamics \emph{converges} if every execution of the $\alpha$ dynamics converges for every initial utility profile and partition. By contrast, the dynamics \emph{cycles} if there exists an infinite execution of the dynamics (for some initial utilities and partition).
The central question of this paper is when dynamics converge for resentful or appreciative agents. 

It is convenient to use a compact notation for utilities. We write $u_i^{t,A_i}(\pi) = A_i(\pi(i),u_i^t)$ and $u_i^{t,A_i}(C) = A_i(C,u_i^t)$ for the utility of agent $i$ at time $t$ for a partition $\pi\in \Partitions$ or for coalition $C\in \mathcal N_i$, respectively. If the {\caf} $A_i$ is clear from context, we usually omit it as superscript.

Before our main analysis, we present a useful lemma that holds for arbitrary dynamics. 
The lemma can be applied to show that, from a certain point onwards, every deviation occurs infinitely often in an infinite execution of a dynamic.
The proof is straightforward and is included in the appendix.

\begin{restatable}{lemma}{infinite}
	\label{lem:infoccurence}
	Let $(\pi^t)_{t\ge 0}$ be an infinite sequence of partitions induced by single-agent (or group) deviations. Then, there exists a $t_0\ge 0$ such that every single-agent (or group) deviation performed at some time $t\ge t_0$ occurs infinitely often.
\end{restatable}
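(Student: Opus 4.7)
The plan is to exploit the finiteness of the space of possible deviations combined with a simple pigeonhole argument. First, I would observe that a single-agent deviation can be encoded, independently of the current partition, as a pair $(i, C)$ where $i\in N$ is the deviating agent and $C\subseteq N$ is the coalition she joins (taking $C=\emptyset$ when she becomes a singleton); analogously, a group deviation is encoded by the deviating set $C\subseteq N$. In both cases, the number of distinct such encodings is at most $n\cdot 2^n$ or $2^n$, respectively, and in particular is finite.

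Next, let $D$ be the (finite) set of deviations that actually occur at some point in the sequence $(\pi^t)_{t\ge 0}$, and split it into $D = D_\infty \sqcup D_{<\infty}$, the deviations occurring infinitely often and those occurring only finitely often. For each $d\in D_{<\infty}$ there is a well-defined last occurrence time $\tau(d)$. Because $D_{<\infty}$ is finite, I can set
\[
t_0 \;=\; 1 + \max_{d\in D_{<\infty}} \tau(d),
\]
with the convention $t_0=0$ if $D_{<\infty}=\emptyset$. By construction, no deviation belonging to $D_{<\infty}$ is performed at any time $t\ge t_0$, so every deviation performed at some $t\ge t_0$ must lie in $D_\infty$ and thus occurs infinitely often, which is exactly the claim.

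There is essentially no substantive obstacle in this proof; the only subtle point is to commit to an encoding of ``a deviation'' that does not depend on the partition in which it is executed, so that the set of possible deviations is truly finite and the counting argument goes through. The encodings above fulfil this requirement, and all remaining steps are bookkeeping.
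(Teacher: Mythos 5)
Your proof is the same pigeonhole/last-occurrence argument as the paper's: the set of possible deviations is finite, each finitely-recurring one has a last occurrence, and $t_0$ is one more than the maximum of those. The only divergence is your choice of what counts as ``a deviation,'' and there your reasoning is slightly off in a way that matters. You insist on an encoding that ``does not depend on the partition in which it is executed'' so that the set of deviations is finite — but finiteness is not at stake either way: the number of partitions of $[n]$ is finite, so the fully partition-dependent notion (the deviator together with the source partition $\pi^t$ and the target partition $\pi^{t+1}$, which is exactly the ``labeled transition'' the paper's proof counts) also ranges over a finite set, and your argument applies verbatim to it. More importantly, your coarser encoding $(i,C)$ forgets which coalition $i$ left, so your conclusion only guarantees that $i$ joins $C$ infinitely often — not that $i$ abandons the same agents infinitely often. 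Several applications of the lemma (e.g., \Cref{prop:resent-CNS-vs-NS}, \Cref{thm:resent:NS-IR-converges}, \Cref{obs:appreciation-IS-vs-NS}) need precisely the latter: they argue that once agent $j$ leaves agent $i$ after $t_0$, she does so infinitely often, so that resent or appreciation accumulates without bound; under your encoding later occurrences of $(j,C)$ might start from coalitions not containing $i$, and the argument breaks. The fix is trivial — take the full transition as the unit of counting — but as written you prove a strictly weaker statement than the one the paper uses.
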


Lastly, we call an infinite sequence of partitions $\pi = (\pi^t)_{t\ge 0}$ \emph{periodic} if there exist $t_0\in \mathbb N$ and $p\in \mathbb N$ such that, for all $k\in \mathbb N_0$ and $l\in \{0,\dots, p-1\}$, it holds that $\pi^{t_0+kp+l} = \pi^{t_0+l}$.

\subsection{Properties of Aggregation Functions}
We now introduce some
useful properties
of \cafs.

For any $i\in N$, a {\caf} $A_i$ satisfies

\begin{itemize}
	\item \emph{aversion to enemies} (ATE) if, for all 
	coalitions $C\in \mathcal{N}_i$,
	agents $j\in C\setminus\{i\}$, 
	and utility vectors $u_i\in \mathbb Q^n$ with $u_i(j) < 0$, it holds that $A_i(C, u_i) \le A_i(C\setminus\{j\}, u_i)$. In other words, the aggregated utility is weakly better whenever an enemy leaves $i$'s coalition.
	\item \emph{individually rational aversion to enemies} (IR ATE) if,
	for all 
	coalitions $C\in \mathcal{N}_i$,
	agents $j\in C\setminus\{i\}$, 
	and utility vectors $u_i\in \mathbb Q^n$ with $u_i(j) < 0$, it holds that $A_i(C, u_i) \le A_i(C\setminus\{j\}, u_i)$
	if $A_i(C,u_i)\ge A_i(\{i\},u_i)$. In other words, the aggregated utility is weakly better when an enemy leaves one of $i$'s individually rational coalitions.
	\item \emph{enemy monotonicity} (EM) if, 
	for all 
	coalitions $C\in \mathcal{N}_i$,
	agents $j\in N$, and utility vectors $u_i, u_i'\in \mathbb Q^n$ with $u_i(k) = u_i'(k)$ for all $k \neq j$ and $u_i'(j) < u_i(j) < 0$, it holds that $A_i(C, u_i) \ge A_i(C, u_i')$. In other words, decreasing the utility for an enemy cannot improve a coalition value.
	\item \emph{enemy domination} (ED) if, 
	for all 
	utility vectors $u_i\in \mathbb Q^n$ and 
	agents $j\in N\setminus \{i\}$, there exists a constant $c(u_i,j)$ such that for all utility vectors $u_i' \in \mathbb Q^n$ with $u_i'(k) \le u_i(k)$ for all $k\in N$ and $u_i'(j)\le c(u_i,j)$, it holds for every $C\in \mathcal{N}_i$ with $j\in C$ 
	that $A_i(C, u_i') < A_i(\{i\},u_i')$. In other words, an {\caf} satisfies ED if in case $i$'s utility for some agent $j$ is sufficiently negative and $i$'s utility for every other agent is bounded, then no coalition containing $j$ is individually rational for $i$.
\end{itemize}

All of these axioms capture the treatment of enemies.
The first two axioms deal with situations where an enemy leaves the agent's coalition, where ATE is stronger than IR ATE.
On the other hand, EM and ED are variable utility conditions describing situations where the utility for an enemy decreases or some agent turns into a very bad enemy, respectively. 
Apart from the implication between ATE and IR ATE, there are no other logical relationships between any pair of axioms.

\begin{example}\label{ex:MFHGviolateATE}
	In this example, we consider a game $(N,u)$ for which the {\caf} $\mathit{MF}$ violates ATE. 
	Let $N = \{a,b,c\}$ and let the single-agent utilities be $u_a(b) = -1$, $u_a(c) = -3$, $u_b(a) = 1$, and $u_b(c) = -1$. (The utilities $u_c(a)$ and $u_c(b)$ are irrelevant.)
	
	\begin{figure}
		\centering
		\begin{tikzpicture}[
		element/.style={shape=circle,draw, fill=white}
		]
		\node[element] (a) at (-1.2,0) {$a$};
		\node[element] (b) at (60:1.2) {$b$};
		\node[element] (c) at (300:1.2) {$c$};

		\draw[bend right = 20, ->] (a) edge node[midway, fill = white] {$-1$} (b);
		\draw[->] (a) edge node[midway, fill = white] {$-3$} (c);
		\draw[bend right = 20, ->] (b) edge node[midway, fill = white] {$1$} (a);
		\draw[->] (b) edge node[midway, fill = white] {$-1$} (c);
		\end{tikzpicture}
		\caption{Illustration of teh game $(N,u)$ in \Cref{ex:MFHGviolateATE}. An edge with weight $w$ from agent $x$ to agent $y$ means that $u_x(y) = w$.}
	\end{figure}
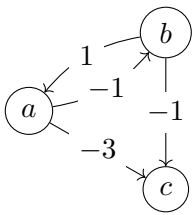

	Then, removing an enemy can make an agent worse. Indeed, $\mathit{MF}_a(N,u_a) = -2 > -3 = \mathit{MF}_a(\{a,c\},u_a)$.
	Hence, $\mathit{MF}$ violates ATE. 
	 On the other hand, as we will see in \Cref{prop:axioms}, removing an enemy from an individually rational coalition cannot decrease the utility in an MFHG. For instance, $\mathit{MF}_b(N,u_b) = 0 < 1 = \mathit{MF}_b(\{a,b\},u_b)$. \hfill$\lhd$
\end{example}

Still, classical aggregation functions usually satisfy (most of) our introduced axioms. 
\begin{restatable}{proposition}{axioms}
	\label{prop:axioms}
	The additively separable {\caf} $\mathit{AS}_i$ satisfies ATE, IR ATE, EM, and ED.
	The modified fractional {\caf}~$\mathit{MF}_i$ satisfies IR ATE, EM, and ED but violates ATE.
\end{restatable}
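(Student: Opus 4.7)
The plan is to verify each axiom by direct computation from the definitions of $\mathit{AS}_i$ and $\mathit{MF}_i$. Most cases are short calculations; the only delicate point is IR ATE for $\mathit{MF}$, where the individual-rationality hypothesis must be used in an essential way (since ATE itself fails for $\mathit{MF}$ by \Cref{ex:MFHGviolateATE}).

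First, I would dispatch $\mathit{AS}_i$. ATE and IR ATE collapse to the same one-line argument: if $j\in C\setminus\{i\}$ is an enemy with $u_i(j)<0$, then $\mathit{AS}_i(C\setminus\{j\},u_i)=\mathit{AS}_i(C,u_i)-u_i(j)>\mathit{AS}_i(C,u_i)$. EM is equally immediate: if $j\in C$, then $\mathit{AS}_i(C,u_i)-\mathit{AS}_i(C,u_i')=u_i(j)-u_i'(j)>0$, and if $j\notin C$ the two values coincide. For ED, fix $u_i$ and $j$, set $M=\sum_{k\in N\setminus\{i,j\}}\max\{u_i(k),0\}$, and define $c(u_i,j):=-M-1$. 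Because $u_i'\le u_i$ pointwise, we have $\sum_{k\in C\setminus\{i,j\}}u_i'(k)\le M$ for every $C\in\mathcal N_i$ with $j\in C$, hence $\mathit{AS}_i(C,u_i')\le u_i'(j)+M\le -1<0=\mathit{AS}_i(\{i\},u_i')$.

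Next I would handle $\mathit{MF}_i$. ATE is refuted by \Cref{ex:MFHGviolateATE}. For IR ATE, take an IR coalition $C$ with an enemy $j\in C\setminus\{i\}$, and write $n=|C|$ and $S=\sum_{k\in C\setminus\{i\}}u_i(k)$, so that IR gives $S\ge 0$. The case $n=2$ is vacuous: it would force $\mathit{MF}_i(C,u_i)=u_i(j)<0$, contradicting IR. For $n\ge 3$, the desired inequality $(S-u_i(j))/(n-2)\ge S/(n-1)$ simplifies, after clearing the positive denominators, to $S\ge(n-1)u_i(j)$, which holds because $S\ge 0>(n-1)u_i(j)$. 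EM for $\mathit{MF}_i$ is then immediate from the $\mathit{AS}_i$ argument divided by the positive factor $|C|-1$, and ED follows via the same bound $M$ as above: the numerator of $\mathit{MF}_i(C,u_i')$ is at most $u_i'(j)+M$, and dividing by the positive $|C|-1$ preserves the strict inequality against $\mathit{MF}_i(\{i\},u_i')=0$, so the same constant $c(u_i,j)=-M-1$ works.

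The main (and essentially only) obstacle is IR ATE for $\mathit{MF}_i$: one must recognise that the failure mode illustrated by \Cref{ex:MFHGviolateATE} is ruled out precisely by the assumption $S\ge 0$, which makes the algebraic manipulation go through. The remaining six verifications are one-line consequences of additivity together with the sign of $|C|-1$.
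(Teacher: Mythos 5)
Your proposal is correct and follows essentially the same route as the paper: the same one-line additivity arguments for $\mathit{AS}_i$, the same clearing-of-denominators computation for IR ATE of $\mathit{MF}_i$ (reducing to $S\ge(|C|-1)u_i(j)$, which the IR hypothesis $S\ge 0$ supplies), and the same observation that EM and ED transfer to $\mathit{MF}_i$ because division by the positive $|C|-1$ preserves the inequalities. The only cosmetic difference is your choice of a single uniform constant $c(u_i,j)=-M-1$ with $M=\sum_{k\neq i,j}\max\{u_i(k),0\}$ for ED, where the paper takes a minimum over coalitions; both are valid.
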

\begin{proof}
	We first consider additively separable aggregation and each axiom separately.
	
	\begin{itemize}
		\item ATE holds because a sum gets smaller when removing a negative summand.
		\item IR ATE follows from ATE.
		\item EM holds because a sum can only get smaller when decreasing every summand (and decreasing one of them strictly).
		\item ED holds because $\sum_{k\in C\setminus\{i\}}u_i(k)$ gets negative if all summands only diminish and, for $j\in C\setminus\{i\}$, $u(j) \le - 1 - \sum_{k\in C\setminus \{j\}}u_i(k)$. Hence, we can choose the constant $c(u_i,j) =  \min_{C\in \mathcal N_i\colon j\in C} - 1 - \sum_{k\in C\setminus \{j\}}u_i(k)$.
	\end{itemize}
	
	Next, we consider modified fractional aggregation. The violation of ATE is considered in \Cref{ex:MFHGviolateATE}.
	\begin{itemize}
		\item IR ATE: Let $C\subseteq \mathcal N_i$, $j\in C\setminus\{i\}$, and $u_i\in \mathbb Q^n$ with $u_i(j) < 0$. Moreover, suppose that $\mathit{MF}_i(C,u_i)\ge 0$. Then, $|C| \ge 3$ because otherwise $\mathit{MF}_i(C,u_i) = u_i(j) < 0$. Consequently, 
		\begin{align*}
		&\mathit{MF}_i(C\setminus\{j\},u_i) \ge \mathit{MF}_i(C,u_i) \\
		\Leftrightarrow & \frac{\sum_{k\in C\setminus \{i,j\}}u_i(k)}{|C|-2} \ge \frac{\sum_{k\in C\setminus\{i\}}u_i(k)}{|C|-1}\\
		\Leftrightarrow & (|C| - 1) {\sum_{k\in C\setminus \{i,j\}}u_i(k)} \ge (|C|-2)\sum_{k\in C\setminus\{i\}}u_i(k)\\
		\Leftrightarrow & \sum_{k\in C\setminus \{i,j\}}u_i(k) \ge (|C|-2)u_i(j)\\
		\Leftrightarrow & \frac{\sum_{k\in C\setminus\{i\}}u_i(k)}{|C|-1} \ge u_i(j).
		\end{align*}
		Hence, whenever $u_i(j) < 0$ and $\mathit{MF}_i(C,u_i)\ge 0$, the utility increases in the coalition where $j$ has left.	
		\item EM and ED hold for the same reasons as for ASHGs because the division by a positive number does not change the respective arguments.
	\end{itemize}
\end{proof}

\section{Dynamics for Resentful Agents} \label{sec:resent}
In this section, we study the convergence of different types of dynamics for resentful agents.
We start by considering (S)CS and IS dynamics, before turning to CNS and NS dynamics.

\subsection{Core Stability and Individual Stability}
If deviating agents need consensus from their new coalition, it turns out that resent is a strong force to establish convergence.
The intuitive reason for this is that an agent $a$ can only leave an agent $b$ for a limited number of times until resent prevents that they form a joint coalition again. In fact, otherwise $b$'s utility for $a$ becomes arbitrarily negative and $b$ no longer gives $a$ her consent to join. 
We will prove that SCS dynamics, and thereby also CS and IS dynamics, always converge for a wide class of {\cafs}. 
\begin{restatable}{theorem}{resentsc}
	\label{thm:resent:SC-converges}\label{cor:resent-convergence}
	The \mbox{SCS}, CS, and IS dynamics converge for resentful agents whose {\cafs} satisfy
	aversion to enemies and enemy monotonicity.
\end{restatable}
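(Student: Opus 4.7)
The plan is to show convergence in three steps: reduce the claim to SCS, use \Cref{lem:infoccurence} to impose strong structure on any hypothetical infinite execution, and then derive a contradiction by exploiting resent through ATE and EM. First I would reduce to SCS: every IS deviation of an agent $k$ joining a coalition $C$ is precisely the group deviation of $C\cup\{k\}$ in which $k$ strictly improves and every other member of $C\cup\{k\}$ weakly improves, so it satisfies SCS; likewise every CS deviation is already an SCS deviation. Thus any infinite execution of IS or CS dynamics is also an infinite execution of SCS dynamics, and it suffices to show that SCS dynamics converge.

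Next I would set up a contradiction. Suppose that $(\pi^t)_{t\ge 0}$ is an infinite execution of the SCS dynamics. By \Cref{lem:infoccurence}, there is a time $t_0$ after which every SCS deviation performed occurs infinitely often, so the execution cycles through finitely many partitions and each recurring SCS deviation has the same source partition at every occurrence. Pick any recurring SCS deviation $d$ forming a group $C$ from a fixed partition $\pi^\star$. Since $d$ is a genuine deviation, some $k \in C$ must leave $\pi^\star(k) \neq C$, and so there exists $i \in \pi^\star(k)\setminus C$ with $i \notin C$. Every occurrence of $d$ triggers resent on the pair $(i,k)$, decreasing $u_i(k)$ by one, and hence $u_i^t(k) \to -\infty$ as $t\to\infty$.

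The main step, and the main obstacle, is to conclude the contradiction from ATE and EM. For the execution to revisit $\pi^\star$, in which $i$ and $k$ share a coalition, at some time the cycle must apply an SCS deviation that places $i$ and $k$ back into a common coalition $C'$. At that step SCS requires $A_i(C',u_i^{t-1}) \ge A_i(\pi^{t-1}(i),u_i^{t-1})$. ATE applied to the enemy $k$ yields $A_i(C',u_i^{t-1}) \le A_i(C'\setminus\{k\},u_i^{t-1})$, and EM guarantees that as $u_i^{t-1}(k)$ decreases the left-hand side weakly decreases, while $A_i(\pi^{t-1}(i),u_i^{t-1})$, which does not involve the coordinate $u_i(k)$ since $k \notin \pi^{t-1}(i)$, is unaffected by changes to this coordinate. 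The delicate point I expect to be hardest is promoting these weak inequalities into a strict violation of the SCS condition at some sufficiently large time: along the cycle, the unbounded decrease of $u_i^{t-1}(k)$ together with repeated applications of EM must ultimately drive $A_i(C',u_i^{t-1})$ strictly below $A_i(\pi^{t-1}(i),u_i^{t-1})$, breaking the SCS constraint and contradicting the assumption of an infinite execution.
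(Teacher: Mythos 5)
Your reduction to SCS and your use of \Cref{lem:infoccurence} match the paper's setup, but the final step of your argument --- the one you yourself flag as the delicate point --- has a genuine gap that cannot be closed with the axioms you are given. Enemy monotonicity only yields \emph{weak} monotonicity of $A_i(C',\cdot)$ in the coordinate $u_i(k)$, so the fact that $u_i^t(k)\to-\infty$ does not force $A_i(C',u_i^{t})$ below $A_i(\pi^{t}(i),u_i^{t})$. Concretely, the {\caf} $A_i(C,u_i)=\max\bigl(-5,\sum_{j\in C\setminus\{i\}}u_i(j)\bigr)$ satisfies both aversion to enemies and enemy monotonicity, yet is bounded from below; if $i$'s current coalition also has value $-5$, the SCS condition $A_i(C',u_i)\ge A_i(\pi(i),u_i)$ survives no matter how negative $u_i(k)$ becomes. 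Since the theorem is stated for \emph{all} {\cafs} satisfying ATE and EM, your route cannot establish it in this generality. Two smaller issues: (i) a recurring group deviation forming $C$ from $\pi^\star$ need not abandon anyone (it could merge whole coalitions of $\pi^\star$), so your agent $i\in\pi^\star(k)\setminus C$ need not exist for the deviation you picked --- you would have to argue separately that some recurring deviation does abandon an agent; and (ii) the claim that $A_i(\pi^{t-1}(i),u_i^{t-1})$ is unaffected by $u_i(k)$ for $k\notin\pi^{t-1}(i)$ is not implied by any of the stated axioms.

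The paper avoids the divergence argument entirely. It first observes that the \emph{non-negative} entries $u_i(j)$ can be decremented only a bounded number of times in total, so in an infinite execution there is a time after which every agent is abandoned only by agents she currently values negatively; by ATE every abandoned agent then weakly improves, so every SCS deviation from that point on is a Pareto improvement. It then rules out infinitely many Pareto improvements with the integer potential $V^t=\sum_{i}|\{C\in\mathcal N_i\colon u_i^t(C)>u_i^t(\pi^t)\}|$, which is non-negative and, by ATE and EM, strictly decreases at every step. Note that this argument never needs any coalition value to tend to $-\infty$, which is exactly why it works for {\cafs} (like the capped sum above) on which your approach breaks down. If you want to salvage your line of reasoning, you would need an additional axiom forcing $A_i(C,\cdot)$ to become strictly worse than the alternatives as $u_i(k)\to-\infty$ (something in the spirit of enemy domination), which is a strictly stronger hypothesis than the theorem assumes.
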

\begin{proof}
	It suffices to consider SCS dynamics because every CS dynamics and every IS dynamics is also an SCS dynamics.
	
	Let a hedonic game $(N,u^0)$ with resentful agents be given 
	where every agent $i\in N$ has a {\caf}~$A_i$ that satisfies
	aversion to enemies and enemy monotonicity.
	The key insight to show convergence of the SCS dynamics is to prove that in every infinite sequence of SCS deviations, 
	it happens infinitely often that
	the \emph{non-negative} single-agent valuation of some agent for another agent is decreased (due to resent). This is a contradiction, as the number of such deviations is bounded by $\sum_{i,j\in N\colon u_i^0(j)\ge 0} (\lfloor u_i^0(j)\rfloor + 1)$.
	
	Consider an infinite execution $(\pi^t)_{t\ge 0}$ of the SCS dynamics. 
	For the sake of contradiction, assume that there is a step $t_0\ge 0$ in this execution such that, starting with $t_0$, it never happens again that the non-negative valuation of some agent for another agent is decreased. The first step towards a contradiction is to show that every SCS deviation is a Pareto improvement in this situation.\footnote{An outcome is a \emph{Pareto improvement} over another outcome if it is weakly better for all agents and strictly better for some agents.} Indeed, every deviation is weakly improving for every agent in the coalitions that are abandoned: As no agent decreases her non-negative utility for another agent, each abandoned agent is solely abandoned by agents for which she momentarily has a negative utility. 
	We can then iteratively apply aversion to enemies to conclude that the deviation (weakly) increases 
	the utility of each abandoned agent.
	Further, since by the definition of an SCS deviation, every agent of the newly formed coalition is weakly improving, and some of them are strictly improving, each SCS deviation is a Pareto improvement. 
	
	Still, it is not clear, why the changes of the utility functions due to resent cannot cause sequences of Pareto improvements of infinite length, and we need enemy monotonicity to prove that this can never be the case. 
	To this end, we will show that every agent~$i$ can deviate at most once to each coalition $C\in \mathcal N_i$ while \emph{improving her utility} by doing so. This provides an upper bound for the length of any sequence of Pareto improving deviations.
	
	Consider a time $t \ge t_0$ and define for some agent $i \in N$ the set $\mathcal C_i^t = \{C\in \mathcal N_i \colon u_i^t(C) > u_i^t(\pi^t)\}$, i.e., the set of coalitions that would lead to an improvement of agent $i$ at time $t$. Consider the potential function $V^t = \sum_{i\in N} |\mathcal C_i^t|$. Note that $V^{t_0}\le \sum_{i\in N}|\mathcal N_i|$ is initially bounded and that the potential attains only non-negative integer values. We will show that the potential decreases in every time step after $t_0$.
	
	To this end, let $i\in N$ be some fixed agent and $t\ge t_0$ some fixed time. If $\pi^{t+1}(i) = \pi^t(i)$, then agent $i$ does not change her utilities, and $\mathcal C_i^{t+1} = \mathcal C_i^t$. 
	Further, if $\pi^{t+1}(i) \neq \pi^t(i)$ but agent~$i$ was not part of the deviating coalition, then, by our above assumption that no non-negative utility gets decreased, $u_i^t(j) < 0$ for all agents $j\in \pi^t(i)\setminus \pi^{t+1}(i)$. Hence, we can repeatedly apply aversion to enemies for all agents in $\pi^t(i)\setminus \pi^{t+1}(i)$ to conclude that $u_i^{t+1}(\pi^{t+1}) = u_i^{t}(\pi^{t+1}) \ge u_i^t(\pi^t)$.
	Further, as only $i$'s valuations of agents from $\pi^t(i)\setminus \pi^{t+1}(i)$ got modified (decreased), the only coalitions $C\in \mathcal N_i$ that might have changed their value for agent $i$ contain some agent in $\pi^t(i)\setminus \pi^{t+1}(i)$. Hence, we can (repeatedly) apply enemy monotonicity to conclude that $u_i^{t+1}(C)\le u_i^t(C)$ for all such coalitions.
	Hence, $\mathcal C_i^{t+1} \subseteq \mathcal C_i^t$. 
	
	Finally, if agent $i$ is part of the deviating coalition at step $t$, then she does not change her utilities for any agent and weakly improves the valuation of her coalition. Hence, $\mathcal C_i^{t+1} \subseteq \mathcal C_i^t$ and we can conclude that $V^{t+1}\le V^t$. 
	Further, if $i$ strictly improves her utility, then $u_i^{t+1}(\pi^{t+1}) > u_i^{t+1}(\pi^t)$ and it follows that $\pi^{t+1}(i)\in \mathcal C_i^t \setminus \mathcal C_i^{t+1}$. Hence, $\mathcal C_i^{t+1} \subsetneq \mathcal C_i^t$. Since this has to be the case for at least one agent in every time step, we conclude that $V^{t+1} < V^t$.
\end{proof}

As the cardinal aggregation function $\mathit{AS}$ satisfies aversion to enemies and enemy monotonicity (cf. \Cref{prop:axioms}), \Cref{thm:resent:SC-converges} in particular implies that the  SCS, CS, and IS dynamics always converge in ASHGs for resentful agents.

Notably, \Cref{cor:resent-convergence} breaks down if we consider a {\caf} violating aversion to enemies, even if enemy monotonicity is still satisfied. 
Indeed, we can then ``ignore'' individual utilities. 
For instance, anonymous hedonic games where agents only care about the size of their coalitions satisfy enemy monotonicity.
In such games, resent is clearly irrelevant and there exist anonymous hedonic games where IS dynamics cycle \citep{BBW21a}.
Consequently, a result similar to \Cref{thm:resent:SC-converges} for aggregation functions that only satisfy enemy monotonicity cannot be obtained.
On the other hand, it remains an open question whether enemy monotonicity is necessary for \Cref{thm:resent:SC-converges}. 

Unfortunately, $\mathit{MF}$ violates aversion to enemies (\Cref{prop:axioms}), 
implying that \Cref{cor:resent-convergence} cannot be directly applied to MFHGs for resentful agents.
Nevertheless, if we require the performed SCS deviations to be individually rational, then we can achieve convergence for a class of games containing MFHGs (cf. \Cref{prop:axioms}).

\begin{restatable}{theorem}{resentscenemy}
	\label{thm:resent:SC-can-converge}\label{cor:resent-can-converge}
	The individually rational SCS, CS, and IS dynamics converge for resentful agents whose {\cafs} satisfy
	individually rational aversion to enemies and enemy monotonicity.
\end{restatable}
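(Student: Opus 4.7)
The plan is to follow the structure of Theorem~\ref{thm:resent:SC-converges}, replacing the ATE-based arguments by IR-sensitive versions that rely on IR ATE together with the IR deviation assumption. It suffices to treat the IR SCS dynamics: an IR CS deviation is a fortiori an IR SCS deviation, and the IR IS case is handled by the same argument after viewing each single-agent move as a group deviation whose remaining members weakly improve (and hence remain in an IR coalition if they started in one). Assume for contradiction that $(\pi^t)_{t \ge 0}$ is an infinite execution. Since every resentful utility decrement removes one unit from a finite pool of non-negative utility, there is a time $t_0$ after which no non-negative utility is ever decreased again; from $t_0$ onwards every abandoning agent is therefore an enemy of every agent she abandons.

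The key new ingredient is a \emph{content invariant}: for every agent $i$ and every $t \ge t_0$, if $\pi^t(i)$ is IR for $i$ at time $t$, then $\pi^{t+1}(i)$ is IR for $i$ at time $t+1$. For the deviator this is the IR deviation hypothesis (and under IS, the joined-coalition members are handled by weak improvement). For an abandoned agent $i$, the set $D$ of abandoners consists only of enemies of $i$, so iteratively applying IR ATE along the chain $\pi^t(i),\, \pi^t(i)\setminus\{j_1\},\, \ldots,\, \pi^{t+1}(i) = \pi^t(i)\setminus D$ keeps every intermediate coalition IR and weakly improves its value; since $u_i^{t+1}$ agrees with $u_i^t$ on $\pi^{t+1}(i)$ and on $\{i\}$, the coalition $\pi^{t+1}(i)$ remains IR at time $t+1$. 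Because the number of agents is finite, the set of content agents (those in IR coalitions) is monotonically non-decreasing after $t_0$ and therefore stabilizes from some time $t_1 \ge t_0$ onwards.

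For the contradiction I use the potential
\[
V^t = \sum_{i \in N}\Bigl|\{C \in \mathcal N_i : u_i^t(C) > \max(u_i^t(\pi^t(i)),\, A_i(\{i\}, u_i^t))\}\Bigr|,
\]
which takes non-negative integer values bounded by $n\cdot 2^{n-1}$. A case analysis on the role of $i$ at step $t$ and on her content status shows that $V^{t+1} \le V^t$ for every $t \ge t_0$: for a content abandoned agent the threshold weakly increases by IR ATE; for a discontent abandoned agent, EM ensures that all coalition values weakly decrease while the singleton-value threshold is fixed; and for the deviator, IR of the new coalition together with the SCS improvement again yields a weak threshold increase. Once $t \ge t_1$, the strictly improving deviator $i^*$ is content at both times, and the new coalition $C^* = \pi^{t+1}(i^*)$ contributes one unit to $V^t$ (since $u_{i^*}^t(C^*) > u_{i^*}^t(\pi^t(i^*))$) but not to $V^{t+1}$ (since at $t+1$ the threshold for $i^*$ equals $u_{i^*}^{t+1}(C^*)$), hence $V^{t+1} < V^t$, contradicting boundedness.

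The main obstacle is handling agents whose current coalition is not IR: the previous proof uses ATE to certify that abandonment is always a weak improvement and hence that every SCS deviation is Pareto improving, and this breaks for discontent agents under the weaker IR ATE. Taking the threshold to be the maximum of the current-coalition value and the singleton value lets discontent agents be measured against the fixed singleton benchmark while content agents are measured against their (monotonically improving) current value, and the content-stabilization argument rules out the delicate degenerate case in which a discontent deviator lands on the IR boundary and fails to strictly shrink her set of improving coalitions.
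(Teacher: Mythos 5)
Your proof is correct, but it takes a genuinely different route from the paper's. The paper does not adapt the potential argument of \Cref{thm:resent:SC-converges}; instead it invokes \Cref{lem:infoccurence} to find a time $t_1$ after which every deviation recurs infinitely often, fixes the first coalition $C$ formed after $t_1$, shows (via the same content/IR-preservation observation you prove as your ``content invariant'') that each member's utility is non-decreasing and her coalition stays IR between two consecutive formations of $C$, and then uses enemy monotonicity to conclude that nobody can strictly improve when $C$ is formed the second time --- contradicting the SCS requirement. Your argument replaces this repetition-based contradiction with a monotone potential $V^t$ whose threshold is floored at the singleton value, plus a content-set stabilization step to dispose of the degenerate case where a discontent deviator lands exactly on the IR boundary (a case the paper's recurrence argument never has to confront, since it only ever reasons about a coalition that is actually re-formed). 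Your approach buys a quantitative bound of at most $n\cdot 2^{n-1}$ steps after $t_1$ and unifies the two convergence theorems under one potential skeleton; the paper's approach is shorter and sidesteps the boundary case entirely. One remark: both your proof and the paper's implicitly assume that $A_i(C,u_i)$ is unaffected by changes to $u_i(j)$ for $j\notin C$ (in particular that $A_i(\{i\},u_i)$ is unchanged by resent), so this is not a gap relative to the paper's own standard, but it is worth stating if you write the argument up for general {\cafs}.
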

\begin{proof}
	Again, it suffices to consider SCS dynamics. 
	
	Let a hedonic game $(N,u^0)$ with resentful agents be given
	where every agent $i\in N$ has a {\caf}~$A_i$ that satisfies individually rational aversion to enemies and enemy monotonicity. Assume that agents perform only individually rational deviations.
	Assume for contradiction that there exists an infinite execution of the individually rational SCS dynamics. Since the single-agent utilities for other agents can only decrease and are bounded by the initial partition of the sequence, there exists a time $t_0\ge 0$ in the infinite dynamics after which no agent can be left by an agent for which she has non-negative utility. 
	By \Cref{lem:infoccurence}, there exists a time $t_1\ge t_0$ such that every SCS deviation performed after time $t_1$ is performed infinitely often. 
	
	Now, consider the first group deviation performed after time $t_1$, where a coalition $C$ is formed and let $i\in C$ be an arbitrary agent in this coalition. Note that the formation of $C$ was an individually rational deviation, and therefore $C$ is individually rational for agent $i$ at time $t_1$. 
	We claim that $i$'s utility cannot decrease until the next time when the same deviation is performed, and that all coalitions that $i$ is part of until then are IR. Until the next repetition of the same deviation, there are two potential cases when $i$'s utility is affected. First, it can happen that $i$ is part of a deviating coalition. 
	Clearly, this cannot decrease her utility, and therefore not affect her individual rationality. Second, it can happen that $i$'s coalition is left by a set of agents $D$. 
	Since~$i$ is resentful and decreases her utility for all agents in $D$, it must be the case that her utility is negative for every agent in $D$ at the point in time where the deviation involving $D$ occurs. 
	
	Moreover, $i$'s coalition is IR when $D$ leaves.
	Indeed, assume for contradiction that $i$'s coalition is not IR, and consider the first time after the formation of $C$ when~$i$ is in a coalition that is not IR. By our analysis before, this can only happen after $i$ performed a group deviation, but since this deviation originated from an IR coalition, the resulting coalition must also be IR, a contradiction. 
	
	Now, by applying individually rational aversion to enemies for each agent in $D$ one after another, we can conclude that $i$ cannot have decreased her utility at the point in time where she is left by $D$. 
	
	Hence, at the next time, when coalition $C$ is formed, by the same deviation as our initial deviation, none of the agents in $C$ has decreased her utility. 
	Moreover, enemy monotonicity implies that the utility of $C$ of any agent for $C$
	can only have decreased since the last time when $C$ was formed. 
	Hence, no agent can strictly improve her utility when forming $C$ again, a contradiction. 
	We conclude that the dynamics cannot run infinitely. 
\end{proof}

It remains open whether general SCS, CS, or IS dynamics for resentful agents  may cycle in an MFHG.

\subsection{Contractual Nash Stability and Nash Stability}

For individually rational NS dynamics, 
resent helps to establish convergence for a wide class of games.
\begin{restatable}{theorem}{resentNSIR}
	\label{thm:resent:NS-IR-converges} \label{cor:resent-special-classes}
	The individually rational NS dynamics converges for resentful agents whose {\cafs} satisfy enemy domination.
\end{restatable}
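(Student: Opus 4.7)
The proof will proceed by contradiction: assume there is an infinite execution $(\pi^t)_{t\geq 0}$ of the individually rational NS dynamics with resentful agents whose \cafs{} satisfy enemy domination. Every deviation is a single-agent move, and under resent each such move by an agent $k$ strictly decreases $u_i^t(k)$ by one for every non-deviating $i$ in $k$'s old coalition. Because infinitely many deviations force infinitely many such resent decrements and there are only finitely many ordered pairs $(i,k)$, the pigeonhole principle yields a pair $(i^*,k^*)$ for which $u_{i^*}^t(k^*)\to -\infty$; equivalently, $k^*$ leaves $i^*$'s coalition infinitely often.

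The first key application of enemy domination is to $i^*$. Because $u_{i^*}^t(k^*)$ eventually drops below the ED constant $c(u_{i^*}^0,k^*)$ while every other coordinate of $u_{i^*}^t$ stays bounded above by its initial value (utilities only decrease), there is a time $t^*$ after which no coalition containing $k^*$ is individually rational for $i^*$. Consequently, $i^*$ cannot perform any IR NS deviation into a coalition with $k^*$ after $t^*$. Since a single-agent deviation never merges two disjoint coalitions, the only way for $k^*$ and $i^*$ to be reunited after $t^*$ is for $k^*$ itself to perform an IR NS deviation into a coalition containing $i^*$. By \Cref{lem:infoccurence}, one specific such deviation---$k^*$ joining a fixed coalition $C\ni i^*$---must repeat infinitely often.

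A second, reverse application of enemy domination, now to $k^*$, determines what these repeated rejoinings impose. The IR condition $A_{k^*}(C\cup\{k^*\},u_{k^*}^t)\ge A_{k^*}(\{k^*\},u_{k^*}^t)$ must hold at all of these infinitely many times. If $u_{k^*}^t(j)\to -\infty$ for some $j\in C$, then ED applied to $k^*$ and $j$ would render $C\cup\{k^*\}$ not IR for $k^*$ for all sufficiently large $t$, a contradiction. Hence $u_{k^*}^t(j)$ stabilizes for every $j\in C$, and the same argument applied to every other coalition that $k^*$ visits via a recurring deviation (again extracted from \Cref{lem:infoccurence}, including the old coalitions that $k^*$ keeps deviating from, which must also be IR in the target sense via the strict-improvement inequality) shows that $u_{k^*}^t(j)$ stabilizes for every agent $j$ appearing in any coalition visited by $k^*$ infinitely often.

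The intended finish is a finite-state contradiction. Once $u_{k^*}^t$ is eventually constant on all relevant agents, $k^*$'s utility for each repeatedly visited coalition is a fixed number, yet every one of $k^*$'s deviations is a strict improvement in $u_{k^*}$. Since $k^*$'s deviations must ultimately close into a cycle of coalitions (the partition space $\Pi_N$ is finite and $k^*$ deviates infinitely often), the strict-improvement inequalities around the cycle yield $u_{k^*}(C_0)<u_{k^*}(C_1)<\dots<u_{k^*}(C_{r-1})<u_{k^*}(C_0)$, a contradiction. The hard part is handling that $k^*$'s coalition may still change between $k^*$'s own deviations via other agents joining or leaving $k^*$'s coalition: leaves of $k^*$'s coalition are precisely what would make $u_{k^*}$ not stabilize, but the reverse-ED analysis rules these out for agents in $k^*$'s recurring coalitions, while any remaining changes driven by joiners can only grow $k^*$'s coalition monotonically until it settles at its final composition, at which point the clean cycle argument applies.
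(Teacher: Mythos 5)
Your overall strategy is essentially the one the paper uses: combine \Cref{lem:infoccurence} with enemy domination to show that recurring deviations pin down certain utilities, and then derive a contradiction from a strictly increasing chain of coalition values. (The paper runs a global potential $\Lambda(t)=\sum_j|P_j(t)|$ over all agents rather than isolating a single pair $(i^*,k^*)$, but the engine is the same.) The first parts of your argument are sound: the pigeonhole extraction of a pair with $u_{i^*}^t(k^*)\to-\infty$, the application of enemy domination to $i^*$ forcing $k^*$ to be the one who rejoins, and the ``reverse'' application of enemy domination to $k^*$ showing that no member of a coalition that $k^*$ recurringly deviates into can abandon $k^*$ infinitely often.

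The gap is in your treatment of joiners in the final step. The claim that agents joining $k^*$'s coalition between $k^*$'s own deviations ``can only grow $k^*$'s coalition monotonically until it settles at its final composition'' is false: such an agent $m$ may leave again (this is an abandonment of $k^*$ by $m$, which your reverse-ED analysis only excludes for members of the coalitions $k^*$ deviates \emph{into}, not for arbitrary joiners, since $u_{k^*}(m)\to-\infty$ is harmless if $k^*$ never needs an IR coalition containing $m$). Moreover, even if the coalition did settle at some $D\supsetneq C_m$, the chain of strict inequalities would not close: enemy domination gives no control over $u_{k^*}(D)$ versus $u_{k^*}(C_m)$ when extra (possibly enemy) members are present, so $u_{k^*}(C_{m+1})>u_{k^*}(D)$ does not yield $u_{k^*}(C_{m+1})>u_{k^*}(C_m)$. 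The missing observation --- which is exactly how the paper closes this step --- is to apply the same enemy-domination argument to the \emph{joiner}: since $m$'s deviation into a coalition containing $k^*$ recurs infinitely often after $t_0$, $k^*$ can never abandon $m$ afterwards (otherwise $u_m(k^*)\to-\infty$ and $m$'s recurring deviation would eventually fail individual rationality). Hence $m$ must have departed before $k^*$'s next deviation, so at each of $k^*$'s own deviations after $t_0$ the abandoned coalition is \emph{exactly} the coalition $k^*$ last joined, with unchanged utilities for its members; the strictly increasing sequence of values over a finite set of coalitions then gives the contradiction. With that one repair your single-agent version of the argument goes through. (The parenthetical claim that the abandoned coalitions ``must also be IR'' is incorrect but harmless, as the argument never needs it.)
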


\begin{proof}
	Let a hedonic game $(N,u^0)$ with resentful agents be given
	where every agent $i\in N$ has a {\caf}~$A_i$ that satisfies enemy domination. Assume for contradiction that there is an infinite execution of the individually rational NS dynamics $(\pi^t)_{t\ge 0}$. Suppose that, for every $t\ge 1$, $\pi^t$ evolves from $\pi^{t-1}$ by an individually rational NS deviation of agent~$d^t$.
	
	By \Cref{lem:infoccurence}, there exists $t_0\ge 0$ such that every deviation performed after $t_0$ is performed infinitely often. We will reach a contradiction in two steps. First, we use enemy domination to show that no agent can ever be abandoned by an agent that she joined after $t_0$. Then, as a second step, we use this insight to show the existence of a non-negative potential function decreasing in every time step after $t_0$.

	For the first step, let $i\in N$ be an agent and let $C\in \mathcal N_i$ be a coalition such that agent~$i$ performs a deviation at time $t_1\ge t_0$ to form coalition $C$. 
	Then, no agent from $C\setminus \{i\}$ can abandon agent~$i$ in any of their deviations after time~$t_0$. 
	Assume for contradiction that there exists an agent $j\in C\setminus \{i\}$ who abandons agent~$i$ after time $t_0$. 
	Since the aggregation function satisfies enemy domination, there is a constant $c(u_i,j)$ such that for all utility vectors $u'_i$ with $u'_i(k) \le u^{t_1}_i(k)$ for all $k\in N$ and $u'_i(j)\le c(u_i,j)$, it holds that $u_i'(C) < u_i'(\{i\})$.
	
	Since every deviation occurs infinitely often, there exists a time $t_2\ge t_1$ such that agent~$j$ has abandoned agent~$i$ for at least $u^{t_1}_i(j) - c(u_i,j)$ times between time $t_1$ and time $t_2$. 
	Since agent~$i$ is resentful, this implies that $u^{t_2}_i(j) \le u^{t_1}_i(j) - (u^{t_1}_i(j) - c(u_i,j)) = c(u_i,j)$. Additionally, resentful agents can only decrease utilities for other agents. 
	Therefore, it holds for all $t\ge t_2$ and $k\in N$ that $u^t_i(k) \le u^{t_1}_i(k)$ and $u^t_i(j)\le c(u_i,j)$. 
	Consequently, it follows from enemy domination for all times $t\ge t_2$ that agent~$i$ cannot deviate to form coalition~$C$ again because this deviation would not be individually rational. 
	However, this contradicts the fact that every deviation after time~$t_1$ has to be performed infinitely often. This establishes the second step, i.e., that an agent can never be left by an agent that is part of a coalition which she joins after time $t_0$. 
	
	For the second step, we will now define a potential function that is bounded from below by $0$, integer-valued, and strictly decreasing in every step $t\ge t_1$. 
	Therefore, fix an agent~$i\in N$ and define $A_i = \{j\in N\colon j\textnormal{ abandons } i \textnormal{ after time }t_0\}$.
	Given $t\ge t_0$, we have to distinguish two cases. If $i$ already performed a deviation after time $t_0$, then let $d_i(t) = \max\{t_0\le t'\le t\colon d^{t'} = i\}$ be the last time that $i$ performed a deviation between $t_0$ and $t$. 
	In this case, define $P_i(t) = \{C\subseteq N\setminus A_i \colon i \in C, u_i^t(C) > u_i^t(\pi^{d_i(t)})\}$. 
	Otherwise, set $P_i(t) = 2^{N\setminus A_i}$. Note that $P_i(t)$ contains all coalitions which $i$ could \emph{potentially} form through a deviation after time $t$. Define the potential $\Lambda(t) = \sum_{j\in N}|P_j(t)|$.
	
	We observe that $P_i(t+1) = P_i(t)$ if $i\neq d^{t+1}$. 
	Assume now that $i = d^{t+1}$. 
	If the first deviation of $i$ occurs at time~$t$, then $P_i(t+1)\subseteq P_i(t)\setminus\{\pi^t(i)\}$.
	Otherwise, it must hold that $\pi^t(i) = \pi^{d_i(t)}(i)$, i.e., $i$ is part of the same coalition in step $t$ to which it deviated in step $d_i(t)$, which is the last deviation performed by $i$. 
	Indeed, due to the second step, no agent that was present when $i$ joined $\pi^{d_i(t)}(i)\setminus\{i\}$ can have left and~$i$ is not allowed to leave if any agent from $N\setminus \pi^{d_i(t)}(i)$ is still present. 
	Also, agent $i$'s utilities for agents in $N\setminus A_i$ have not changed since her last deviation and therefore $u^t_i(j) = u^{d_i(t)}(j)$ for all $j\in N\setminus A_i$. Thus, $i$ derives the same utility from $\pi^t(i)$ and $\pi^{d_i(t)}(i)$. 
	It follows that $P_i(t+1)\subseteq P_i(t)\setminus\{\pi^t(i)\}$. 
	Consequently, in each case, $\Lambda(t + 1) < \Lambda(t)$. As $\Lambda(t)\ge 0$ for all $t\ge t_1$, the dynamics can run for at most $\Lambda(t_1)$ 
	steps after time $t_1$. This completes the proof.	
\end{proof}

As $\mathit{AS}$ and $\mathit{MF}$ satisfy enemy domination (\Cref{prop:axioms}), \Cref{thm:resent:NS-IR-converges} implies that the individually rational NS dynamics converges in ASHGs and MFHGs for resentful agents. 
However, we do not know under which conditions resent is sufficient to guarantee convergence for arbitrary (not necessarily individually rational) NS dynamics.
In this case, our proof for \Cref{thm:resent:NS-IR-converges} no longer works because it is possible that agents join coalitions for which they have an arbitrarily low utility (if the utility for their abandoned coalition was even worse).
In fact, slightly counterintuitive, there is a non-trivial example of a cycling NS dynamics in an MFHG for resentful agents.
\begin{restatable}{theorem}{mfhgresent}
	\label{th:mfhg-resent}
	The NS dynamics may cycle in MFHGs for resentful agents.
\end{restatable}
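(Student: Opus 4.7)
The plan is to exhibit an explicit MFHG instance with an initial partition and utility profile for which \emph{some} infinite sequence of NS deviations can be played by a suitable scheduler; since cycling only requires the existence of one infinite execution, we need not rule out alternative deviations. In view of \Cref{thm:resent:NS-IR-converges}, the construction must essentially use non-IR deviations: at some step, an agent deviates to a coalition whose aggregated utility is negative simply because her current coalition has become even worse.

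The guiding intuition is that MFHGs use averages. A single resent-induced decrement of $u_i(j)$ shifts the $\mathit{MF}$-value of any coalition $C$ containing $j$ by $-\tfrac{1}{|C|-1}$, so by carefully choosing coalition sizes one can arrange that $i$'s current coalition decreases in value at least as fast as any coalition she might move to. With this calibration, deviations that are strictly improving in the initial utility profile remain strictly improving as utilities decrease over time, opening the door to a persistent pattern of moves.

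My approach would be to search among small instances (four to six agents) for a periodic pattern of single-agent deviations of some period $p$, described by a sequence of deviators $i_1, \ldots, i_p$ and target coalitions $C_1, \ldots, C_p$. Verification then proceeds by induction on the cycle index $k \in \mathbb{N}_0$: the base case checks directly from the initial utilities that each deviation $i_\ell \mapsto C_\ell$ strictly improves $i_\ell$'s MFHG value, while the inductive step tracks which entries $u_i(j)$ were decremented during one complete cycle and argues that the net effect on each relevant inequality is nonnegative, so that the same deviations remain NS in the next cycle. The sequence $(\pi^t)_{t \ge 0}$ is then periodic in the partitions, though of course not in the utilities.

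The main obstacle is designing the initial utilities so that the fractional arithmetic of $\mathit{MF}$ interacts cleanly with the integer resent decrements. One must simultaneously ensure that every deviation in the pattern is strictly improving for the deviator even once utilities have fallen, and that each step of the pattern actually triggers a resent update on some inequality that appears later in the cycle, so that the bookkeeping closes up. Because of the asymmetry between being left (which triggers resent) and joining (which does not), this calibration is delicate, and the construction is likely to rely on coalitions of different sizes so that the $\tfrac{1}{|C|-1}$ denominators align favorably—precisely the kind of arithmetic fine-tuning that makes the example non-trivial, as the authors forewarn.
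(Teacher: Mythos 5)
Your proposal correctly identifies the shape of the argument --- the theorem is an existence claim, so the proof must exhibit a concrete MFHG, an initial partition, and an infinite (periodic-in-partitions) sequence of NS deviations, and the verification must show that each deviation remains strictly improving as resent erodes the utilities. You also correctly anticipate the key invariant the paper uses, namely that the deviator's preference between the joined and the abandoned coalition is \emph{maintained} across cycles. However, as written the proposal contains no actual instance: ``search among small instances for a periodic pattern'' is a plan for finding a proof, not a proof. Since the entire mathematical content of this theorem is the counterexample itself, the gap is not a technicality --- without the explicit utilities and deviation schedule there is nothing to verify, and the statement remains unproven.

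Two further points of comparison with the paper's construction (six agents, an $18$-step cycle, utilities of the form ``constant $-x$'' after $x$ cycles). First, your concern about the $\tfrac{1}{|C|-1}$ denominators ``aligning favorably'' turns out to be a non-issue once the right design principle is adopted: the paper arranges that each agent leaves every other agent exactly once per cycle, so after a full cycle \emph{every} entry $u_i(j)$ has decreased by exactly $1$. Since $\mathit{MF}_i(C, u_i - \mathbf{1}) = \mathit{MF}_i(C,u_i) - 1$ for every coalition $C$ regardless of its size, all pairwise comparisons between coalitions are exactly preserved, and the inductive step is immediate rather than delicate. Your framing, in which the current coalition must decay ``at least as fast'' as the target, would force you into genuinely fragile inequalities; the uniform-decrement trick sidesteps this entirely. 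Second, your observation that the example must use non-IR deviations (in light of the convergence result for individually rational NS dynamics) is a correct and useful sanity check on any candidate construction. To turn the proposal into a proof you would need to supply the instance and carry out the (finite) verification of the base case; the inductive step then follows from the uniformity invariant.
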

\begin{proof}
	We now describe an involved example of an MFHG together with an infinite periodic sequence of NS deviations for resentful agents. 
	We construct this example in a way such that each agent leaves every other agent exactly once in each cycle. 
	This establishes that the agents' preference between relevant coalitions is \emph{maintained}: if a deviating agent prefers a joined coalition $C_1$ to an abandoned coalition $C_2$
	before (and during) the \emph{first} execution of the cycle, then it still prefers $C_1$ to $C_2$ before (and during) \emph{each} execution of the cycle.

	\begin{table}
		\caption{Example from \Cref{th:mfhg-resent}. Utilities after $x$ cycles of deviations. Each row depicts the utility that one agent has for the other agents.} \label{table:mfhg-resent}
		\centering
		\begin{tabular}{ c|c|c|c|c|c|c } 
			& $a$ & $a'$ & $b$ & $b'$ & $c$ & $c'$ \\ \hline
			$a$ & $-$ & $20-x$ & $10-x$ & $230-x$ & $0-x$ & $230-x$ \\ \hline
			$a'$ & $110-x$ & $-$ & $30-x$ & $120-x$ & $30-x$ & $100-x$ \\ \hline
			$b$ & $0-x$ & $230-x$ & $-$ & $20-x$ & $10-x$ & $230-x$ \\ \hline
			$b'$ & $30-x$ & $100-x$ & $110-x$ & $-$ & $30-x$ & $120-x$ \\ \hline
			$c$ & $10-x$ & $230-x$ & $0-x$ & $230-x$ & $-$ & $20-x$ \\ \hline
			$c'$ & $30-x$ & $120-x$ & $30-x$ & $100-x$ & $110-x$ & $-$
		\end{tabular}
	\end{table}
	
	Consider the game with agent set $N = \{a,a',b,b',c,c'\}$ and utilities as depicted in \Cref{table:mfhg-resent}. The initial utilities result from setting $x = 0$ in \Cref{table:mfhg-resent}. 
	Note that the utility values are not chosen to be minimal but simply in a way that it can easily be verified that deviations are indeed NS deviations.
	We now present an infinite sequence $(\pi^t)_{t\ge 0}$ of partitions, always consisting of three coalitions. 
	For each partition, we refer to the first listed coalition as $C_1$, to the second as $C_2$, and the third as $C_3$. 
	For the sake of clarity, for each partition, we also specify which agent deviates to which coalition in the next step. 
	Specifically, for $n\geq 0$, we have
	\begin{itemize}
		\item $\pi^{18n+1}=\{\{b',a'\},\{a\},\{c',c,b\}\}$ with agent $b$ deviating to $C_1$, 
		\item $\pi^{18n+2}=\{\{b',a',b\},\{a\},\{c',c\}\}$ with agent $c$ deviating to $C_1$, 
		\item $\pi^{18n+3}=\{\{b',a',b,c\},\{a\},\{c'\}\}$ with agent $a'$ deviating to $C_3$, 
		\item $\pi^{18n+4}=\{\{b',b,c\},\{a\},\{c',a'\}\}$ with agent $a'$ deviating to $C_2$, 
		\item $\pi^{18n+5}=\{\{b',b,c\},\{a,a'\},\{c'\}\}$ with agent $c$ deviating to $C_2$,
		\item $\pi^{18n+6}=\{\{b',b\},\{a,a',c\},\{c'\}\}$ with agent $b'$ deviating to $C_3$,
		\item $\pi^{18n+7}=\{\{b\},\{a,a',c\},\{c',b'\}\}$ with agent $c$ deviating to $C_3$,
		\item $\pi^{18n+8}=\{\{b\},\{a,a'\},\{c',b',c\}\}$ with agent $a$ deviating to $C_3$,
		\item $\pi^{18n+9}=\{\{b\},\{a'\},\{c',b',c,a\}\}$ with agent $b'$ deviating to $C_2$,
		\item $\pi^{18n+10}=\{\{b\},\{a',b'\},\{c',c,a\}\}$ with agent $b'$ deviating to $C_1$,
		\item $\pi^{18n+11}=\{\{b,b'\},\{a'\},\{c',c,a\}\}$ with agent $a$ deviating to $C_1$,
		\item $\pi^{18n+12}=\{\{b,b',a\},\{a'\},\{c',c\}\}$ with agent $c'$ deviating to $C_2$,
		\item $\pi^{18n+13}=\{\{b,b',a\},\{a',c'\},\{c\}\}$ with agent $a$ deviating to $C_2$,
		\item $\pi^{18n+14}=\{\{b,b'\},\{a',c',a\},\{c\}\}$ with agent $b$ deviating to $C_2$,
		\item $\pi^{18n+15}=\{\{b'\},\{a',c',a,b\},\{c\}\}$ with agent $c'$ deviating to $C_1$,
		\item $\pi^{18n+16}=\{\{b',c'\},\{a',a,b\},\{c\}\}$ with agent $c'$ deviating to $C_3$,
		\item $\pi^{18n+17}=\{\{b'\},\{a',a,b\},\{c,c'\}\}$ with agent $b$ deviating to $C_3$, and
		\item $\pi^{18n+18}=\{\{b'\},\{a',a\},\{c,c',b\}\}$ with agent $a'$ deviating to $C_1$.
	\end{itemize}

	Then, it is possible to verify that for $k\ge 1$, $\pi^{k-1}$ leads to $\pi^k$ by means of an NS deviation. Hence, we have presented an MFHG with an infinite sequence of NS deviations for resentful agents. 
\end{proof}

This result indicates that some condition like aversion to enemies is probably needed for establishing a convergence guarantee for general NS dynamics; however, it remains open whether such a result is possible  (even for ASHGs). 
Notably, this question for {\cafs} satisfying aversion to enemies is the same as asking whether a CNS dynamics may cycle: For resentful agents in case of a cycling NS dynamics, there  is also a cycling CNS dynamics.
\begin{restatable}{proposition}{NSvCNS}
	\label{prop:resent-CNS-vs-NS}
	For resentful agents with {\cafs} satisfying aversion to enemies, every sequence of NS deviations contains only finitely many deviations that are not CNS deviations.
\end{restatable}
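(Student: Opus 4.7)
The plan is to identify a finite threshold $t_0$ after which every NS deviation automatically satisfies the CNS condition; this immediately bounds the number of non-CNS deviations in any sequence by $t_0$.

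The key observation is that, under resent, the single-agent utility $u_j(i)$ is monotonically non-increasing throughout the dynamics and strictly decreases by exactly one whenever $i$ leaves a coalition containing $j$. Consequently, for each ordered pair $(i,j)$ with $i \neq j$, the event ``$i$ leaves $j$ at time $t$ while $u_j^{t-1}(i) \ge 0$'' can occur at most $\lceil \max\{0, u_j^0(i)\}\rceil + 1$ times throughout the entire sequence. Define $t_0$ to be the maximum, over all ordered pairs, of the last time at which such an event occurs (or $0$ if no such event ever occurs). Then $t_0$ is finite, and by construction, for every $t > t_0$ and every pair $(i,j)$ such that $i$ leaves $j$ at time $t$, we have $u_j^{t-1}(i) < 0$.

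Now consider any NS deviation at time $t > t_0$, where agent $i$ is the deviator abandoning the coalition $C = \pi^{t-1}(i)$. For every $j \in C \setminus \{i\}$, agent $i$ leaves $j$ at time $t$, so $u_j^{t-1}(i) < 0$ by the choice of $t_0$. Since NS and CNS conditions are evaluated with respect to the pre-deviation utility vectors $(u_i^{t-1})_{i \in N}$, we may directly apply aversion to enemies with the utility vector $u_j^{t-1}$, the coalition $C \in \mathcal{N}_j$, and the enemy $i \in C \setminus \{j\}$ to conclude that $A_j(C, u_j^{t-1}) \le A_j(C \setminus \{i\}, u_j^{t-1})$. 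Because $\pi^{t-1}(j) = C$ and $\pi^t(j) = C \setminus \{i\}$, this is precisely the CNS condition at $j$. As this holds for every $j \in C \setminus \{i\}$, the deviation is CNS. Hence only deviations at times $1, \ldots, t_0$ can fail to be CNS, which is a finite number.

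The point requiring most care is the timing convention: deviations at time $t$ are evaluated against the utility vectors $u^{t-1}$, not the post-deviation vectors $u^t$, which is exactly what makes the argument go through cleanly, since our bound concerns $u_j^{t-1}(i)$. Beyond this bookkeeping, the proof is essentially a direct consequence of the integer-valued monotone decrease of $u_j(i)$ and aversion to enemies, so I do not foresee any genuine obstacle.
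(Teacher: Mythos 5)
Your proof is correct and follows essentially the same route as the paper's: bound, for each ordered pair, the number of times an agent can be abandoned by someone she values non-negatively (a finite budget determined by the initial utilities, since resent makes these values integer-step non-increasing), and then apply aversion to enemies to conclude that every later abandonment weakly benefits the abandoned agents, i.e., is a CNS deviation. The only difference is cosmetic: the paper routes the argument through \Cref{lem:infoccurence} to fix its threshold time, whereas you count the ``non-negative abandonment'' events directly, which is a slight streamlining of the same idea (and your explicit attention to evaluating deviations against the pre-deviation utilities $u^{t-1}$ is exactly the right bookkeeping).
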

\begin{proof}
	Let a hedonic game $(N,u^0)$ with resentful agents be given
	where every agent $i\in N$ has a {\caf}~$A_i$ that satisfies aversion to enemies. 
	Furthermore,
	assume that there exists an infinite sequence $(\pi^t)_{t\ge 0}$ of partitions resulting from NS deviations of resentful agents.
	By \Cref{lem:infoccurence}, there exists a time $t_0\ge 0$ such that every deviation performed after $t_0$ must occur infinitely often.
	
	Define $\mathcal L = \{(i,j)\in N^2 \colon i \textnormal{ left by } j \textnormal{ after time }t_0\}$, i.e., the set of pairs of such agents. 
	Let $(i,j)\in \mathcal L$. 
	Then, there exists a time $t(i,j)\ge t_0$ such that $i$ was left by $j$ for at least $\lfloor u^{t_0}_i(j)\rfloor + 1$ times after time $t_0$ and before time $t(i,j)$. 
	Consider the time $t_1 = \max\{t(i,j)\colon (i,j)\in\mathcal L\}$. We claim that all deviations after time $t_1$ are CNS deviations. Indeed, assume that agent~$i$ is left by agent~$j$ at time $t\ge t_1$. By construction, as $t(i,j)\le t_1$, it holds that $u^t_i(j) < 0$. 
	Consequently, enemy monotonicity implies that $A_i(\pi^t,u_i^t)\le A_i(\pi^{t+1},u_i^t)$. Hence, every deviation after time $t_1$ is a CNS deviation.
	Thus, there are only finitely many deviations
	(at most $t_1$ many) that are not CNS deviations.
\end{proof}

\section{Dynamics for Appreciative Agents} \label{sec:apprec}
We now turn to analyzing the effects of appreciation on the convergence of different types of dynamics. 
Here, as statements for general {\cafs} would require the introduction of (even) further axioms, we focus on $\mathit{AS}$ and $\mathit{MF}$ instead. 
We start by establishing a close connection between cycling dynamics for resentful and appreciative agents in ASHGs, highlighting a close connection between the two studied models.
Subsequently, we analyze CS and (C)NS dynamics.

\subsection{From Resent to Appreciation}
We describe how we can transform certain types of infinite sequences of deviations for resentful agents to sequences for appreciative agents and vice versa. 
We focus on ASHGs, yet believe that similar statements can hold for other classes of hedonic games. We start with Nash stability.

\begin{restatable}{theorem}{resentappreciation}
	\label{thm:NS-resent-vs-appreciation-ASHG}
	The following statements are equivalent:
	\begin{enumerate}
		\item There exists an ASHG admitting an infinite and periodic sequence of NS deviations for resentful agents.
		\item There exists an ASHG admitting an infinite and periodic sequence of NS deviations for appreciative agents.
	\end{enumerate}
\end{restatable}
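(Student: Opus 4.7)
The plan is to establish $(1) \Rightarrow (2)$; the converse is entirely symmetric, swapping the roles of resent and appreciation throughout. Given an ASHG $(N,u^0)$ admitting a periodic NS sequence $\pi^0 \to \pi^1 \to \cdots \to \pi^p = \pi^0$ for resentful agents with deviators $d^1, \ldots, d^p$ (WLOG immediately periodic, $t_0 = 0$), I build an ASHG $(N, v^0)$ on the same agent set and argue that the reversed partition sequence $\sigma^r := \pi^{p-r}$, with deviator $d^{p-r+1}$ at reverse step $r$, is a periodic NS sequence for appreciative agents.

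The core observation driving the construction is a duality of utility updates. At reverse appreciative step $r$, the deviator $d^{p-r+1}$ joins the coalition $\pi^{p-r}(d^{p-r+1})$, so appreciation increments $v_i(d^{p-r+1})$ by one for every $i \in \pi^{p-r}(d^{p-r+1}) \setminus \{d^{p-r+1}\}$. At forward resent step $p - r + 1$, the same deviator abandons the same coalition, so resent decrements $u_i(d^{p-r+1})$ by one for precisely the same set of $i$'s. The update events are thus in bijection with opposite signs, and the per-period reverse-join count $J^{\mathrm{rev}}_{j \to i}$ equals the per-period forward abandonment count $L_{j \to i}$. In particular, the apprec-periodicity balance condition $\sum_{j \in A'_t} L_{j \to d^t} \geq \sum_{j \in B_t} L_{j \to d^t}$, writing $A'_t, B_t$ for the old and new coalitions of $d^t$ minus $d^t$, coincides with the balance condition forced by periodicity of the forward resent sequence.

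I then set $v^0_i(j) := -u^0_i(j) + M \cdot L_{j \to i}$ for $i \neq j$, with $M \in \mathbb{N}$ to be tuned. Unfolding the apprec dynamics gives $v^{kp + r - 1}_i(j) = -u^0_i(j) + (M + k) L_{j \to i} + \ell^{[p-r+2,\,p]}_{j \to i}$, where $\ell^{[a,b]}_{j \to i}$ counts forward abandonments of $i$ by $j$ in the step range $[a,b]$. Applying the identity $\ell^{[p-r+2,\,p]}_{j \to d^t} = L_{j \to d^t} - \ell^{[1,\,t-1]}_{j \to d^t}$ (valid for $j \neq d^t$ with $t = p - r + 1$, since the deviation at step $t$ does not itself contribute to $\ell_{j \to d^t}$), the apprec NS condition at reverse step $r$ in cycle $k$ simplifies to
\[ X_t + Z_t > (M + k + 1) \cdot Y_t, \]
where $X_t = \sum_{j \in B_t} u^0_{d^t}(j) - \sum_{j \in A'_t} u^0_{d^t}(j)$, $Y_t = \sum_{j \in B_t} L_{j \to d^t} - \sum_{j \in A'_t} L_{j \to d^t} \leq 0$, and $Z_t = \sum_{j \in B_t} \ell^{[1,\,t-1]}_{j \to d^t} - \sum_{j \in A'_t} \ell^{[1,\,t-1]}_{j \to d^t}$ are the quantities governing the forward resent NS condition $X_t > k Y_t + Z_t$. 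For deviations with $Y_t < 0$, any sufficiently large $M$ drives the right-hand side below $X_t + Z_t$, so the inequality holds uniformly in $k$.

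The hard part will be the boundary case $Y_t = 0$, where the apprec condition reduces to $X_t + Z_t > 0$, a statement strictly stronger than the forward NS inequality $X_t > Z_t$ when $Z_t < 0$. I expect to resolve this via either a rigidity argument exploiting the tight balance $Y_t = 0$ to derive additional structural constraints that force $X_t + Z_t > 0$, or a small perturbation of the initial utilities $u^0$ that strictly breaks every zero balance into $Y_t < 0$ while preserving the periodic NS structure; either way, the symmetric construction gives $(2) \Rightarrow (1)$ by interchanging the roles of $L$ and $J$.
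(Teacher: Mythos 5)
Your construction is essentially the paper's: reverse one period of the resentful sequence, let the same agents deviate in reverse order, and negate the initial utilities (the paper takes exactly $\bar u^0_i(j)=-u^{t_0}_i(j)$, i.e.\ your $M=0$; your duality between reverse joins and forward abandonments and your balance condition $Y_t\le 0$ are precisely the paper's key inequality, which it derives, as you do, from the requirement that the forward NS margin $X_t-kY_t-Z_t$ stay positive for all $k$). The problem is that the proposal is not yet a proof: you correctly isolate the crux --- the reverse margin is $X_t+Z_t-(M+k+1)Y_t$ while the forward hypothesis only yields $X_t-Z_t>kY_t$ --- and then leave the case $Y_t=0$, $Z_t<0$ open. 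Neither suggested repair closes it. The perturbation idea cannot work as stated: $Y_t$ and $Z_t$ are purely combinatorial counts of abandonment events determined by the partition sequence, so perturbing $u^0$ changes only $X_t$ and can never turn $Y_t=0$ into $Y_t<0$; to change $Y_t$ you would have to change the deviation sequence itself. The rigidity idea is undeveloped, and it is unclear what consequence of $\sum_{j\in B_t}L_{j\to d^t}=\sum_{j\in A'_t}L_{j\to d^t}$ would force $X_t+Z_t>0$: one can write down locally consistent data with $Y_t=0$, $Z_t=-1$ and $X_t=0$ (say $A'_t=\{a\}$, $B_t=\{b\}$, each abandoning $d^t$ once per period, $a$ before step $t$ and $b$ after), so any such argument would have to exploit global properties of the cycle.

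For comparison, the paper does not introduce your offset $M$; it closes the argument by rewriting $\bar u^l_d(\sigma')-\bar u^l_d(\sigma)$ in terms of the forward utilities at time $t$ and applying the balance inequality a second time, landing on the bound $u^t_d(\pi^t)-u^t_d(\pi^{t-1})>0$, i.e.\ $X_t-Z_t>0$. Your more explicit bookkeeping shows that the quantity to be bounded is $X_t+Z_t-(k+1)Y_t$, which differs from that target by $2Z_t-(k+1)Y_t$; so the sign of the $Z_t$ term that you flag is exactly the delicate point of the entire argument, and it is the one step where your derivation and the paper's chain of estimates do not visibly agree. To complete your proof you must either redo the second application of the balance inequality carefully enough to show the $Z_t$ term enters with the favourable sign, or supply a genuinely new ingredient for the $Y_t=0$ case; as written, the reduction is honest but has a real hole precisely where you say it does.
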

The idea to prove \Cref{thm:NS-resent-vs-appreciation-ASHG} is to reverse a periodic fragment of an infinite sequence and to appropriately adjust the initial utilities. 
This essentially reverses the roles of resent and appreciation, as the agents that an agent $a$ leaves in the sequence for resentful agents correspond to the agents $a$ joins in the sequence for appreciative agents.
The formal proof is quite technical and we defer it to the appendix.

By \Cref{prop:resent-CNS-vs-NS},  \Cref{thm:NS-resent-vs-appreciation-ASHG} 
can be extended to also include infinite and periodic sequences of CNS deviations for resentful agents. 
In fact, the equivalence can be extended even further, 
as we show that in ASHGs with appreciative agents, 
the question whether there is a cycling NS dynamics is 
equivalent to asking for a cycling IS dynamics. 
The proof idea for the next statement is that in every infinite sequence of NS deviations,
there exists a certain time step from which on 
agent $a$ has a positive utility for each agent $b$ that joins $a$ 
(because $b$ has already joined $a$ sufficiently often). 
\begin{restatable}{proposition}{ISvNS}
	\label{obs:appreciation-IS-vs-NS}
	For appreciative agents in ASHGs
	every sequence of NS deviations contains only finitely many deviations that are not IS deviations.
\end{restatable}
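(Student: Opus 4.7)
The plan is to combine \Cref{lem:infoccurence} with the observation that, under appreciation, the utility $u_j(i)$ is monotonically non-decreasing in time and is incremented by one whenever $i$ joins a coalition containing $j$. This will force the IS condition to hold for every sufficiently late NS deviation.

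First, I would apply \Cref{lem:infoccurence} to obtain a time $t_0 \ge 0$ such that every single-agent deviation performed at some time $t \ge t_0$ is performed infinitely often. Consequently, the set $\mathcal D$ of (deviator, target coalition) pairs $(d, C)$ that are used from $t_0$ onwards is a finite subset of $N \times 2^N$, and each $(d, C) \in \mathcal D$ occurs infinitely often in the tail of the sequence.

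Next, I would translate the IS requirement into a pairwise-utility condition. In an ASHG, if agent $d$ performs an NS deviation at time $t$ into a coalition $C \in \pi^t \cup \{\emptyset\}$, then for any $j \in C$ we have $u_j^t(\pi^{t+1}) - u_j^t(\pi^t) = u_j^t(d)$, so the deviation is IS if and only if $u_j^t(d) \ge 0$ for every $j \in C$ (the case $C = \emptyset$ being vacuous). Now the appreciation rule implies that $u_j^t(d)$ is non-decreasing in $t$ and gains exactly one each time $d$ deviates into some coalition that includes $j$. In particular, for every recurring pair $(d, C) \in \mathcal D$ and every $j \in C$, the value $u_j^t(d)$ is incremented by $1$ at every one of the infinitely many occurrences of $(d, C)$, so $u_j^t(d) \to \infty$ as $t \to \infty$. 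Hence there is a time $t^*_{d,C,j}$ beyond which $u_j^t(d) \ge 0$.

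Finally, since $\mathcal D$ is finite and each $C$ appearing in $\mathcal D$ is finite, the quantity
\[
t^* \;=\; \max\!\bigl(t_0, \; \max\{\,t^*_{d,C,j} : (d,C) \in \mathcal D,\ j \in C\,\}\bigr)
\]
is well-defined and finite. For every $t \ge t^*$, the deviation at step $t$ uses some pair $(d, C) \in \mathcal D$ and satisfies $u_j^t(d) \ge 0$ for all $j \in C$; by the translation above, it is an IS deviation. Therefore only deviations occurring strictly before $t^*$ can fail to be IS, and these are finitely many. I expect no serious obstacle beyond the bookkeeping; the main subtlety to double-check is that the (deviator, target)-based identification of recurring deviations is sufficient, which it is because the increment of $u_j(d)$ depends only on the deviator $d$ and the membership $j \in C$, not on the source partition.
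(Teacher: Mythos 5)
Your proposal is correct and follows essentially the same route as the paper: apply \Cref{lem:infoccurence} to get a tail in which every deviation recurs infinitely often, observe that appreciation then drives $u_j(d)$ above zero for every recurring join, and take a finite maximum over the relevant thresholds. The only cosmetic difference is that you index recurring deviations by (deviator, target coalition) pairs while the paper indexes by pairs of agents $(i,j)$ with $j$ joining $i$; both yield the same bound.
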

\begin{proof}
	Let an ASHG $(N,u^0)$ with appreciative agents be given. 
	Furthermore,
	assume that there exists an infinite sequence $(\pi^t)_{t\ge 0}$ of partitions resulting from NS deviations. 
	By \Cref{lem:infoccurence}, there exists a time $t_0$ such that every deviation performed after $t_0$ occurs infinitely often.
	
	Define $\mathcal L = \{(i,j)\in N^2 \colon i \textnormal{ is joined by } j \textnormal{ after time }t_0\}$. 
	Let $(i,j)\in \mathcal L$. 
	Then, there exists a time $t(i,j)\ge t_0$ such that $i$ was joined by $j$ for at least $\lceil |u^{t_0}_i(j)|\rceil + 1$ times after time $t_0$ and before time $t(i,j)$. 
	Consider the time $t_1 = \max\{t(i,j)\colon (i,j)\in\mathcal L\}$. We claim that all deviations after time $t_1$ are IS deviations. Indeed, assume that agent~$i$ is joined by agent~$j$ at time $t\ge t_1$. By construction, as $t(i,j)\le t_1$, it holds that $u^t_i(j) > 0$. 
	Consequently, agent $i$ prefers  $\pi^{t+1}$ to $\pi^t$. 
	Hence, every deviation after time $t_1$ is an IS deviation.
	Thus, there are only finitely many deviations
	(at most $t_1$ many) that are not IS deviations.
\end{proof}

To sum up, combining \Cref{thm:NS-resent-vs-appreciation-ASHG} and \Cref{obs:appreciation-IS-vs-NS,prop:resent-CNS-vs-NS}, we get the following equivalences.
\begin{corollary}\label{co:equivalence}
	The following statements are equivalent:
	\begin{enumerate}
		\item There exists an ASHG admitting an infinite and periodic sequence of CNS deviations for resentful agents.
		\item There exists an ASHG admitting an infinite and periodic sequence of NS deviations for resentful agents.
		\item There exists an ASHG admitting an infinite and periodic sequence of NS deviations for appreciative agents.
		\item There exists an ASHG admitting an infinite and periodic sequence of IS deviations for appreciative agents.
	\end{enumerate}	
\end{corollary}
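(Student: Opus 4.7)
The plan is to combine Theorem~\ref{thm:NS-resent-vs-appreciation-ASHG} with Propositions~\ref{prop:resent-CNS-vs-NS} and~\ref{obs:appreciation-IS-vs-NS}. The equivalence $(2) \Leftrightarrow (3)$ is given directly by Theorem~\ref{thm:NS-resent-vs-appreciation-ASHG}, so the work is to establish $(1) \Leftrightarrow (2)$ and $(3) \Leftrightarrow (4)$. The easy directions $(1) \Rightarrow (2)$ and $(4) \Rightarrow (3)$ are immediate, since every CNS deviation and every IS deviation is also an NS deviation (by Figure~\ref{fig:stability-relations}). Thus the real content lies in $(2) \Rightarrow (1)$ and $(3) \Rightarrow (4)$, which follow the same pattern, so I will spell out the former in detail.

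For $(2) \Rightarrow (1)$, I would start from an infinite periodic NS sequence $(\pi^t)_{t \geq 0}$ for resentful agents in some ASHG $(N,u^0)$, with period $p$ starting at time $t_0$. Because $\mathit{AS}$ satisfies aversion to enemies (Proposition~\ref{prop:axioms}), Proposition~\ref{prop:resent-CNS-vs-NS} yields some $t_1$ such that every deviation performed after time $t_1$ is in fact a CNS deviation with respect to the current utility profile. Fix any $t_2 \geq \max(t_0, t_1)$ and consider the ``restarted'' ASHG $(N, u^{t_2})$ with initial partition $\pi^{t_2}$. Applying the same deviations as in the original execution from time $t_2$ onwards yields, by a straightforward induction, a sequence $(\sigma^k)_{k \geq 0}$ with $\sigma^k = \pi^{t_2+k}$ and with utility profile at step $k$ equal to $u^{t_2+k}$. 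By choice of $t_2$ every such deviation is CNS, and the resulting partition sequence remains periodic with period $p$ since $t_2 \geq t_0$. This is the required infinite periodic CNS sequence for resentful agents.

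For $(3) \Rightarrow (4)$, the argument is entirely analogous, with Proposition~\ref{obs:appreciation-IS-vs-NS} replacing Proposition~\ref{prop:resent-CNS-vs-NS} to guarantee that every sufficiently late deviation is an IS deviation, and with the appreciation update rule replacing the resent rule. The main (and essentially only) subtlety in both arguments is to verify that restarting the game at time $t_2$ genuinely preserves the type of every deviation in the tail. This works because the update rules for both resent and appreciation are deterministic functions of the current partition and the chosen deviation, so the utility vectors reconstructed in the restarted game match those of the original sequence step by step; hence a deviation that was CNS (resp.\ IS) originally remains CNS (resp.\ IS) after the restart.
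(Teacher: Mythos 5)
Your proposal is correct and matches the paper's approach: the paper derives \Cref{co:equivalence} exactly by combining \Cref{thm:NS-resent-vs-appreciation-ASHG} with \Cref{prop:resent-CNS-vs-NS,obs:appreciation-IS-vs-NS}, leaving the routine details implicit. Your explicit tail-restarting argument (re-basing the game at a time $t_2$ past which all deviations are CNS, resp.\ IS, and noting that the deterministic utility updates and periodicity are preserved) is precisely the bookkeeping the paper omits, and it is sound.
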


\subsection{Convergence for Appreciative Agents}
We now give an overview under which circumstances appreciation is (not) sufficient to guarantee convergence in MFHGs and ASHGs.
In contrast to resent, appreciation is not sufficient to guarantee convergence of CS dynamics.\footnote{For ASHGs, 
		the next statement can be extended to an ASHG 
		where initial valuations are symmetric by slightly modifying the game 
		presented by \citet[Figure 2]{ABS11c}.}

\begin{restatable}{theorem}{appreciativecore}
	\label{th:appreciative-core-cylce}
	The individually rational CS dynamics may cycle in ASHGs and MFHGs for appreciative agents.

\end{restatable}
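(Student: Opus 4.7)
The plan is to exhibit, for each of ASHG and MFHG, a small explicit game together with an initial partition and a periodic sequence of group deviations, and then verify by direct computation that (i) every deviation in the sequence is a core deviation with respect to the \emph{current} utilities, (ii) every newly formed coalition is individually rational for its members, and (iii) the sequence can be continued forever. The guiding design principle is that under appreciation, the utility $u_i(j)$ changes only when $i$ and $j$ co-deviate, in which case it \emph{increases}. Hence, if the periodic fragment $C_1,\dots,C_p$ of deviating coalitions is chosen so that every (unordered) pair of agents appears together in the same number of coalitions $C_r$ per period, then all utilities grow by the same additive constant over one period. Consequently, the strict inequalities that certify each step as a CS deviation on round one will persist on every subsequent round, so checking the example for a single period suffices.

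Concretely, I would begin with small instances (four to six agents). For the ASHG, I would seed a cycle where, from some partition, a proper subset $C_1$ of one coalition strictly profits by splitting off, then agents from the resulting structure regroup as $C_2$, and so on, returning to the initial partition (up to a uniform shift in utilities) after $p$ steps. The pair-balance property is easy to arrange if, for example, over the period each agent deviates exactly once as part of every coalition that ever appears as a $C_r$. Individual rationality on each step is guaranteed by choosing initial utilities so that the sum of partners' valuations stays positive in every coalition that forms; since appreciation only raises utilities inside deviating coalitions, IR is preserved across periods automatically. I would likely adapt a known static cycling example (such as that of Aziz, Brandt, and Seedig, as the footnote hints) and rescale utilities upward so that IR holds throughout.

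For the MFHG case, the same blueprint applies, but the averaging in $\mathit{MF}$ makes the verification more delicate: adding a friend to a coalition can \emph{decrease} the average utility, and a split that is strictly improving under $\mathit{AS}$ need not be strictly improving under $\mathit{MF}$. I would therefore choose the utilities so that, in each prescribed deviation, both the departing block and the newly forming block have a strictly larger average than the old coalition for every agent involved. Picking utilities from a wide dynamic range (some strongly positive, some near zero, a few negative) gives enough flexibility to engineer these strict inequalities simultaneously; checking them reduces to finitely many arithmetic comparisons, one per deviation per period.

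The main obstacle is the MFHG construction: balancing the averages in all $p$ steps simultaneously while also ensuring individual rationality and the pair-balance invariant requires careful tuning. If a clean six-agent example proves elusive, a viable fallback is to allow the period $p$ to be larger and use the extra flexibility to balance pair co-occurrences, or to let two disjoint three-agent subgames cycle independently so that the global partition still exhibits a CS cycle. Once the numerical examples are fixed, the verification amounts to presenting the utility tables at time $0$ and listing, for each transition $\pi^{t-1}\to\pi^t$, the deviating coalition $C_t$ together with the inequalities $u_i^{t-1}(C_t)>u_i^{t-1}(\pi^{t-1}(i))$ and $u_i^{t-1}(C_t)\ge u_i^{t-1}(\{i\})$ for every $i\in C_t$; periodicity then follows from the uniform per-period utility shift.
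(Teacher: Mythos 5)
Your blueprint is exactly the mechanism the paper uses: choose a periodic sequence of deviating coalitions in which every pair of agents co-deviates the same number of times per period, so that appreciation shifts all utilities by a uniform additive constant and the strict inequalities certifying each core deviation persist forever. However, since the theorem is an existence claim, the proof ultimately stands or falls with an explicit witness, and your sketch stops short of producing one; you also substantially overestimate the difficulty of the MFHG case. The paper's example is a three-agent game with cyclically asymmetric utilities ($u^0_a(b)=u^0_b(c)=u^0_c(a)=4$ and $u^0_a(c)=u^0_b(a)=u^0_c(b)=1$) in which the deviating coalitions are the three \emph{pairs} $\{a,b\}$, $\{b,c\}$, $\{a,c\}$ in rotation. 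Because every coalition that ever forms has size two, $\mathit{MF}_i(C,u_i)$ and $\mathit{AS}_i(C,u_i)$ coincide on every relevant coalition, so a single example covers both classes and the averaging issues you flag (and the proposed fallbacks of longer periods or disjoint subgames) never arise. Each agent co-deviates once per period with each other agent, so all utilities grow by $2$ per period and the orderings $u^t_a(b)>u^t_a(c)$, $u^t_b(c)>u^t_b(a)$, $u^t_c(a)>u^t_c(b)$ are maintained; individual rationality is immediate since all utilities are positive. To turn your proposal into a proof you would need to actually pin down such an example; once you restrict attention to pair coalitions, the construction is a one-line verification rather than the delicate balancing act you anticipate.
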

\begin{proof}
	Let $N=\{a,b,c\}$ be the set of agents and let the agents' initial utilities be as follows: 
	$$u^0_a(b)=u^0_b(c)=u^0_c(a)=4, \ \ u^0_a(c)=u^0_b(a)=u^0_c(b)=1.$$
	Let $\pi^0=\{\{a\},\{b\},\{c\}\}$ and for $t>0$, let  
	\[
	\pi^t =
	\begin{cases}
	\{\{a,c\},\{b\}\}, & \text{if } t\bmod 3 = 0  \\
	\{\{a,b\},\{c\}\}, & \text{if } t\bmod 3 = 1 \\
	\{\{a\},\{b,c\}\}, & \text{if } t\bmod 3 = 2.
	\end{cases}
	\]
	We claim that $(\pi^t)_{t\geq 0}$ is an infinite sequence, where for every $t\geq 1$, $\pi^t$ evolves from $\pi^{t-1}$ by a core deviation of coalition $C^t$. 
	Specifically, we have
	\[
	C^t =
	\begin{cases}
	\{a,c\}, & \text{if } t\bmod 3 = 0  \\
	\{a,b\}, & \text{if } t\bmod 3 = 1 \\
	\{b,c\}, & \text{if } t\bmod 3 = 2.
	\end{cases}
	\]
	Thus in each cycle of length three, each agent performs exactly one core deviation with any other agent. Thus, for $t\geq 0$ with $t\bmod 3 = 0$ it holds that 
	$$u^t_a(b)=u^t_b(c)=u^t_c(a)=\frac{2}{3}t+4$$ and $$u^t_a(c)=u^t_b(a)=u^t_c(b)=\frac{2}{3}t+1\text{.}$$
	Thus, for each $t>0$, it holds that $u^t_a(b)>u^t_a(c)$, $u^t_b(c)>u^t_b(a)$, and $u^t_c(a)>u^t_c(b)$, implying that each of the deviations $C^t$ for $t>0$ is a core deviation if each agent $i\in N$ aggregates utilities according to $\mathit{AS}_i$ or $\mathit{MF}_i$. 
\end{proof}

However, in the games considered in \Cref{th:appreciative-core-cylce}, 
there exists an execution of the CS dynamics that converges. 
This raises the (open) question whether a converging execution of the CS dynamics exists for every initial state in ASHGs and MFHGs for appreciative agents.

Lastly, we consider IS and (C)NS dynamics. 
In ASHGs for appreciative agents,
it remains open whether IS and NS dynamics may cycle. 
In fact, we have seen in \Cref{obs:appreciation-IS-vs-NS} that these two questions are equivalent and in \Cref{co:equivalence} that they are very closely related to our open questions concerning resentful agents. On the other hand, for CNS, appreciation is sufficient to guarantee convergence.
\begin{restatable}{theorem}{apprecCNS}
	\label{thm:apprec-CNS}
	The CNS dynamics converges in ASHGs for appreciative agents. 
\end{restatable}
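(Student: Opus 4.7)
The plan is to argue by contradiction: suppose there is an infinite execution $(\pi^t)_{t\ge 0}$ of the CNS dynamics in some ASHG with appreciative agents. By \Cref{lem:infoccurence}, after some time $t_0$, every transition performed from $t_0$ on recurs infinitely often; in particular, there is an agent $k$ with a specific deviation from a fixed coalition $A$ to a fixed coalition $B = C\cup\{k\}$ recurring at times $t_0 \le t_1 < t_2 < \cdots$. I want to exploit two monotonicity features of appreciation in ASHGs: $u_i(k)$ is non-decreasing in $t$ (it grows only when $k$ joins a coalition containing $i$), whereas the deviator's own utility function $u_k(\cdot)$ is never touched by her own deviation.

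The first step is a key lemma: for each $i\in A\setminus\{k\}$, CNS forces $u_i^{t_n}(k)\le 0$ at every occurrence, so by monotonicity $u_i(k)\le 0$ throughout, which bounds the total number of times $k$ ever joins a coalition containing $i$ by $-u_i^0(k)$. Between any two successive occurrences $t_n$ and $t_{n+1}$, agents $i$ and $k$ separate (since $k$ abandons $A\setminus\{k\}$) and later reunite in $A$; since $k$ can contribute to such reunions only finitely many times, eventually every reunion happens via $i$ deviating into $k$'s current coalition, and appreciation then drives $u_k(i)\to\infty$.

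For the case $A\ne\{k\}$, the key lemma yields $u_k^{t_n}(A\setminus\{k\})\to\infty$, so the NS condition for $k$'s deviation forces $u_k^{t_n}(C)\to\infty$ as well, and in particular some $i\in C$ must satisfy $u_k(i)\to\infty$. Such an $i$ must be the deviator of infinitely many joins into $k$'s coalition. However, once $u_k(i)>0$, $i$ can no longer abandon $k$ via CNS; and since $k$ is assumed to join a coalition containing $i$ infinitely often (because $i\in C$), the monotonicity bound on $u_i(k)$ would be violated if $k$ also abandoned $i$ infinitely often, so $k$ abandons $i$ only finitely often. Consequently $i$ and $k$ eventually never separate again, contradicting the assumption that $i$ joins $k$'s coalition infinitely often.

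The main obstacle is the remaining case $A=\{k\}$, where $k$ deviates from a singleton so no agent is abandoned and the preceding argument is vacuous. Here I would track, for every $d\in C$, the alternating sequence of leave- and join-events of $d$ with respect to $k$'s coalition in the interval $[t_n+1, t_{n+1}]$: since $d$ lies in $k$'s coalition right after $k$'s deviation and must lie outside of it at $t_{n+1}$ (so that $\{k\}$ and $C$ reappear as separate coalitions of $\pi^{t_{n+1}}$), this sequence contains exactly one more leave than join, say $\ell_d$ leaves and $\ell_d-1$ joins. Each leave requires $u_k(d)\le 0$ at the leaving moment by CNS, and each join increments $u_k(d)$ by one; running this accounting backwards from the final leave yields $u_k^{t_n}(d)\le -(\ell_d-1)\le 0$ for every $d\in C$. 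Summing gives $u_k^{t_n}(C)\le 0$, contradicting the NS condition $u_k^{t_n}(C)>0$ for $k$'s deviation from $\{k\}$. The bookkeeping over all $d\in C$, and in particular ruling out intermediate rejoinings of $k$'s coalition by third parties from skewing the leave/join count for $d$, is where the argument needs the most care.
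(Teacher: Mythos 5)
Your strategy---localizing to a single infinitely recurring deviation of a fixed agent $k$ from a fixed coalition $A$ to a fixed $C\cup\{k\}$ and exploiting the monotone growth of utilities---is genuinely different from the paper's proof, which argues globally (first showing that no two agents can both join each other after some time, deducing a uniform bound $U$ on any agent's post-deviation utility, then using an increasing-sequence argument to find an agent who abandons a coalition containing someone who joins her, whose repeated joins push that coalition's value above $U$). Your case $A\neq\{k\}$ is sound; it is essentially a localized form of the same mutual-positive-utility lock.

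The gap is in the case $A=\{k\}$, exactly where you place the difficulty, but the difficulty is not the one you name. A third party entering or leaving $k$'s coalition never changes whether $d$ belongs to $k$'s coalition, so third parties cannot skew the leave/join count at all. What does break the accounting is that $d$'s membership in $k$'s coalition can change through a deviation of $k$ herself: if $k$ CNS-deviates out of a coalition containing $d$, that is a ``leave'' event for $d$, but CNS then only yields $u_d(k)\le 0$ at that moment, not $u_k(d)\le 0$; symmetrically, if $k$ later re-enters a coalition containing $d$, that ``join'' increments $u_d(k)$, not $u_k(d)$. Hence the inequality $u_k^{t_n}(d)\le -(\ell_d-1)$ fails whenever any of the exits or entries is caused by $k$, and the conclusion $u_k^{t_n}(C)\le 0$ does not follow as written. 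The gap is repairable with a tool you already use in the other case: since $k$ joins $C\ni d$ at every occurrence $t_n$, appreciation forces $u_d(k)\to\infty$, so after finitely many occurrences $u_d(k)>0$ permanently and $k$ can never again CNS-abandon a coalition containing $d$. From then on every exit of $d$ from $k$'s coalition is a deviation by $d$, forcing $u_k(d)\le 0$ at the last such exit before $t_{n+1}$; since $u_k(d)$ cannot increase while $d$ stays outside, $u_k^{t_{n+1}-1}(d)\le 0$ for all $d\in C$, and summing contradicts the NS requirement $u_k^{t_{n+1}-1}(C)>0$. Without this repair the key step of your second case would fail.
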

\begin{proof}
	Let an ASHG $(N,u^0)$ be given and consider an execution $(\pi^t)_{t\ge 0}$ of the CNS dynamics. Assume for contradiction that the dynamics is infinite. By \Cref{lem:infoccurence}, there exists $t_0\ge 0$ such that every deviation performed at time $t\ge t_0$ is performed infinitely often.

	Now, consider an agent $i$ that joins a coalition $C$ containing agent $j$ at some time $t\ge t_0$. 
	We claim that it is impossible that $j$ ever joins a coalition containing agent~$i$ after time $t_0$. 
	For the contrary, assume that this happens, and therefore happens infinitely often. 
	Then, since agent~$i$ and agent~$j$ join each other infinitely often,  there exists a time $t'\ge t_0$ such that $u^{t}_j(i) > 0$ and $u^{t}_i(j) > 0$ for all $t\ge t'$. Some time after $t'$, the agents~$i$ and $j$ will be in a joint coalition, again. 
	Then, to perform the deviation again where $i$ joins $C$, the two agents have to be dissolved, i.e., one of them has to leave the other.
	However, since they by now have positive utility for each other, each of them would block the other agent from leaving according to the additively separable {\caf}. Hence, this cannot happen, a contradiction. 
	Consequently, agent~$j$ cannot join a coalition containing $i$ after time $t_0$. 
	
	Thus, the utilities of an agent $i$ for agents in coalitions which she joins by a deviation are not affected after time $t_0$ by appreciation. 
	Therefore, there exists a global constant $U$ such that the maximum utility of any agent obtained after any deviation is $U$. We derive a contradiction by showing that some agent has to abandon a coalition of unbounded utility.

	To this end, we prove the following claim.
	
	\begin{claim}
		There exists an agent $i$ that abandons a coalition $C$ containing an agent $j$ that joins $i$ at some point during the dynamics.
	\end{claim}
	
	\renewcommand\qedsymbol{$\vartriangleleft$}
	\begin{proof}
		Assume for contradiction that no such agent exists. To derive a contradiction, we will construct an infinite sequence of coalitions increasing in utility with respect to the utility at time $t_0$. This cannot happen, because the number of coalitions an agent can be part of (and therefore the number of different utility values that an agent can achieve at a fixed time) is bounded.
		
		We claim that there exists an agent $d\in N$ and a sequence of coalitions $(C_k)_{k\ge 0}$ such that for every $k\ge 1$, the following two conditions hold:
		
		\begin{enumerate}
			\item $d$'s utility with respect to time $t_0$ is strictly increasing, i.e., $u^{t_0}_d(C_k) > u^{t_0}_d(C_{k-1})$.
			\item The coalition $C_k$ is formed by a deviation of agent $d$.
		\end{enumerate}
		
		Consider the first agent $d$ performing a deviation after time $t_0$, where $d$ abandons coalition $C_0$ to form coalition $C_1$. 
		Then, $u^{t_0}_d(C_1) > u^{t_0}_d(C_0)$. Hence, we have found the first step of the sequence. 
		Now, assume that we have constructed a sequence $(C_k)_{k=0}^m$ which satisfies the two conditions. 
		We know that $C_m$ is formed by a deviation of agent~$d$. Consider the next time $\hat t$ where some agent $e\in C_m$ leaves the coalition $C$ containing $C_m$ (which must happen, because the dynamics is infinite). 
		If $e \neq d$, then we derive a contradiction to our initial assumption because $e$ then leaves a coalition containing an agent, namely agent~$d$, that joined her by a deviation. 
		Hence, $e = d$. If there exists an agent $f\in C\setminus C_m$, then we again derive a contradiction because $f$ must have joined $d$ at some point. 
		This implies that $C = C_m$. 
		Set $C_{m+1}$ to the coalition joined by agent~$d$. 
		Then, $C_{m+1}$ clearly fulfills the second condition. Also, $u^{t_0}_d(C_{m+1}) = u^{\hat t}_d(C_{m+1}) > u^{\hat t}_d(C) = u^{\hat t}_d(C_m) = u^{t_0}_d(C_m)$. 
		In the first and last equality, we use that appreciation does not affect the utilities for agents joined by agent $d$ after time $t_0$. Hence, also the first condition is fulfilled. As such a sequence of coalitions cannot exist, we derive a contradiction, and the claim must hold. 
	\end{proof}
	Now, consider an agent~$i$ that abandons a coalition $C$ containing an agent $j$ that joins $i$ at some point during the dynamics. 
	Since the deviation where agent~$j$ joins $i$ happens infinitely often, there exists a time $T\ge t_0$ such that $u^t_i(j)\ge U - \sum_{l\in C\setminus \{j\}}u^{t_0}_i(l)$ for all $t\ge T$. 
	Consider the next time $T'\ge T$ where~$i$ abandons $C$ to form some coalition $D$. 
	Then, $u^{T'}_i(D) \le U \le u^{T'}_i(j) + \sum_{l\in C\setminus \{j\}}u^{t_0}_i(l) \le u^{T'}_i(C)$. 
	In the last inequality, we use that the utility of $i$ for other agents can only have increased since time $t_0$ (because of appreciation). Hence, this deviation was not beneficial for~$i$. This is a final contradiction showing that an infinite dynamics cannot exist.
	\renewcommand\qedsymbol{$\square$}
\end{proof}

We proved in \Cref{th:mfhg-resent} that NS dynamics may cycle in MFHGs for resentful agents.
``Reversing'' this sequence and appropriately adjusting the initial utilities leads to a cycling  NS dynamics for appreciative agents. 
We defer the details to the appendix.

\begin{restatable}{theorem}{mfhgapprec}
	\label{th:mfhg-apprec}
	The individually rational NS dynamics may cycle in MFHGs for appreciative agents.
\end{restatable}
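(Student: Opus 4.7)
The plan is to follow the ``reversal'' strategy hinted at in the paragraph preceding the statement and already used for ASHGs in \Cref{thm:NS-resent-vs-appreciation-ASHG}. I will build an MFHG on the same six-agent set $N=\{a,a',b,b',c,c'\}$ as in the proof of \Cref{th:mfhg-resent} and exhibit an infinite periodic execution of the individually rational NS dynamics for appreciative agents by running the 18-step cycle from that proof backwards.

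Concretely, I would define the reversed sequence $(\sigma^t)_{t\ge 0}$ by $\sigma^{18n+k}=\pi^{18-k}$ for $k\in\{0,\dots,17\}$ and $n\ge 0$, where $(\pi^t)_{t\ge 0}$ is the sequence from the proof of \Cref{th:mfhg-resent}. Each transition $\sigma^{t-1}\to\sigma^t$ is the time-reversal of a corresponding transition $\pi^s\to\pi^{s-1}$: the same agent $k$ moves, but now it travels from the coalition it had joined in the forward dynamics back to the coalition it had abandoned. The resent/appreciation duality matches exactly: the set of agents $i$ whose utility for $k$ was decreased in a forward resent step (those with $i\in\pi^{s-1}(k)\setminus\{k\}$) is precisely the set of agents whose utility for $k$ is increased in the corresponding reversed appreciation step (those with $i\in\sigma^{t}(k)\setminus\{k\}=\pi^{s-1}(k)\setminus\{k\}$). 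Moreover, because every pair $(i,j)$ has its utility changed exactly once per 18-step cycle, the per-cycle shift is uniform across all pairs, just as in the forward resent dynamics. Therefore, the relative preferences governing whether each deviation is NS are preserved from one cycle to the next, and it suffices to verify the first cycle.

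It remains to specify the initial utility profile $\tilde u^0$. A natural choice is to take $\tilde u^0_i(j)$ to equal the value that the corresponding entry of \Cref{table:mfhg-resent} attains at the reversed offset for the pair $(i,j)$, i.e., the entry shifted so that after one cycle of appreciative increases one lands at (a permutation of) the initial resent values. With this choice, for every reversed deviation the arithmetic inequality on averages that must be verified is precisely the mirror of one of the inequalities already implicitly verified in the proof of \Cref{th:mfhg-resent}, with both sides shifted by the same integer. The detailed partition-by-partition verification of the 18 NS conditions, together with a check that every post-deviation coalition is IR (i.e., that its $\mathit{MF}$-value is nonnegative), is a finite case analysis and I would relegate it to the appendix, just as in \Cref{th:mfhg-resent}.

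The main obstacle is condition (b), individual rationality: unlike in the ASHG proof of \Cref{thm:NS-resent-vs-appreciation-ASHG}, we cannot simply add a large constant to all utilities because the $\mathit{MF}$ aggregator normalizes by $|C|-1$ and such a shift would distort comparisons between coalitions of different sizes. Hence, the initial utility values must be chosen carefully (as in \Cref{table:mfhg-resent}, where entries were selected to be comfortably positive in the ``right'' places): one must ensure that every coalition visited along the reversed cycle has nonnegative average utility for each of its members. This is where the concrete numerical construction inherited from \Cref{th:mfhg-resent} pays off, since the entries there are already chosen with enough slack that the mirrored shift in the appreciative setting keeps all relevant averages nonnegative.
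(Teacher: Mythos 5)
Your plan is exactly the route the paper takes: reverse the 18-step cycle from \Cref{th:mfhg-resent} on the same agent set $\{a,a',b,b',c,c'\}$, exploit the fact that each ordered pair's utility changes exactly once per cycle (so relative preferences are maintained and one cycle suffices to verify), and choose fresh initial utilities to make every deviation an IR NS deviation. However, the entire substance of such a counterexample proof is the explicit utility table, and that is precisely what your proposal leaves unspecified. Your suggested recipe --- taking $\tilde u^0_i(j)$ to be ``the entry of \Cref{table:mfhg-resent} at the reversed offset'' so that the appreciative inequalities become mirrors of the resent ones --- does not describe what actually works: the paper's appreciative table (\Cref{table:mfhg-apprec}) is not obtained from the resent table by any negation, shift, or reindexing (e.g., $u_a(b')$ is $230-x$ in the resent example but $-100+x$ in the appreciative one); the values are engineered from scratch so that the 18 NS conditions and the IR conditions hold simultaneously. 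Two smaller remarks: (i) you are right that the additive-constant trick from \Cref{thm:NS-resent-vs-appreciation-ASHG} fails for $\mathit{MF}$, which is why a fresh table is needed; (ii) your IR requirement is stronger than necessary --- by the paper's definition only the \emph{deviating} agent must land in an IR coalition, and indeed the paper's example visits coalitions (e.g., $\{c,c',a,b'\}$ for agent $a$) that are not IR for some non-deviating member. Insisting that every member of every visited coalition have nonnegative average would make the search for suitable utilities harder and is not what the theorem requires. So: right approach, but the proof is not complete until the witness utilities are written down and the 18 deviations checked.
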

It remains open whether IS dynamics may cycle in MFHGs for appreciative agents.
Note that the arguments from \Cref{obs:appreciation-IS-vs-NS} for showing the ``equivalence'' for IS and NS dynamics under appreciation do not work for~MFHGs.

\section{Dynamics with Resentful Deviatiors}\label{sec_dev-resent}

Previously, we have assumed that a deviation of an agent $a$ changes the utility other agents have for $a$. 
In contrast, one could also consider what happens if a deviation of $a$ changes $a$'s utility for other agents.
Under deviator-resent, we assume that an agent decreases her utility for all agents she abandons.

Formally, if for some $t\ge 1$, $\pi^t$ evolves from $\pi^{t-1}$ via a
single-agent deviation of agent $k\in N$,
then for \emph{deviator-resentful} agents, for $i,j\in N$, $u_i^t(j)$ arises from $u_i^{t-1}(j)$ as
\[u_i^t(j) = \begin{cases}
u_i^{t-1}(j) -1 & i= k, j\in \pi^{t-1}(i)\setminus \{i\},\\
u_i^{t-1}(j) & \text{else.}\\
\end{cases}\]

Similarly,  for deviator-resentful agents, if for $t\ge 1$, $\pi^t$ evolves from $\pi^{t-1}$ via a group deviation of $C\subseteq N$, then, for $i,j\in N$, $u_i^t(j)$ arises from $u_i^{t-1}(j)$ as
\[u_i^t(j) = \begin{cases}
u_i^{t-1}(j) -1 & i\in C, j\in \pi^{t-1}(i)\setminus C,\\
u_i^{t-1}(j) & \text{else.}\\
\end{cases}\]

In addition to the axioms introduced in \Cref{se:prelims}, we also define two more axioms that are only relevant for the results in this section. 
These are weak conditions about friends.

\begin{itemize}
	\item \emph{friend necessity} (FN) if,
	for all 
	coalitions $C\in \mathcal{N}_i$
	and utility vectors $u_i\in \mathbb Q^n$, 
	$A_i(C,u_i) > 0$ implies that there exists $j\in C\setminus\{i\}$ with $u_i(j) > 0$. In other words, an agent can only have a positive utility for a coalition if it contains a friend.
	\item \emph{single friend desire} (SFD) if, 
	for all 
	coalitions $C\in\mathcal{N}_i$, 
	agents $j\in C\setminus\{i\}$, and utility vectors $u_i\in \mathbb Q^n$ such $u_i(j)>0$ and $u_i(k)\leq 0$ for all $k\in C\setminus \{j\}$, it holds that $A_i(C, u_i) > A_i(C\setminus \{j\}, u_i)$. In other words, a coalition to which exactly one friend belongs is strictly preferred to the same coalition without this friend.
\end{itemize}

We show that additively separable and modified fractional utility aggregation satisfies these axioms.

\begin{proposition}\label{prop:allaxioms}
	The additively separable {\caf} $\mathit{AS}_i$ and the modified fractional {\caf} $\mathit{MF}_i$ satisfy FN and SFD.
\end{proposition}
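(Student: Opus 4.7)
The plan is to verify each axiom directly from the definitions of $\mathit{AS}_i$ and $\mathit{MF}_i$. Since both aggregators only involve sums (possibly normalized by a positive factor $|C|-1$), the arguments are essentially elementary.

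For \emph{friend necessity}, I would argue that $A_i(C,u_i) > 0$ forces $\sum_{j\in C\setminus\{i\}} u_i(j) > 0$ both for $\mathit{AS}_i$ directly and for $\mathit{MF}_i$ after multiplying by the positive quantity $|C|-1$ (note that $|C|\ge 2$ is automatic, since otherwise $\mathit{MF}_i(C,u_i) = 0$). A positive sum must contain at least one positive summand, which yields the desired friend $j\in C\setminus\{i\}$ with $u_i(j) > 0$.

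For \emph{single friend desire}, fix $C\in\mathcal N_i$ together with a friend $j\in C\setminus\{i\}$ (so $u_i(j) > 0$) and assume $u_i(k)\le 0$ for all $k\in C\setminus\{i,j\}$. For $\mathit{AS}_i$, the difference $\mathit{AS}_i(C,u_i) - \mathit{AS}_i(C\setminus\{j\},u_i) = u_i(j) > 0$ handles the claim immediately. For $\mathit{MF}_i$, I would split on the size of $C$. If $|C| = 2$, then $C\setminus\{j\} = \{i\}$, so $\mathit{MF}_i(C\setminus\{j\},u_i) = 0 < u_i(j) = \mathit{MF}_i(C,u_i)$. If $|C|\ge 3$, set $s = \sum_{k\in C\setminus\{i,j\}} u_i(k)\le 0$ and $n = |C|-1\ge 2$; then the desired inequality
\[
\frac{s + u_i(j)}{n} \;>\; \frac{s}{n-1}
\]
is equivalent (after clearing positive denominators) to $(n-1)u_i(j) > s$, which holds because the left-hand side is strictly positive and the right-hand side is non-positive.

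There is no real obstacle: the only mildly delicate step is the case distinction $|C| = 2$ versus $|C|\ge 3$ in the SFD argument for $\mathit{MF}_i$, which I would handle explicitly to keep the proof watertight.
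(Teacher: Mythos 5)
Your proof is correct and follows essentially the same route as the paper: a direct verification from the definitions, using that a positive sum must contain a positive summand for FN, and that adding a strictly positive term to a non-positive sum (after clearing the positive denominator $|C|-1$ in the $\mathit{MF}_i$ case) strictly increases the value for SFD. Your explicit case split between $|C|=2$ and $|C|\ge 3$ for $\mathit{MF}_i$ is slightly more careful than the paper's one-line justification, but it is the same argument in substance.
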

\begin{proof}
	FN holds because a sum (or a sum divided by a positive number) can only be positive if some summand is positive.
	SFD holds for the additively separable aggregation because a sum gets larger when adding a positive number.
	SFD holds for the modified fractional aggregation because a negative fraction gets larger when adding a positive number to the numerator while the denominator increases.
\end{proof}

\subsection{Contractual Nash Stability and Nash Stability}

An intuitive reason why deviator-resent can contribute to the convergence of dynamics is that, after agent $a$ abandons a coalition $C$, $a$'s utility for $C$ decreases and thus $a$ is less likely to join $C$ again. 
However, deviator-resent does not resolve the run-and-chase example, implying that NS dynamics may cycle for a wide variety of hedonic games with deviator-resentful agents.
Indeed, consider the dynamics between two agents $a$ and $b$ where initially $a$ has utility $1$ for $b$ and $b$ initially has utility $-1$ for $a$. 
Then, $a$ will still always join $b$ in one step, while $b$ leaves $a$ in the next step (thereby decreasing $b$'s utility for $a$ even further). 
\begin{observation}
	The NS dynamics may cycle for deviator-resentful agents whose {\cafs} satisfy friend necessity and single friend desire.
\end{observation}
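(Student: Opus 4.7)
The plan is to formalize the two-agent example already sketched in the paragraph immediately preceding the observation. Since additive separability satisfies both FN and SFD by \Cref{prop:allaxioms}, it suffices to exhibit a cycling NS dynamics under deviator-resent in an ASHG; the observation then follows immediately. Concretely, I would take $N=\{a,b\}$ with initial utilities $u_a^0(b)=1$ and $u_b^0(a)=-1$, starting partition $\pi^0=\{\{a\},\{b\}\}$, and define the periodic sequence $\pi^{2k}=\{\{a\},\{b\}\}$ and $\pi^{2k+1}=\{\{a,b\}\}$ for all $k\ge 0$, where even steps correspond to $a$ joining $b$ and odd steps to $b$ leaving $a$.

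Next I would verify that each step is a valid Nash deviation under the current utilities. At even steps, agent $a$ moves from $\{a\}$ to join $b$. The crucial observation is that $\pi^{2k}(a)\setminus\{a\}=\emptyset$, so the deviator-resent update touches nothing and, by a trivial induction, $u_a^t(b)=1$ for every $t$. Under $\mathit{AS}$ this gives $A_a(\{a,b\},u_a^{2k})=1>0=A_a(\{a\},u_a^{2k})$, confirming the move is Nash. At odd steps, agent $b$ moves from $\{a,b\}$ to the empty coalition, becoming a singleton; here deviator-resent does fire and decreases $u_b(a)$ by one. A straightforward induction then yields $u_b^{2k+1}(a)=-(k+1)\le -1$, so $A_b(\{a,b\},u_b^{2k+1})<0=A_b(\{b\},u_b^{2k+1})$, and the move is again Nash.

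The main conceptual point, and the only slightly delicate step, is the asymmetry between the two agents that deviator-resent is unable to break. Agent $a$ is always alone immediately before joining $b$, so she has nobody to resent, and her incentive to rejoin is preserved indefinitely. Dually, agent $b$'s utility for $a$ is only ever pushed further below her singleton threshold, so the very mechanism of deviator-resent monotonically \emph{strengthens} her incentive to leave rather than weakening it. Consequently the constructed sequence is a valid infinite execution of the NS dynamics for deviator-resentful agents in an ASHG, which establishes the observation.
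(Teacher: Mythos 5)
Your proof is correct and matches the paper's own justification, which is exactly the informal two-agent run-and-chase example ($u_a^0(b)=1$, $u_b^0(a)=-1$) sketched in the paragraph preceding the observation; the paper gives no further formal proof, and your verification that deviator-resent never fires for the always-singleton agent $a$ while only reinforcing $b$'s incentive to leave is the intended argument. The only (cosmetic) difference is that you instantiate the example with the additively separable {\caf} and read the observation existentially, whereas the paper keeps the example at the level of the FN/SFD axioms so that it instantiates to both ASHGs and MFHGs via \Cref{prop:allaxioms}.
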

By \Cref{prop:allaxioms} this implies that the NS dynamics may cycle in ASHGs and MFHGs.

Moreover, somewhat surprisingly, also CNS dynamics may still cycle in ASHGs and MFHGs with deviator-resentful agents, even if deviations are restricted to be 
individually rational. 
Notably, this is in a clear contrast to our previous results for resentful agents where individual rationality was for all types of dynamics sufficient to guarantee convergence. 
\begin{proposition}\label{prop:CNSaIRcycle}
	The individually rational CNS dynamics may cycle in MFHGs and ASHGs for deviator-resentful agents. 
\end{proposition}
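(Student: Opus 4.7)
The plan is to construct a concrete instance, for each of the two classes (ASHG and MFHG), together with an infinite periodic sequence of single-agent deviations, each of which is an IR CNS deviation under the deviator-resent updating rule. Similar in spirit to the construction in the proof of \Cref{th:mfhg-resent}, the sequence will be designed so that each ordered pair of agents participates in a ``leaving'' event equally often per period; this will make the utility changes due to deviator-resent amount to a uniform downward shift on each $u_i(j)$ per full cycle, which preserves the ordering among the values of the coalitions that are relevant for the deviations, so that the same cyclic pattern can be repeated indefinitely.

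First, I would isolate what CNS actually requires in the two cardinal classes. For ASHG, the condition that agents in the abandoned coalition are not hurt when $k$ leaves reduces to $u_l(k) \le 0$ for every $l$ remaining behind; for MFHG, by IR ATE (\Cref{prop:axioms}) it suffices that the abandoned coalition is individually rational for the remaining agents and that the deviator is (weakly) an enemy of each of them. This observation pins down the sign pattern of the initial utilities: the deviator must be disliked by those it leaves behind, yet must itself strictly improve by joining its new coalition, and the new coalition must remain IR for the deviator.

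Next, I would pick a small number of agents (likely four to six) and design a cyclic pattern of single-agent moves in which every agent abandons every other agent exactly the same number of times per full period. The initial utility matrix is then chosen so that (i) each deviator strictly prefers its new coalition to its current one by a margin strictly larger than any utility drop that can occur within a single period, (ii) each resulting coalition is IR for the deviator (value at least the singleton value $0$), and (iii) after one full period every relevant utility $u_i(j)$ has decreased by the same constant, so the exact same sequence of deviations is again valid in the next period. Writing down the resulting inequalities and solving for integer utilities with enough slack will then complete the construction.

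The main obstacle will be to reconcile the contractual constraint (which, as observed above, pushes toward $u_l(k) \le 0$ for those left behind) with strict improvement of the deviator (which pushes toward the deviator having attractive friends in its destination), all while keeping IR intact even as the deviator's utilities only decrease over time. I expect to need asymmetric utilities and a cyclic deviation pattern engineered so that each agent rotates through a small set of destinations in a way that refreshes the necessary preference gaps period after period. For MFHG, the averaging will additionally require careful control of coalition sizes, so that joining a coalition containing one strong friend together with some neutral or slightly negative agents still yields a positive average, and leaving such a coalition is not blocked by the IR ATE hurdle that our convergence proofs in \Cref{sec:resent} exploited.
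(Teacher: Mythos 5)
There is a genuine gap here, on two levels. First, the proposal never produces an instance: it is a design plan ending with ``writing down the resulting inequalities \dots will then complete the construction,'' so the counterexample that the proposition requires is not actually exhibited. Second, and more importantly, the design principle you commit to is self-defeating. You propose a cycle in which every agent abandons every other agent equally often per period, so that ``after one full period every relevant utility $u_i(j)$ has decreased by the same constant.'' Under deviator-resent this means the deviator's utility for the coalition it \emph{joins} also drifts down by a fixed amount each period, while the individual-rationality benchmark (the singleton value, which is $0$ under both $\mathit{AS}$ and $\mathit{MF}$) stays fixed. Hence after finitely many periods the destination coalition is no longer individually rational and the cycle breaks. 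You name this tension as ``the main obstacle'' but do not resolve it, and no choice of initial slack can resolve it as long as all utilities decrease uniformly.

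The resolution is to make the ``abandon'' and ``join'' targets of each agent disjoint, so that deviator-resent only ever pushes down utilities the agent never again needs to be non-negative. The paper's example does exactly this with three agents $a,b,c$: set $u_a(b)=u_b(c)=u_c(a)=0$ and $u_b(a)=u_c(b)=u_a(c)=-1$, and cycle through $\{\{a,b\},\{c\}\}$, $\{\{a\},\{b,c\}\}$, $\{\{a,c\},\{b\}\}$. Each deviator leaves a coalition worth $-1$ (or less, as resent accumulates) for one worth exactly $0$, which is IR; the agent left behind has utility $0$ for the deviator, so the deviation is CNS; and since each agent only ever abandons the one agent it dislikes, the utilities along the ``join'' direction stay at $0$ forever, so the cycle repeats verbatim under both $\mathit{AS}$ and $\mathit{MF}$. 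Your reduction of the CNS condition to $u_l(k)\le 0$ for those left behind is correct and is exactly the sign pattern this example exploits, but the symmetric, all-pairs-abandoned scheme you layer on top of it is what would make the construction fail.
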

\begin{proof}
	Consider a cardinal hedonic game with $N = \{a,b,c\}$ 
	where the initial single-agent utilities are given as $u_a(b) = u_b(c) = u_c(a) = 0$ and $u_b(a) = u_c(b) = u_a(c) = -1$. 
	For $k \ge 0$, let $\pi^{3k} = \{\{a,b\},\{c\}\}$, $\pi^{3k+1} = \{\{a\},\{b,c\}\}$, and  $\pi^{3k + 2} = \{\{a,c\},\{b\}\}$. 
	
	Then, $(\pi^k)_{k\ge 0}$ represents an individually rational CNS dynamics with respect to additively separable and modified fractional utility aggregation.  
\end{proof}
It is possible to modify the previous examples to start a dynamics from the singleton partition.

In contrast to this, as soon as we enforce that agents only deviate to a non-singleton coalition if they strictly prefer it to being in a singleton coalition, convergence can be guaranteed.

\begin{proposition} \label{prop:dev-resent:CNS-converges}
	The \mbox{CNS} dynamics converges for deviator-resentful agents whose {\cafs} satisfy friend necessity and single friend desire if agents only deviate to non-singleton coalitions if they strictly prefer them to being in a singleton.
\end{proposition}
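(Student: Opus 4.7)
The plan is to derive a contradiction from an assumed infinite execution $(\pi^t)_{t\ge 0}$, mirroring the template of the earlier convergence proofs in the paper. First I would pin down a time $T_0$ after which no non-negative single-agent utility is ever decreased by deviator-resent; such a $T_0$ exists since the total number of such decreases is bounded by $\sum_{i,j : u_i^0(j)\ge 0}(\lfloor u_i^0(j)\rfloor+1)$. Invoking \Cref{lem:infoccurence} I then pick $T\ge T_0$ such that every deviation performed after $T$ recurs infinitely often. After $T$, whenever $i$ deviates from $D$, every $j\in D\setminus\{i\}$ must satisfy $u_i^t(j)<0$ (otherwise a non-negative utility would be decreased); friend necessity then gives $u_i^t(D)\le 0$, and for any non-singleton coalition $C\cup\{i\}$ that $i$ joins, the restriction together with $u_i(\{i\})=0$ (which holds for $\mathit{AS}$ and $\mathit{MF}$) yields $u_i^t(C\cup\{i\})>0$, whence FN forces $C\cup\{i\}$ to contain at least one friend of $i$.

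The central lemma I would establish is that after agent $i$ performs her first non-singleton deviation at some time $s\ge T$, she never deviates again. Writing $\pi^{s+1}(i)=C\cup\{i\}$, which contains a friend of $i$ by the above, I would show by induction on $t\ge s+1$ that $\pi^t(i)$ always contains a friend of $i$. The only delicate case is when a friend $k$ of $i$ deviates out of $\pi^{t-1}(i)$ while being the sole friend of $i$ there: then SFD applies since $u_i^{t-1}(k)>0$ and all remaining members of $\pi^{t-1}(i)\setminus\{i\}$ satisfy $u_i^{t-1}\le 0$, giving $A_i(\pi^{t-1}(i),u_i^{t-1})>A_i(\pi^{t-1}(i)\setminus\{k\},u_i^{t-1})$. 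Since $u_i$ does not change when $k$ (rather than $i$) deviates, this strictly decreases $i$'s utility, contradicting the CNS condition for $k$'s deviation. The remaining cases (multiple friends present, a non-friend leaving, anyone joining, or $i$ herself deviating, which is ruled out by combining the after-$T$ property with the presence of a friend) are immediate. Thus $\pi^t(i)$ retains a friend of $i$ forever after $s$, and $i$ can never satisfy the all-strict-enemy condition required for her own deviation.

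Consequently the total number of non-singleton deviations after $T$ is bounded by $n$. Together with \Cref{lem:infoccurence}, which forces every post-$T$ deviation to recur infinitely often, this implies there is some $T'\ge T$ after which no non-singleton deviation occurs. To handle the remaining singleton deviations I would introduce the potential $M^t=|\{j\in N:|\pi^t(j)|\ge 2\}|$. A singleton deviation of $i$ from $D$ to $\{i\}$ strictly decreases $M^t$ (by $1$ if $|D|\ge 3$ and by $2$ if $|D|=2$), so $M^t$ is a non-negative integer-valued potential that strictly decreases at every singleton deviation after $T'$. Hence only finitely many singleton deviations can happen after $T'$, again contradicting \Cref{lem:infoccurence} and completing the proof.

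The main obstacle is the inductive ``sole friend cannot leave'' step, where SFD is invoked in precisely its stated form: the coalition $\pi^{t-1}(i)$ has exactly one friend of $i$, namely $k$, and all other members of $\pi^{t-1}(i)\setminus\{i\}$ satisfy $u_i^{t-1}\le 0$, which matches the hypothesis of SFD exactly. It is this alignment that explains why SFD, rather than some weaker enemy-monotonicity condition, is the right axiom for the proposition.
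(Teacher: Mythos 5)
Your proof is correct and follows essentially the same route as the paper's: friend necessity guarantees that a non-singleton deviation after the stabilization time lands the deviator in a coalition with a friend, single friend desire combined with the CNS veto keeps a friend in her coalition thereafter, and the fact that deviators eventually abandon only enemies then forbids her from ever deviating again. The only (welcome) addition is your explicit potential argument ruling out an infinite tail of singleton deviations, a case the paper's proof handles implicitly by fixing a non-singleton deviation after its time $t_0'$.
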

\begin{proof}
	Let a hedonic game $(N,u^0)$ with deviator-resentful agents be given
	where every agent $i\in N$ has a {\caf}~$A_i$ that satisfies friend necessity and single friend desire.
	Assume for the sake of contradiction that there exists an infinite sequence $(\pi^t)_{t\ge 0}$ of partitions resulting from CNS deviations of deviator-resentful agents where agents only deviate to non-singleton coalitions if they strictly prefer them to being in a singleton.
	By \Cref{lem:infoccurence}, there exists a time $t_0\ge 0$ such that every deviation performed after $t_0$ must occur infinitely often.
	This implies in particular that there is a step $t'_0\ge t_0$ such that an agent $i$ has a negative utility for all agents that $i$ leaves at some point after $t'_0$. 
	
	Now, let us fix a time step $t'\geq t'_0$ where some agent $d$ deviates into a non-singleton coalition.
	As every deviation is repeated (infinitely often) after $t_0\leq t'$, there has to be some time step where $d$ deviates again.
	In particular, let $t''$ be the smallest $t$ with $t>t'$ where $d$ deviates again after time $t'$. 
	For some time step $t$, let $C_{t}:=\pi^{t}(d)$ be the coalition of $d$ after step $t$.  
	We now examine the coalitions $C_{t'},\dots, C_{t''-1}$, i.e., the coalitions $d$ is part of between her two deviations. 
	
	As we assume that an agent only deviates into a non-singleton coalition if she strictly prefers it to being in a singleton coalition, $d$ prefers $C_{t'}$ to being in a singleton. 
	Hence, by friend necessity, there is an agent $a\in C_{t'}$ for which $d$ has positive utility at time $t' - 1$.
	By single friend desire, it follows that there is an agent $a\in C_{t''-1}$ for which $d$ has a positive utility. Indeed, otherwise, there is some $t\in [t',t''-2]$ such that there is a friend of $d$ in $C_t$ but not in $C_{t+1}$. However, this cannot happen because, due to single friend desire, $d$ strictly prefers $C_t$ to $C_{t+1}$, implying that $d$ would have vetoed the deviation taking place in time step $t+1$. 
	This implies that there is at least one agent for which $d$ has positive utility in $C_{t''-1}$. 
	However, $d$ leaves this agent in time step $t''$. 
	This contradicts our initial assumption that each agent has a negative utility for all agents they leave in some step after $t'_0\leq t'\leq t''$.
\end{proof}

As the {\cafs} $\mathit{AS}_i$ and $\mathit{MF}_i$ satisfy friend necessity and single friend desire (cf. \Cref{prop:allaxioms}), \Cref{prop:dev-resent:CNS-converges} implies that, for every ASHG and MFHG with deviator-resentful agents, some execution of the CNS dynamics converges.

\subsection{Core Stability and Individual Stability}

We now turn to the consent-based stability concepts individual stability and (strict) core stability. 
Here, the influence of deviator-resent is more profound. 

\subsubsection{Individual Rationality}
We start by considering individually rational SCS dynamics. Here, like for resentful agents, convergence is guaranteed for a wide class of CAFs.
\begin{proposition}\label{thm:dev-resent:SCS-IR-converges}
	The individually rational SCS, CS, and IS dynamics converge for deviator-resentful agents whose CAFs satisfy enemy domination. 
\end{proposition}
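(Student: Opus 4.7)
The plan is to argue by contradiction, following the blueprint of \Cref{thm:resent:NS-IR-converges}. Since every IR CS deviation is an IR SCS deviation, it suffices to treat the IR SCS and IR IS dynamics; I focus on IR SCS in detail and describe the IR IS variant at the end. Assume an infinite execution $(\pi^t)_{t\ge 0}$ and use \Cref{lem:infoccurence} to fix $t_0$ so that every deviation performed at time $t\ge t_0$ is performed infinitely often.

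The key step is to show that for every pair $(i,j)\in N^2$, agent $i$ abandons $j$ only finitely often after $t_0$. Suppose not; since deviator-resent decrements $u_i(j)$ by one at each abandonment and utilities are non-increasing, $u_i^t(j)\to -\infty$. Between two consecutive abandonments of $j$ by $i$ the two agents must be reunited in a common coalition, and a case analysis of the SCS partition-update rule shows that such a reunion requires $\{i,j\}\subseteq C^s$ for the deviating coalition $C^s$ at the reunion step $s$. By IR, $C^s$ is then IR for $i$ and contains $j$. Since $u_i^{s-1}\le u_i^0$ componentwise (utilities only decrease), enemy domination with base $u_i^0$ supplies a constant $c(u_i^0,j)$ such that once $u_i^{s-1}(j)\le c(u_i^0,j)$, no coalition containing $j$ is IR for $i$, contradicting the IR of $C^s$. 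Given the key step, each $u_i^t(j)$ is non-increasing with integer decrements and bounded below, so it stabilizes at some time $t_1\ge t_0$; from $t_1$ onward no deviation causes any abandonment. Such an abandonment-free SCS group deviation is a \emph{merger}: the update rule forces $C^s=\bigsqcup_{k\in C^s}\pi^{s-1}(k)$ (a disjoint union over the distinct old coalitions), strictly decreasing the number of coalitions in the partition. Since this number always lies in $[1,n]$, only finitely many mergers can happen after $t_1$, contradicting the infinite execution.

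For the IR IS dynamics the same scheme applies verbatim when the reunion of $i$ and $j$ is caused by $i$ deviating into $j$'s coalition, because then $i$'s new coalition (IR for $i$) contains $j$. The delicate case is a reunion caused by $j$ deviating into $i$'s coalition, since IR attaches to $j$'s new coalition (containing $i$) rather than to $i$. Here one argues as follows: either $j$ also abandons $i$ infinitely often, so that $u_j^t(i)\to -\infty$ and the symmetric enemy-domination argument applies to $j$'s IR new coalition containing $i$; or, in the complementary case, between consecutive abandonments of $j$ by $i$ agent $i$'s coalition coincides with the one $i$ most recently deviated into (and is therefore still IR for $i$), so the consent of $i$ at the reunion yields an IR coalition containing $j$ for $i$, and enemy domination again closes the case. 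An abandonment-free IS deviation must have a singleton deviator joining another coalition, which is again a merger, so the coalition-count argument terminates the dynamics. The main obstacle is exactly this ``$j$-joins-$i$'' IS case: the IR constraint does not attach to $i$ directly, and enemy domination has to be propagated either to $u_j(i)$ or through the consent of $i$ combined with the IR of $i$'s most recent deviation.
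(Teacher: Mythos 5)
Your treatment of the individually rational SCS (and hence CS) dynamics is correct and rests on exactly the mechanism the paper uses: after the time $t_0$ from \Cref{lem:infoccurence}, an agent $i$ who abandons $j$ infinitely often drives $u_i(j)$ below the enemy-domination threshold, while any reunion of $i$ and $j$ must happen through a group deviation whose formed coalition contains both of them and is individually rational for $i$ --- a contradiction. Only the termination argument differs: the paper observes that two agents who co-deviate after $t_0$ can never separate again, so every deviating coalition can only grow over time and is bounded by $n$; you instead argue that once all abandonments have ceased, every remaining SCS deviation must merge whole coalitions and therefore strictly decreases the number of coalitions. Both finishes are valid, and yours is arguably the more explicit potential argument. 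You are also right to distrust the blanket reduction of IS to SCS: the paper disposes of IS with one sentence, but for the \emph{individually rational} variants the IR requirement of an IS deviation attaches only to the single deviator, so when $j$ joins $i$'s coalition one does not automatically obtain a coalition containing $j$ that is individually rational for $i$, which is precisely what the enemy-domination argument needs.

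Your patch for that IS case, however, has a gap. In the ``complementary case'' you assert that between two consecutive abandonments of $j$ by $i$, agent $i$'s coalition coincides with the coalition $i$ most recently deviated into and is therefore still IR for $i$. This is not justified: between $i$'s own deviations, other agents may leave $i$'s coalition through their own IS deviations, and such departures neither require $i$'s consent nor preserve $i$'s utility --- if a friend of $i$ departs, $i$'s coalition can fall below individual rationality (joins are harmless, since $i$'s consent guarantees weak improvement and hence preserves IR, but departures are not). Consequently, the coalition that $j$ joins at the reunion need not be individually rational for $i$ even after $i$ consents, and enemy domination cannot be invoked for $i$ at that moment. The first branch of your case split (where $j$ also abandons $i$ infinitely often, so that the IR requirement on $j$'s own deviations does the work) is fine; the complementary branch needs a different idea, for instance an additional axiom linking $u_i(j)$ to $i$'s willingness to consent to $j$ joining, which enemy domination alone does not supply.
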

\begin{proof}
	We show the statement for SCS deviations, which implies convergence of CS and IS dynamics.
		Let a hedonic game $(N,u^0)$ with deviator-resentful agents be given
	where every agent $i\in N$ has a {\caf}~$A_i$ that satisfies enemy domination. 
	Assume for the sake of contradiction that there exists an infinite sequence $(\pi^t)_{t\ge 0}$ of partitions resulting from SCS deviations of deviator-resentful agents.
	By \Cref{lem:infoccurence}, there exists a time $t_0\ge 0$ such that every deviation performed after $t_0$ must occur infinitely often.

	We first show that if two agent $i$ and $j$ perform a core deviation together at some point after $t_0$, then $i$ cannot leave $j$ nor can $j$ leave $i$ after $t_0$. 
	We prove that $i$ cannot leave $j$ (the other case is symmetric).
	By enemy domination, there exists a constant $c(u^{t_0}_i,j)$ such that for all utility vectors $u'_i \in \mathbb Q^n$ with $u'_i(k) \le u^{t_0}_i(k)$ for all $k\in N$ and $u'_i(j)\le c(u^{t_0}_i,j)$, it holds for every $C\in \mathcal{N}_i$ with $j\in C$  that $A_i(C, u'_i) < A_i(\{i\},u'_i)$.
	
	As each deviation after $t_0$ occurs infinitely often, the fact that $i$ leaves $j$ after $t_0$ infinitely often implies that there exists a time $t_1\ge t_0$ such that agent $i$ leaves agent~$j$ for $u^{t_0}_i(j) - c(u^{t_0}_i,j)$ times between time $t_0$ and time $t_1$. 
	Since agent~$i$ is deviator-resentful, this implies that $u^{t_1}_i(j) \le c(u^{t_0}_i,j)$. 
	Additionally, deviator-resentful agents can only decrease utilities for other agents. 
	Therefore, it holds for all $t\ge t_0$ and $k\in N$ that $u^t_i(k) \le u^{t_0}_i(k)$. 
	Consequently, from the definition of $c(u^{t_0}_i,j)$ it follows that agent~$i$ cannot deviate to form  a coalition together with $j$ after $t_1$, because this deviation would not be individually rational. 
	However, such a deviation takes place because we have assumed that $i$ and $j$ deviate together at some point after $t_0$ and such a deviation is performed infinitely often. 
	This establishes that $i$ and $j$ cannot leave each other after $t_0$. 
	
	Finally, consider some time $t\geq t_0$ in which a coalition $C_t$ performs a group deviation.  
	Then, from our above observation we get that no agent that is part of $C_t$ can ever leave another agent that is part of $C_t$ after this, implying that  $\pi^{t'}(j)\subseteq C_t$ for all $j\in C_t$ and $t'\geq t$. 
	This implies that the size of $C_t$ needs to monotonically increase over time.
	Notably, this holds for all coalitions that ever performed a deviation after time $t_0$. 
	Yet the overall number of agents is bounded, resulting in a contradiction. 
\end{proof}

As $\mathit{AS}_i$ and $\mathit{MF}_i$ satisfy enemy domination (\Cref{prop:allaxioms}), \Cref{thm:dev-resent:SCS-IR-converges} implies that the individually rational SCS, IS, and CS dynamics converge in ASHGs and MFHGs for deviator-resentful agents. 

\subsubsection{Absence of Individual Rationality}

If we drop the requirement that deviations are individual rational, the picture changes.
For MFHGs, IS, CS, and SCS dynamics may cycle.
\begin{proposition}\label{prop:dev-recISycle-frac}
	The IS, CS, and SCS dynamics may cycle in MFHG for deviator-resentful agents, even when starting from the singleton partition.
\end{proposition}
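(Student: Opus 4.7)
The plan is to construct an explicit MFHG together with an infinite sequence of partitions, starting from the singleton partition, that can be interpreted as a cycling execution of each of the IS, CS, and SCS dynamics. If constructing a unified example proves too restrictive, one can provide separate constructions for the single-agent dynamics (IS) and the group dynamics (CS and SCS), since the deviation types are formally distinct.

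First, I would design the agent set and initial utilities to exploit MFHG's averaging. Because $\mathit{MF}$ violates aversion to enemies (see \Cref{ex:MFHGviolateATE}), MFHGs support cyclic preference patterns that ASHGs do not admit: adding a mildly-negative agent to a coalition already containing more severely-negative members can \emph{increase} the averaged utility. This flexibility lets us set up situations where each move is a strict improvement for the deviators even though their utilities monotonically decrease over time due to deviator-resent.

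Second, I would specify a short initial sequence of deviations that takes the system from the singleton partition into a recurring partition $\hat{\pi}$, followed by a periodic sequence of (group or single-agent) deviations cycling through a fixed list of partitions. The design principle is that each full cycle should apply the deviator-resent decrements \emph{uniformly}: every utility $u_i(j)$ that appears on the tight side of a strict-improvement inequality used in the cycle either stays constant (because $i$ never leaves $j$ during the cycle) or decreases in lockstep with the corresponding utility on the other side of the inequality. With MFHG aggregation, if all of $i$'s utilities for members of two coalitions decrease by the same amount, then the averaged coalition values shift by the same amount and the preference between the two coalitions is preserved; this is the mechanism that can make the cycle repeat indefinitely.

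The main obstacle is ensuring that the strict-improvement conditions remain valid across \emph{every} iteration of the cycle. A deviation from a singleton coalition into a non-singleton coalition requires the deviator's new utility to be strictly positive, an absolute threshold that is eroded by deviator-resent over repeated cycles. Consequently, the recurring part of the cycle must avoid having any agent in a singleton coalition (singletons may appear only in the transitional start-up phase); this typically forces the construction to use at least four or five agents and imposes restrictions on which group deviations are admissible at each step. Carefully designing the utility values and the cyclic sequence so that these constraints are simultaneously met, and then bookkeeping the per-cycle updates to verify the required IS, CS, and SCS conditions, is the most technically involved part, analogous in flavor to the verification in \Cref{th:mfhg-resent}.
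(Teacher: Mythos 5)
There is a genuine gap: the proposition is an existence claim, so its proof \emph{is} the explicit counterexample, and your proposal never produces one. You describe design principles for how such an example ought to be built (exploit the averaging in $\mathit{MF}$, keep the per-cycle resent decrements ``uniform'' so that strict-improvement inequalities are preserved, control which agents sit in singletons), but the technically hard part --- exhibiting a concrete agent set, concrete utilities, a concrete periodic sequence of partitions reachable from the singleton partition, and verifying the IS/CS/SCS conditions at every step of every iteration --- is exactly the part you defer. Without that, nothing is proved. The paper's proof consists of a six-agent game (three ``deviating'' agents $a,b,c$ and three auxiliary agents $\alpha,\beta,\gamma$) with explicit utilities, a three-step periodic sequence witnessing the CS/SCS cycle and a six-step one witnessing the IS cycle, plus short prefixes from the singleton partition.

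Two of your stated design constraints also do not match how a working construction actually operates, which suggests the plan as written would steer you wrong. First, you claim the recurring part of the cycle must contain no singleton coalitions. In the paper's example the auxiliary agents $\alpha,\beta,\gamma$ repeatedly occupy singletons during the cycle; what matters is only that (i) no \emph{deviating} agent ever moves out of a singleton during the recurring phase (so no comparison against the absolute threshold $0$ is needed --- NS-type improvements only compare two coalition values, both of which may drift negative together), and (ii) the agents granting consent never abandon anyone, so deviator-resent never erodes \emph{their} utilities and their consent (utility $1$ for the joiners) survives every iteration. Second, the preservation of preferences across cycles in the paper's example does not come from the averaged values of the two compared coalitions shifting ``by the same amount''; it comes from each deviator leaving each other agent exactly once per cycle, so that all of her variable utilities decrease in lockstep while the constant ones (e.g.\ the value $7$ for the next agent in the 3-cycle) stay put, and the relevant inequalities are checked to have enough slack. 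These are fixable misconceptions, but together with the absence of any concrete game they mean the argument, as it stands, does not establish the proposition.
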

\newcommand{\agA}{\alpha}
\newcommand{\agB}{\beta}
\newcommand{\agC}{\gamma}
\begin{proof}

	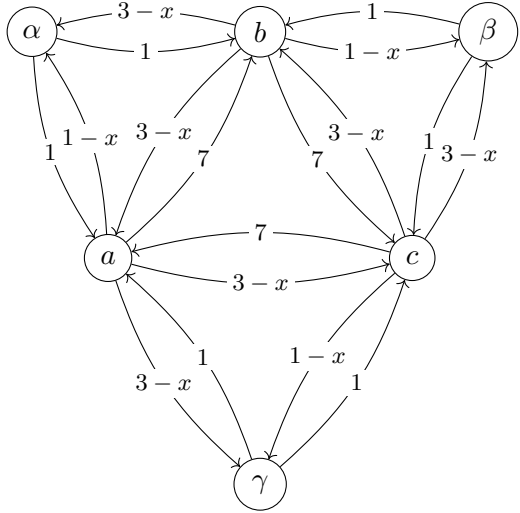
\begin{figure}
		\centering
		\begin{tikzpicture}[
		element/.style={shape=circle,draw, fill=white}
		]
		\pgfmathsetmacro\unit{4}
		\node[element] (a) at (0,0) {$a$};
		\node[element] (c) at (\unit,0) {$c$};
		\node[element] (b) at (0.5*\unit,0.75*\unit) {$b$};
		
		\node[element] (1) at (-0.25*\unit,0.75*\unit) {$\agA$};
		\node[element] (2) at (1.25*\unit,0.75*\unit) {$\agB$};
		\node[element] (3) at (0.5*\unit,-0.75*\unit) {$\agC$};
		
		\draw[->, bend right=15] (a) edge  
		node[fill=white,anchor=center, pos=0.5, inner sep =2pt] 
		{\footnotesize $7$} (b);
		\draw[->, bend right=15] (b) edge  
		node[fill=white,anchor=center, pos=0.5, inner sep =2pt] 
		{\footnotesize $3-x$} (a);
		
		\draw[->, bend right=15] (b) edge  
		node[fill=white,anchor=center, pos=0.5, inner sep =2pt] 
		{\footnotesize $7$} (c);
		\draw[->, bend right=15] (c) edge  
		node[fill=white,anchor=center, pos=0.5, inner sep =2pt] 
		{\footnotesize $3-x$} (b);
		
		\draw[->, bend right=15] (c) edge  
		node[fill=white,anchor=center, pos=0.5, inner sep =2pt] 
		{\footnotesize $7$} (a);
		\draw[->, bend right=15] (a) edge  
		node[fill=white,anchor=center, pos=0.5, inner sep =2pt] 
		{\footnotesize $3-x$} (c);
		
		\draw[->, bend right=15] (b) edge  
		node[fill=white,anchor=center, pos=0.5, inner sep =2pt] 
		{\footnotesize $3-x$} (1);
		\draw[->, bend right=15] (b) edge  
		node[fill=white,anchor=center, pos=0.5, inner sep =2pt] 
		{\footnotesize $1-x$} (2);
		
		\draw[->, bend right=15] (c) edge  
		node[fill=white,anchor=center, pos=0.5, inner sep =2pt] 
		{\footnotesize $3-x$} (2);
		\draw[->, bend right=15] (c) edge  
		node[fill=white,anchor=center, pos=0.5, inner sep =2pt] 
		{\footnotesize $1-x$} (3);
		
		\draw[->, bend right=15] (a) edge  
		node[fill=white,anchor=center, pos=0.5, inner sep =2pt] 
		{\footnotesize $3-x$} (3);
		\draw[->, bend right=15] (a) edge  
		node[fill=white,anchor=center, pos=0.5, inner sep =2pt] 
		{\footnotesize $1-x$} (1);

		\draw[->, bend right=15] (1) edge  
		node[fill=white,anchor=center, pos=0.5, inner sep =2pt] 
		{\footnotesize $1$} (b);
		\draw[->, bend right=15] (1) edge  
		node[fill=white,anchor=center, pos=0.5, inner sep =2pt] 
		{\footnotesize $1$} (a);

		\draw[->, bend right=15] (2) edge  
		node[fill=white,anchor=center, pos=0.5, inner sep =2pt] 
		{\footnotesize $1$} (b);
		\draw[->, bend right=15] (2) edge  
		node[fill=white,anchor=center, pos=0.5, inner sep =2pt] 
		{\footnotesize $1$} (c);
		
		\draw[->, bend right=15] (3) edge  
		node[fill=white,anchor=center, pos=0.5, inner sep =2pt] 
		{\footnotesize $1$} (a);
		\draw[->, bend right=15] (3) edge  
		node[fill=white,anchor=center, pos=0.5, inner sep =2pt] 
		{\footnotesize $1$} (c);

		\end{tikzpicture}
		\caption{Example from \Cref{prop:dev-recISycle-frac}. Utilities after $x$ cycles of deviations. All missing edges indicate that the utilities of the agents for each other are zero.}\label{fig:dev-recISycle-frac}
	\end{figure}
	Consider the MFHG with agent set $N = \{a,b,c,d,e,f\}$ and utilities as depicted in \Cref{fig:dev-recISycle-frac}, where the number of an arc from agents $i$ to $j$  describes the utility that $i$ has for $j$. The initial utilities result from setting $x$ to $0$ in \Cref{fig:dev-recISycle-frac}. 
	Consider the following infinite sequence of partitions. For $n\geq 0$,
	\begin{itemize}
		\item $\pi^{3n} = \{\{\agA,a,b\},\{\agB\},\{\agC,c\}\}$,
		\item $\pi^{3n+1} = \{\{\agA,a\},\{\agB,b,c\},\{\agC\}\}$, and
		\item $\pi^{3n+2} = \{\{\agA\},\{\agB,b\},\{\agC,a,c\}\}$.
	\end{itemize}
	Note that the utilities of agents after $x$ executions of this cycle of three deviations are shown in \Cref{fig:dev-recISycle-frac}. 
	Observing that for $k\ge 1$, $\pi^{k-1}$ leads to $\pi^k$ by means of a (S)CS deviation completes the counterexample.
	Note that we can also modify the dynamics to start in the singleton partition, by inserting the two partitions $\{\{\agA\},\{\agB\},\{\agC\}, \{a\},\{b\},\{c\}\}$ and $\{\{\agA,a,b\},\{\agB\},\{\agC\}, \{c\}\}$ in the beginning of the dynamics.

	Using the same initial utilities, for IS deviations we have the following cycling sequence of partitions. For $n\geq 0$, 
	\begin{itemize}
		\item $\pi^{6n} = \{\{\agA,a,b\},\{\agB\},\{\agC,c\}\}$,
		\item $\pi^{6n+1} = \{\{\agA,a,b\},\{\agB,c\},\{\agC\}\}$,
		\item $\pi^{6n+2} = \{\{\agA,a\},\{\agB,b,c\},\{\agC\}\}$,
		\item $\pi^{6n+3} = \{\{\agA\},\{\agB,b,c\},\{\agC,a\}\}$,
		\item $\pi^{6n+4} = \{\{\agA\},\{\agB,b\},\{\agC,a,c\}\}$, and
		\item $\pi^{6n+5} = \{\{\agA,b\},\{\agB\},\{\agC,a,c\}\}$.
	\end{itemize}
	Again the utilities of agents after $x$ executions of this cycle of six partitions are shown in  \Cref{fig:dev-recISycle-frac}. 
	We again observe that for $k\ge 1$, $\pi^{k-1}$ leads to $\pi^k$ by means of a IS deviation, which completes the counterexample.
	Again we can also modify the example to start in a singleton partition by adding the three partitions $\{\{\agA\},\{\agB\},\{\agC\}, \{a\},\{b\},\{c\}\}$, $\{\{\agA,a\},\{\agB\},\{\agC\}, \{b\}, \{c\}\}$, $\{\{\agA,a,b\},\{\agB\},\{\agC\}, \{c\}\}$ in the beginning of the dynamics. 
\end{proof}

By contrast, IS dynamics are still always guaranteed to converge in ASHG.
Note that this is the only contrast between ASHGs and MFHGs proven in this paper (in all other cases we either have the same result for both classes, or a result for one and an open question for the other).
\begin{theorem}
	In ASHGs, the IS dynamics converges for deviator-resentful agents. 
\end{theorem}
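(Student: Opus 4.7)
The plan is to argue by contradiction. Assume an infinite execution $(\pi^t)_{t \ge 0}$ of the IS dynamics exists and apply \Cref{lem:infoccurence} to obtain a time $t_0$ after which every deviation that is performed is performed infinitely often; denote by $\mathcal P$ this (finite) set of recurring deviations. The key feature of ASHGs that distinguishes them from MFHGs in the deviator-resent setting is that only the deviator's utility vector ever changes on a deviation, so for every non-deviator $i$ we have $u_i^t = u_i^0$ throughout. In particular, the set $\{j : u_j^0(k) \ge 0\}$ of agents willing to accept $k$ into their coalition via an IS deviation is time-invariant.

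My first substantive step is a structural lemma: for every agent $k$ and every $j \in L_k$, where $L_k$ denotes the set of agents that $k$ abandons in some deviation in $\mathcal P$, we have $u_j^0(k) \ge 0$. Indeed, since $k$ leaves $j$ infinitely often, $k$ and $j$ must be in a joint coalition infinitely often; co-membership arises either because $k$ joins a coalition containing $j$, which requires $u_j^0(k) \ge 0$, or because $j$ joins $k$'s coalition, which requires $u_k^t(j) \ge 0$ at that moment. Deviator-resent drives $u_k^t(j) \to -\infty$ for every $j \in L_k$, eventually excluding the second mechanism, so the first must be used infinitely often and hence $u_j^0(k) \ge 0$. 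Symmetrically, $k \in L_j$ implies $u_k^0(j) \ge 0$.

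With these facts, I would next rule out the ``non-intersecting'' situation. Fix any agent $k$ whose recurrent deviations include some $(A, B) \in \mathcal P$ with $A \neq \{k\}$. I claim $B \setminus \{k\}$ must intersect $L_k$: otherwise, after $k$ next reaches $B$, any new arrival $j$ to $k$'s coalition would need $k$'s approval $u_k^t(j) \ge 0$, which eventually fails on $L_k$, so $k$'s coalition stays disjoint from $L_k$; but $A \setminus \{k\} \subseteq L_k$, so $k$ could never return to $A$, contradicting the recurrence of $(A, B)$. The main step is then to use additivity to express the IS conditions of $k$'s recurring deviations as a linear system in the repeat counts $n_l$ of all $(A_l, B_l) \in \mathcal P$ with $k$ as the deviator: setting $f_{i,l} := |B_i \cap A_l \setminus \{k\}| - |A_i \cap A_l \setminus \{k\}|$ and writing $u_k^t(j) = u_k^0(j) - \sum_l [j \in A_l \setminus \{k\}]\, n_l$, the IS inequality for $(A_i, B_i)$ becomes $\sum_l n_l f_{i,l} < u_k^0(B_i) - u_k^0(A_i)$. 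Since $A_i \cap B_i = \{k\}$ we have $f_{i,i} = -|A_i \setminus \{k\}| \le -1$, and by construction every $n_l$ tends to infinity; I would argue via a Farkas-type analysis that no trajectory of non-negative integer counters with all components tending to infinity can simultaneously satisfy all these strict inequalities, yielding the desired contradiction.

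The principal obstacle is this linear-algebraic argument in the intersecting case: when several recurring deviations of $k$ have overlapping leave sets, the coordinates of $u_k^t$ decrease at coupled rates, and one must rule out every ``balanced'' trajectory of repeat counts in which negative diagonal contributions are offset by positive off-diagonal ones. Additivity of ASHGs is essential to reducing the conjunction of IS conditions to a linear system — and this is precisely where the ASHG/MFHG dichotomy emerges, since the coalition-size normalization in MFHGs destroys the linear structure and permits the cycle of \Cref{prop:dev-recISycle-frac}.
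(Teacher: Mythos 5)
Your setup (applying \Cref{lem:infoccurence}, noting that under deviator-resent only the deviator's own utility vector changes, and the structural lemma that $j\in L_k$ forces $k$ to rejoin coalitions containing $j$ infinitely often) is sound and close in spirit to the paper's ``first observation.'' But the core of your argument --- deriving a contradiction from the per-agent system $\sum_l f_{i,l}\,n_l < c_i$ via a Farkas-type analysis --- has a genuine gap, and I do not believe it can be repaired in that form. Note that $f_{i,i}\le -1$ works \emph{in favour} of feasibility: the constraint for deviation $i$ is an upper bound, so growing $n_i$ only makes it slacker. Infeasibility could only come from large positive off-diagonal entries $f_{i,l}=|B_i\cap A_l\setminus\{k\}|-|A_i\cap A_l\setminus\{k\}|$, and nothing in your reduction forces these to be positive; for instance, whenever the joined coalitions $B_i$ are disjoint from the other leave-sets $A_l$, every $f_{i,l}$ with $l\neq i$ is $\le 0$ and all constraints hold automatically for all sufficiently large counts. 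The reduction to repeat counts forgets exactly the information that drives convergence, namely the \emph{order and adjacency} of $k$'s deviations. (Two smaller problems: the claim that the set of agents willing to accept $k$ is time-invariant is false, since $u_j^t(k)$ drops each time $j$ abandons $k$; and your argument that $B\setminus\{k\}$ must meet $L_k$ only controls arrivals to $k$'s coalition, ignoring that $k$ can re-enter coalitions containing $L_k$-agents through her own subsequent deviations.)

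The missing idea is the paper's ``second observation'': after $t_0$, a deviator $d$'s coalition does not change \emph{at all} between two consecutive deviations of $d$. This follows from the first observation used twice --- no member of the coalition $d$ joined can ever abandon $d$ (they must keep approving $d$'s recurring join while their utility for $d$ would tend to $-\infty$), and any agent who joins $d$ in the interim must depart again before $d$'s next move, since otherwise $d$ would abandon an agent whose join $d$ approved, which is the same contradiction. Since $u_d$ is untouched between $d$'s own deviations, $d$'s utility for her coalition just before her next deviation equals her utility just after her previous one, so each IS deviation strictly increases this quantity; as utilities only decay under deviator-resent and take values in a discrete set bounded above, this is impossible. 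Your linear system cannot see this chaining constraint $A_{i_{m+1}}=B_{i_m}$, which is precisely what makes the telescoping work.
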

\begin{proof}
	Let an ASHG $(N,u^0)$ be given. 
	For the sake of contradiction assume that there exists an infinite sequence $(\pi^t)_{t\ge 0}$ of partitions resulting from IS deviations of deviator-resentful agents.
	For each $t\geq 0$, let $d^t$ be the agent deviating in step $t$ from $\pi^{t-1}(d^t)$ to $\pi^{t}(d^t)$. 
	By \Cref{lem:infoccurence}, there exists $t_0\geq 0$ such that every deviation performed after $t_0$ is performed infinitely often.
	
	Note that for each $t\geq t_0$ it holds that $d^t$ cannot be left by an agent $j\in \pi^{t}(d^t)$ after $t_0$, as otherwise $j$ will leave $d^t$ infinitely often and thus, as utilities are only decreasing, at some point will no longer approve the join of $d^t$, as she derives negative utility from $d^t$. 
	We refer to this as the first observation. 
	
	Fix some $t\geq t_0$ and let $d:=d_t$ and $C:=\pi^{t}(d^t)$ (note that $d$ joins $C\setminus \{d^t\}$ at step $t$ ). As our second observation we now show that there is some $t'> t$ with $d^{t'}=d$ and $\pi^{t'-1}(d^t) = C$. 
	
	From our choice of $t_0$ it follows that there is some $t'\geq t$ where $d$ performs a deviation for the next time. 
	Assume now, for the sake of contradiction that there is some $t''$ with $t'> t''> t$ where $\pi^{t''}(d)\neq \pi^{t''-1}(d)$ and select the smallest such $t''$. 
	If $\pi^{t''}(d)$ is changed because an agent $j$ left it, then, by our choice of $t''$, it follows that $j\in C$, which contradicts our first observation. 
	If $\pi^{t''}(d)$ is changed because an agent $j$ joined it, then it needs to hold that $j$ leaves the coalition of $d$ again before $t'$:
	Otherwise, $d$ leaves $j$ which again contradicts the first observation as $d$ needs to approve the join of $j$. Thus, the second observation follows. 
	
	However, the second observation implies that for $d$ it holds that $u_d^{t}(\pi^t(d))=u_d^{t'-1}(\pi^{t'-1}(d))$.
	In the next step, $d$ performs an IS deviation and thus increases her utility, i.e., $u_d^{t'}(\pi^{t'}(d))>u_d^{t'-1}(\pi^{t'-1}(d))$. 
	Afterwards, the second observation can be applied again until $d$ performs the next deviation. 
	Thus, $d$'s utility is strictly increasing, however, as utilities are initially bounded and resent can only cause their decay, this leads to a contradiction. Hence, IS dynamics have to converge. 
\end{proof}

We have seen that deviator-resent can be powerful force for stability, as CS dynamics with individual rational deviations in MFHGs and ASHGs and IS dynamics in ASHGs always converge. 
However, deviator-resent is not sufficient to guarantee convergence of IS and general CS dynamics in MFHGs, yielding a different behavior of ASHGs and MFHGs for IS dynamics.
Notably, we did not prove any such contrasts in our analysis of resent and appreciation.
Overall, our results indicate that deviator-resent has clear ramifications on convergence guarantees, yet the general picture seems to be slightly more nuanced than for resent or appreciation.
In particular, we were not able to settle whether SCS or CS dynamics are guaranteed to converge in ASHG for deviator-resentful agents without the individual rationality assumption, leaving this as an open question. 

\section{Simulations}\label{app:simulations}

In this section, we analyze by means of simulations how resent and appreciation influence dynamics in ASHGs.
We focus on NS dynamics, as in randomly sampled ASHGs the IS dynamics typically converge quickly even without resent or appreciation (implying that they have only a small effect).
Moreover, executing core dynamics is computationally too costly, as already checking whether an outcome is core stable is computationally intractable. 
\subsection{Setup}
We mostly focus on ASHGs with an agent set $\ag$ containing $n=50$ agents,\footnote{We analyze the influence of the number of agents in \Cref{sub:varying_n}.} and sample their utilities using one of the following two models: 
\begin{description}
	\item[Uniform] For two agents $a, b\in \ag$ with $a\neq b$, we sample $u_a(b)$, i.e., $a$'s value for $b$, by drawing a random integer between $-100$ and $100$.  
	\item[Gaussian] For each agent $a\in \ag$, we sample her \emph{base qualification} $\mu_a$ by drawing a random integer between $-100$ and $100$. 
	For two agents $a, b\in \ag$ with $a\neq b$, we sample $u_a(b)$ by drawing an integer from the Gaussian distribution with mean $\mu_b$ and standard deviation $10$.\footnote{We analyze the influence of the chosen standard deviation in \Cref{sub:varying_sig}.} 
\end{description}
Our dynamics start with the singleton partition. 
Subsequently, in each step, we compute all possible NS deviations.
If there are no NS deviations, we stop; otherwise, we sample one NS deviation uniformly at random  and execute it. 
To be able to vary the ``intensity''
of the resentful/appreciative perception, we introduce a change coefficient $c$, which we typically set to $1$:
For resentful agents, if an agent $a$ deviates from a coalition $C'$ to a coalition $C$, then we reduce the utility that agents from $C'$ have for $a$ by $c$; for appreciative agents, we increase the utility that agents from $C$ have for $a$ by $c$.
We also examine what happens if agents are both resentful and appreciative and both of the above described effects are present. 
In this case we speak of resentful-appreciative agents.
For all our simulations, we set a time-out of $\num{100000}$, i.e., after $\num{100000}$ steps we report that the dynamics did not converge.\footnote{We want to remark that this does not necessarily imply that no NS stable outcome exists in such a game or that there is no path to stability for the dynamics, but rather that selecting NS deviations randomly was not sufficient to ensure convergence (in a reasonable time).}
\begin{figure*}[t!]
	\centering
	\begin{subfigure}[t]{0.4\textwidth}
		\centering
		\resizebox{0.8\textwidth}{!}{
\begin{tikzpicture}[every plot/.append style={line width=3.5pt}]

\definecolor{color0}{rgb}{0.12156862745098,0.466666666666667,0.705882352941177}
\definecolor{color1}{rgb}{1,0.498039215686275,0.0549019607843137}
\definecolor{color2}{rgb}{0.172549019607843,0.627450980392157,0.172549019607843}

\begin{axis}[
legend cell align={left},
legend style={fill opacity=0.8, draw opacity=1, text opacity=1, at={(0.03,0.97)}, anchor=north west, draw=white!80!black,font=\LARGE},
tick align=outside,
tick pos=left,
x grid style={white!69.0196078431373!black},
xmin=-0.5, xmax=10.5,
xtick style={color=black},
y grid style={white!69.0196078431373!black},
ymin=-2995.527, ymax=62906.067,
ytick style={color=black},
ylabel={average number of steps until convergence},
xlabel={change coefficient $c$},
every tick label/.append style={font=\LARGE}, 
label style={font=\LARGE}
]
\addplot [semithick, color0]
table {%
1 59910.54
2 31032.73
3 21365.7
4 16490.6
5 13582.95
6 11644.69
7 10227.23
8 9195.22
9 8322.71
10 7706.16
};
\addlegendentry{resent}
\addplot [semithick, color1]
table {%
1 14600.44
2 7340.48
3 4796.83
4 3525.08
5 2744.03
6 2251.7
7 1881.24
8 1598.62
9 1340.96
10 1180.05
};
\addlegendentry{resent+apprec}
\addplot [semithick, color2]
table {%
1 4243.22
2 2235.57
3 1517.93
4 1150.2
5 932.43
6 817.8
7 706.35
8 634.38
9 569.47
10 522.65
};
\addlegendentry{apprec}
\end{axis}

\end{tikzpicture}}
		\caption{Uniform utilities}\label{fig:change-coeff-random}
	\end{subfigure}\qquad \qquad 
	\begin{subfigure}[t]{0.4\textwidth}
		\centering
		\resizebox{0.9\textwidth}{!}{
\begin{tikzpicture}[every plot/.append style={line width=3.5pt}]

\definecolor{color0}{rgb}{0.12156862745098,0.466666666666667,0.705882352941177}
\definecolor{color1}{rgb}{1,0.498039215686275,0.0549019607843137}
\definecolor{color2}{rgb}{0.172549019607843,0.627450980392157,0.172549019607843}

\begin{axis}[
legend cell align={left},
legend style={fill opacity=0.8, draw opacity=1, text opacity=1, at={(0.03,0.97)}, anchor=north west, draw=white!80!black,font=\LARGE},
tick align=outside,
tick pos=left,
x grid style={white!69.0196078431373!black},
xmin=-0.5, xmax=10.5,
xtick style={color=black},
y grid style={white!69.0196078431373!black},
ymin=193.8845, ymax=1378.0455,
ytick style={color=black},
ylabel={average number of steps until convergence},
xlabel={change coefficient $c$},
every tick label/.append style={font=\LARGE}, 
label style={font=\LARGE},
ylabel shift = 5pt
]
\addplot [semithick, color0]
table {%
1 1020.64
2 681.38
3 568.86
4 517.1
5 492.17
6 477.12
7 463.14
8 448.99
9 447.14
10 442.01
};
\addlegendentry{resent}
\addplot [semithick, color1]
table {%
1 763.09
2 512.03
3 411.68
4 359.45
5 331.4
6 301
7 287.25
8 275.95
9 258.44
10 247.71
};
\addlegendentry{resent+apprec}
\addplot [semithick, color2]
table {%
1 1324.22
2 745.99
3 549.95
4 454.22
5 390.16
6 352.34
7 319.94
8 294.55
9 276.99
10 259.03
};
\addlegendentry{apprec}
\end{axis}

\end{tikzpicture}}
		\caption{Gaussian utilities}\label{fig:change-coeff-gauss}
	\end{subfigure}%
	\caption{Influence of the change coefficient on the average convergence time} \label{fig:change-coeff}
\end{figure*}
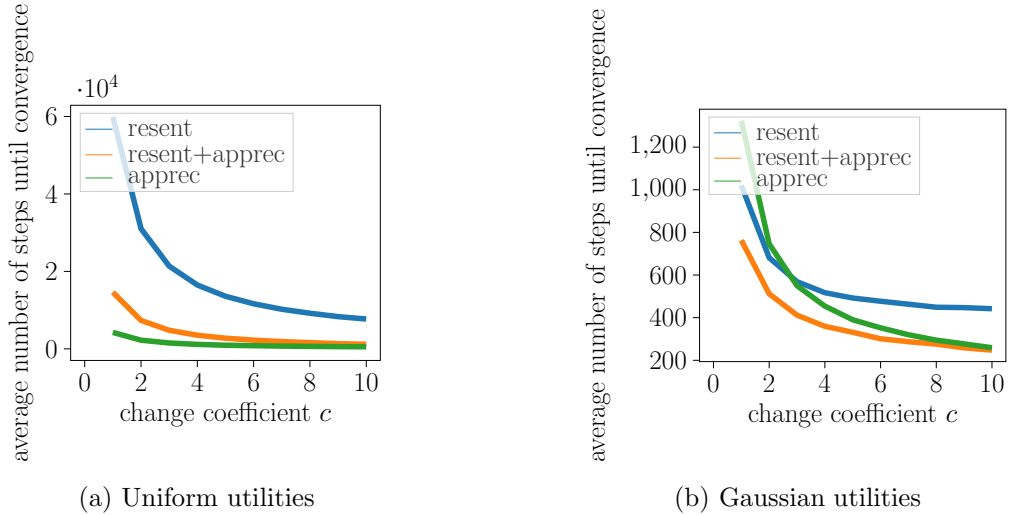
\subsection{Convergence Time}
We start by analyzing the influence of resent and of appreciation on how fast NS dynamics converge.
For this, we sampled $100$ games with $50$ agents and for each recorded the number of steps until convergence.

\paragraph{Uniform Utilities}
In \Cref{fig:change-coeff-random}, we visualize the results for uniform utilities. 
The NS dynamics for agents that are neither resentful nor appreciative did not converge in any of our sampled games (within the limit of $\num{100000}$ steps). 
By contrast, even for a change coefficient $c=1$, NS dynamics converged in all games for resentful or appreciative agents.
However, there is a clear difference between these two: 
For resentful agents, the average number of steps until convergence is $\num{59910}$ for $c=1$, whereas for appreciative agents the NS dynamics converges much faster (for $c=1$ the average convergence time is $\num{4243}$).
Increasing $c$ to $4$, for resentful agents and for appreciative agents, the average convergence time roughly quarters, while increasing $c$ to $10$ only decreases the time by an additional factor of two. 
In sum, appreciation seems more helpful to establish fast convergence than resent in case of uniform utilities.
Nevertheless, for both concepts, the number of steps until convergence is quite large (compared to the number of agents).
While increasing the change coefficient leads to faster convergence the decrease in convergence time is particularly strong for smaller values of the coefficient, indicating that NS dynamics need some time to find the ``right'' deviations somewhat independent of the value of the change coefficient. 

In contrast to our theoretical analysis, we also consider the case of resentful-appreciative agents. 
Compared to appreciative agents, adding resent leads to a substantial increase of the convergence time by a factor of around $2.5$ (while for resentful agents, adding appreciation still leads to faster convergence).
While this slower convergence for resentful-appreciative agents compared to appreciative agents may be surprising at first glance, recall the intuitive justifications why resent and appreciation contribute to a faster convergence.
For appreciative agents, utilities only increase over time, whereas for resentful agents utilities only decrease over time. 
Thus, if we combine the two, it is in principle possible that some valuations that increase for appreciative agents stay constant for resentful-appreciative agents, and the two effects can cancel out each other. 
For uniform utilities, this effect seems to be stronger than the additional ``stability force'' established by resent.

\paragraph{Gaussian Utilities}
In \Cref{fig:change-coeff-gauss}, we visualize the result of our first set of simulations for Gaussian utilities. 
In this case, the NS dynamics for agents that are neither resentful nor appreciative converged in $3$ of the $100$ games.  
In contrast, for resentful or for appreciative agents, the NS dynamics converged in all games.
In particular, convergence was much quicker (in at most $\num{2000}$ steps) than under uniform utilities, indicating that ASHGs under Gaussian utilities seems to facilitate reaching stable states compared to uniform utilities. 
Examining the results in more detail, the difference between resentful and appreciative agents is less profound here than for uniform utilities.
While resent leads to faster convergence for $c<3$, appreciation is more powerful for $c\geq 3$. 
Considering the influence of the change coefficient, for resentful agents, we again see the trend that increasing the change coefficient has a strong effect for smaller $c$ but that this effect becomes less strong for larger $c$. 
On the other hand, for appreciative agents, the relation between $c$ and the convergence time is rather linear. 
Moreover, in contrast to uniform utilities, resent and appreciation seem to  not ``cancel out'' each other.
For resentful-appreciative agents, NS dynamics converge faster than for either of the two separately.

\begin{table*}[t!]\centering
	\caption{Properties of different dynamics and the produced outcomes for different ASHGs. For an explanation of the table, we refer to the beginning of  \Cref{subsec:structural}.}\label{table:experiment}
	
	\begin{adjustbox}{max width=\textwidth}
		\begin{tabular}{@{}llllllllllllll@{}}
			\toprule
			& 
			\multicolumn{1}{c}{} & 
			\phantom{a}&
			\multicolumn{3}{c}{coalition sizes} & 
			\phantom{a}& 
			\multicolumn{2}{c}{stability in original} & 
			\phantom{a} & 
			\multicolumn{3}{c}{utilities} 
			\\ 
			\cmidrule{4-6} 
			\cmidrule{8-9}
			\cmidrule{11-13}
			& steps && number & avg. size & max size && \# ag. IR viol & \# ag. NS dev && avg. util & avg. change & pos. vals 
			\\
			\midrule
			\makebox[-1pt][l]{\textbf{Uniform utilities $\mathbf{n=50}$} }\\
			resent & 60055.0 && 50 & 1.0 & 1 && 0 & 50 && 0.0 & -51.25 & 0.0  \\
			apprec & 4309.0 && 2.74 & 19.31 & 42.57 && 18.7 & 22.21 && 589.39 & 12.8 & 0.56 \\
			resent+apprec & 15261.0 && 5.26 & 9.73 & 18.43 && 2.33 & 10.15 && 237.08 & -0.13 & 0.51 
			\\
			\midrule
			\makebox[-1pt][l]{\textbf{Gaussian utilities $\mathbf{n=50}$} }\\
			resent & 968.0 && 25.74 & 1.99 & 25.2 && 0 & 5.07 && 626.3 & -0.42 & 0.49  \\
			apprec & 1226.0 && 21.69 & 2.36 & 25.19 && 3.67 & 5.11 && 638.79 & 0.6 & 0.51 \\
			resent+apprec & 694.0 && 24.64 & 2.07 & 25.28 && 0.61 & 4.66 && 637.34 & -0.12 & 0.5 \\
			\midrule
			\midrule
			\makebox[-1pt][l]{\textbf{Gaussian utilities $\mathbf{n=25}$} }
			\\
			resent & 282.0 && 13.62 & 1.91 & 12.36 && 0 & 2.11 && 291.49 & -0.49 & 0.48  \\
			apprec & 438.0 && 11.92 & 2.2 & 12.29 && 1.76 & 2.24 && 296.59 & 0.77 & 0.5 \\
			resent+apprec & 208.0 && 13.32 & 1.95 & 12.42 && 0.18 & 1.97 && 297.54 & -0.12 & 0.49 \\
			\midrule
			\makebox[-1pt][l]{\textbf{Gaussian utilities $\mathbf{n=25}$ with stable outcome (13 games)} }
			\\
			original & 100.38 && 12.46 & 2.15 & 13.54 && 0 & 0 && 344.26 & 0.0 & 0.52 \\
			resent & 90.69 && 12.46 & 2.15 & 13.54 && 0 & 0 && 343.14 & -0.16 & 0.52 \\
			apprec & 92.77 && 12.31 & 2.24 & 13.54 && 0.15 & 0.15 && 348.78 & 0.29 & 0.52 \\
			resent+apprec & 100 && 12.46 & 2.15 & 13.54 && 0 & 0 && 347.82 & -0.05 & 0.52\\
			\bottomrule
		\end{tabular}
	\end{adjustbox}
\end{table*}

\subsection{Structure of Outcomes and Comparison to Base Game} \label{subsec:structural}
We now take a closer look at the outcomes and utility functions produced by NS dynamics for resentful and/or appreciative agents. 
Again, we generated $100$ games with $n=50$ agents each for uniform and Gaussian utilities.
In addition, we generated $100$ games with $n=25$ agents and Gaussian utilities (as the original NS dynamics converge more often in such games).
In all games, we set the change coefficient $c$ to $1$.
\Cref{table:experiment} summarizes the results of our simulations.
All values in the table are averaged from the respective $100$ games.
In the last part of the table, we only consider the $13$ games with $n=25$ and Gaussian utilities in which an NS dynamics in the original game converged within \num{100000} steps. 
For reference, we depict the number of steps the dynamics needed to converge in the first column.
We analyze the structure of the produced NS outcomes as follows. Columns three to six consider the produced coalitions, that is, the number of coalitions and their average and maximum size. 
The next two columns concern the outcome's degree of stability with respect to the original utilities, where we record the number of agents violating individual rationality and possessing an NS deviation, respectively.
The last three columns concern the change of the utility profile, that is, the average utility in the outcome partition with respect to the final utilities, the average change of each entry of the utility function comparing the initial and final utilities, and the fraction of pairs of friends with respect to the final utilities, that is tuples $(a,b)\in \ag^2$ for which $u_a(b) > 0$.\footnote{Note that, in both utility models, this value is on average $0.5$ for the initial utilities.}

\paragraph{Uniform Utilities}
We first focus on uniform utilities. Recall that our dynamics need many steps until reaching convergence. 
Therefore, it is not surprising that the produced outcomes are quite ``degenerated'' for both resentful agents and appreciative agents. 
For resentful agents, the produced outcomes consist only of singleton coalitions in all games, which means that agents have left each other sufficiently often to ensure that all pairwise utilities are non-positive.
This is also reflected by the facts that all agents have NS deviations with respect to their original utilities, and that on average the valuations of agents changed by $51.25$. 
Overall, the produced outcomes have little connection to the original game and simply exploit that all utilities become negative at some point for resentful agents.

By contrast, for appreciative agents, there is typically one large coalition containing $40$ or more agents together with one or two small coalitions. 
Indeed, this is also reflected by the observations that the average number of coalitions is $2.74$ and the average maximum size is $42.57$. 
The typical run of an NS dynamics for appreciative agents here can be described in two phases. First, agents increase their utility for each other by deviating between smaller coalitions (where some agents, which are negatively valued by many others, are not joined, which often leads to them being part of small coalitions in the final outcome).  
Subsequently, in a second phase, agents already possess a generally high utility level, and thus tend to favor large coalitions (even when having a negative utility for some of the agents in the coalition).
Notably, it does not happen that eventually all utilities between pairs of agents are positive, and in this sense, the behavior of appreciative agents is not the contrary of the behavior of resentful agents. 
In fact, slightly counterintuitively, only $56\%$ of agent pairs have a positive evaluation after convergence of the dynamics.
Consequently, agents (from the large coalition) often dislike other coalition members: 
On average, an agent only values $59\%$ of her coalition members positively. 
Nevertheless, the relationship of the produced outcome and the agent's original utilities is still quite low with $18.7$ agents for which individual rationality is violated and $22.21$ agents having an NS deviation.

For resentful-appreciative agents, the produced outcomes are in some sense between the two extremes for resentful agents and for appreciative agents: 
Typically, several medium-size coalitions form (the average number of coalitions is $5.26$ and the maximum size is on average $18.43$). 
Further, the average change of the utility values is $-0.13$ and only $51\%$ of agent pairs have a positive evaluation, implying that the utility changes caused by resent and by appreciation cancel out each other from an aggregated perspective.
However, on an individual level, utilities still change quite drastically, as the absolute difference between the initial utilities values and the values at the end of the dynamic is on average $12.15$.
Notably, the outcome produced by resentful-appreciative agents is also in some sense less ``degenerated'' as for the the two separately: 
It consists of medium-size coalitions, utilities are structurally more similar to the initial utilities, and most importantly, the outcome is closer to stability in the original game (with only $2.33$ agents for which individual rationality is violated and only $10.15$ agents having an NS deviation).

\paragraph{Gaussian Utilities}

We now turn to Gaussian utilities, where the dynamics converge much quicker than for uniform utilities. 
Moreover, the outcomes produced by our three dynamics are quite similar, which follows the intuition that the final utility profiles remain quite similar after the execution of ``few'' steps.
In general, NS outcomes produced by our dynamics typically consist of one large coalition containing roughly half of the agents (these are usually the agents with positive ground qualification), while other agents are placed into coalitions of size one or two. 

Let us first focus on the second and third part of \Cref{table:experiment} presenting the results of $100$ games with $n=50$ and $n=25$ agents, respectively.
First, the outcomes for appreciative agents typically contain fewer coalitions than for resentful agents. 
Interestingly, this is not achieved because of the size of the ``large'' coalition but due to a larger average size of the many small coalitions. 
For resentful-appreciative agents, the produced outcomes are structurally similar to the ones for resentful agents with a slightly smaller number of coalitions (achieved by an increase of the size of the ``large'' coalition).
Second, for all three types of dynamics the produced outcomes are much closer to being stable in the original game than for uniform utilities (only around $10\%$ of the agents have an NS deviation). 
In particular, for resentful-appreciative agents, the produced outcomes are closest to stability in the original game. 
Lastly, considering the average utility of agents in the produced outcome, the three perception types produce quite similar results: 
Naturally, for appreciative agents, the average utility of agents in the produced outcome is highest.
However, in both other dynamics, agents are only slightly less happy.

In the fourth part of the table, we depict statistics concerning the $13$ of our $100$ games with $n=25$ agents having Gaussian utilities for which an original NS dynamic converged (in the time limit). 
In the first line, we show properties of the outcomes produced by the original dynamics. 
The NS outcomes produced by the original NS dynamics are structurally quite similar to the ones shown in the third part of the table, indicating that the outcomes produced by our three dynamics on games with $25$ or $50$ agents having Gaussian utilities are quite ``natural'' and not degenerated. 
The similarities become even more profound when comparing the outcomes produced by the original dynamics on the selected $13$ games to the outcomes produced by our three types of dynamics on the same $13$ games: 
For resentful and resentful-appreciative agents,
the produced outcomes are very similar or even identical to the outcome produced by the original NS dynamics in all $13$ games (and are in particular always stable in the original games). 
For appreciative agents, the ``large'' coalition in the outcome is typically quite similar or identical to the ``large'' coalition produced by the original dynamics; however, sometimes fewer small coalitions are produced (in particular, some of the produced outcomes are not stable in the original game). 
Concerning the average utility in the final partition, resentful agents are only marginally less happy than in the original dynamics, indicating that only very few agents end up in a coalition with an agent they left at some point, whereas the average utility is slightly higher for appreciative agents and resentful-appreciative agents.

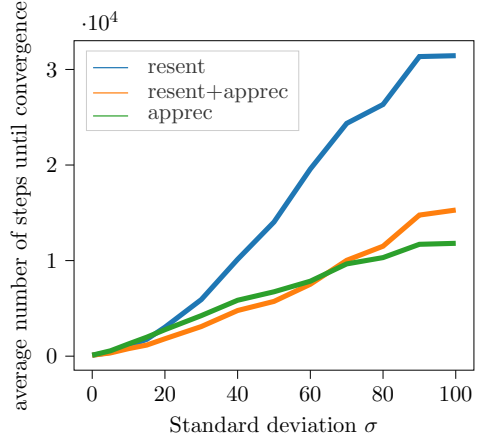
\begin{figure}[t]
	\centering
	\resizebox{0.4\textwidth}{!}{
\begin{tikzpicture}[every plot/.append style={line width=2.5pt}]

\definecolor{color0}{rgb}{0.12156862745098,0.466666666666667,0.705882352941177}
\definecolor{color1}{rgb}{1,0.498039215686275,0.0549019607843137}
\definecolor{color2}{rgb}{0.172549019607843,0.627450980392157,0.172549019607843}

\begin{axis}[
legend cell align={left},
legend style={fill opacity=0.8, draw opacity=1, text opacity=1, at={(0.03,0.97)}, anchor=north west, draw=white!80!black},
tick align=outside,
tick pos=left,
x grid style={white!69.0196078431373!black},
xmin=-5, xmax=105,
xtick style={color=black},
y grid style={white!69.0196078431373!black},
ymin=-1466.9635, ymax=33008.2135,
ytick style={color=black},
ylabel={average number of steps until convergence},
xlabel={Standard deviation $\sigma$}
]
\addplot [semithick, color0]
table {%
0 102.22
5 389.8
10 1079.06
15 1757.62
20 3031.26
30 5920.41
40 10132.1313131313
50 14034.96
60 19584.1612903226
70 24355.7922077922
80 26335.0576923077
90 31348.1081081081
100 31441.16
};
\addlegendentry{resent}
\addplot [semithick, color1]
table {%
0 100.09
5 337.73
10 788.25
15 1161.13
20 1821.33
30 3098.81
40 4776.37
50 5730.03
60 7523.81
70 10029.04
80 11505.55
90 14762.68
100 15290.66
};
\addlegendentry{resent+apprec}
\addplot [semithick, color2]
table {%
0 101.61
5 559.99
10 1282.87
15 1985.18
20 2770.62
30 4241.53
40 5850.48
50 6736.6
60 7846.21
70 9652.96
80 10314.18
90 11704.36
100 11807.44
};
\addlegendentry{apprec}
\end{axis}

\end{tikzpicture}}
	\caption{Influence of the standard deviation of Gaussian utilities on the average convergence time. We only take into account dynamics converging within the time limit of $\num{100000}$ steps. For resentful agents, the NS dynamics did not converge in $7/23/48/63/75$ games for a standard deviation of $60/70/80/90/100$, respectively (for appreciative agents and for resentful-appreciative agents the dynamics always converged).} \label{fig:varying-sigma}
\end{figure}

\subsection{Influence of Standard Deviation for Gaussian Utilities}\label{sub:varying_sig}
In this section, we focus on Gaussian utilities and analyze the influence of the standard deviation $\sigma$ on convergence times.
Recalling the drastic differences in the behavior of the dynamics between Gaussian utilities (with $\sigma=10$) and uniform utilities observed in the previous subsections, it is to be expected that the convergence time increases with increasing $\sigma$ (as we are in some sense getting closer to uniform utilities for larger $\sigma$). 
To analyze this effect in more detail, for $\sigma\in \{0,5,10,\dots, 95,100\}$, we sampled $100$ games with $n=50$ agents having Gaussian utilities with standard deviation $\sigma$, and measured the average convergence times of our three types of NS dynamics for a change coefficient $c=1$. 
The results are shown in \Cref{fig:varying-sigma}, where we only show the average convergence time of all dynamics that converged in the time bound of $\num{100000}$ steps. 
For resentful agents, the NS dynamics did not converge in $7/23/48/63/75$ games for a standard deviation of $60/70/80/90/100$, respectively (for appreciative agents and resentful-appreciative agents the dynamics always converged).
Overall, our hypothesis that increasing the standard deviation leads to an increased convergence time is confirmed clearly. 
Slightly unexpected, for $\sigma\geq 80$, Gaussian utilities seem to be even ``harder'' than uniform utilities:
For resentful agents, the dynamics did not even converge for roughly half of the games and for appreciative agents the average convergence time is about twice as large.
The reason for this is that for large $\sigma$ utilities often fall out of the range $[-100,100]$ (from which we sample the agent's base qualification for Gaussian utilities and the utilities for uniform utilities). 
In games which are ``far away'' from stability and for which resentful agents or appreciative agents produce degenerated outcomes (as described in the previous section) an increased range of the utility values clearly contributes to a higher convergence time: 
For instance, for resentful agents, we have observed that the dynamics typically runs until all utility values are negative, which takes more time if some utility values are initially larger.

\subsection{Influence of the Number of Agents} \label{sub:varying_n}
We now briefly analyze the influence of the number of agents on the convergence time of NS dynamics with resentful and/or appreciative agents. 
For this, for different numbers of agents, we generated $100$ games and executed NS dynamics with resentful and/or appreciative agents and a change coefficient of $c=1$. 
We show the results for uniform utilities in \Cref{fig:varying_n-Uniform} and the results for Gaussian utilities in \Cref{fig:varying_n-Gauss}.
Overall, unsurprisingly, the higher the number of agents, the higher is the convergence time of our dynamics.

\begin{figure*}
	\centering
	\begin{subfigure}[t]{0.4\textwidth}
		\centering
		\resizebox{0.8\textwidth}{!}{
\begin{tikzpicture}[every plot/.append style={line width=2.5pt}]

\definecolor{color0}{rgb}{0.12156862745098,0.466666666666667,0.705882352941177}
\definecolor{color1}{rgb}{1,0.498039215686275,0.0549019607843137}
\definecolor{color2}{rgb}{0.172549019607843,0.627450980392157,0.172549019607843}

\begin{axis}[
legend cell align={left},
legend style={fill opacity=0.8, draw opacity=1, text opacity=1, at={(0.03,0.97)}, anchor=north west, draw=white!80!black},
tick align=outside,
tick pos=left,
x grid style={white!69.0196078431373!black},
xmin=2.25, xmax=62.75,
xtick style={color=black},
y grid style={white!69.0196078431373!black},
ymin=-4104.868, ymax=89439.968,
ytick style={color=black},
ylabel={average number of steps until convergence},
xlabel={number of agents}
]
\addplot [semithick, color0]
table {%
5 329.16
10 1803.28
15 5643.07
20 10330.8
25 15927.31
30 22577.58
35 30437.57
40 38997.91
45 48908.7
50 59877.18
55 72086.52
60 85187.93
};
\addlegendentry{resent}
\addplot [semithick, color1]
table {%
5 147.17
10 552.36
15 1352.94
20 2182.54
25 3667.8
30 5184.51
35 7433.45
40 9289.88
45 12040.03
50 14669.55
55 17724.59
60 21100.42
};
\addlegendentry{resent+apprec}
\addplot [semithick, color2]
table {%
5 168.04
10 475.57
15 931.81
20 1266.72
25 1726.53
30 2246.02
35 2772.46
40 3166.45
45 3786.61
50 4214.17
55 4823.14
60 5279.23
};
\addlegendentry{apprec}
\end{axis}

\end{tikzpicture}}
		\caption{Uniform utilities}\label{fig:varying_n-Uniform}
	\end{subfigure}~
	\begin{subfigure}[t]{0.4\textwidth}
		\centering
		\resizebox{0.87\textwidth}{!}{
\begin{tikzpicture}[every plot/.append style={line width=2.5pt}]

\definecolor{color0}{rgb}{0.12156862745098,0.466666666666667,0.705882352941177}
\definecolor{color1}{rgb}{1,0.498039215686275,0.0549019607843137}
\definecolor{color2}{rgb}{0.172549019607843,0.627450980392157,0.172549019607843}

\begin{axis}[
legend cell align={left},
legend style={fill opacity=0.8, draw opacity=1, text opacity=1, at={(0.03,0.97)}, anchor=north west, draw=white!80!black},
tick align=outside,
tick pos=left,
x grid style={white!69.0196078431373!black},
xmin=0.25, xmax=104.75,
xtick style={color=black},
y grid style={white!69.0196078431373!black},
ymin=-131.17, ymax=4287.09,
ytick style={color=black},
ylabel={average number of steps until convergence},
xlabel={number of agents}
]
\addplot [semithick, color0]
table {%
5 290.4
10 425.42
15 204.09
20 600.33
25 318
30 680.12
35 498.57
40 682.35
45 892.79
50 1050.91
55 1163.27
60 1377.46
65 1800.6
70 1925.81
75 2254.01
80 2410.9
85 2947.57
90 3625.39
95 3640.52
100 4086.26
};
\addlegendentry{resent}
\addplot [semithick, color1]
table {%
5 69.66
10 76.28
15 98.46
20 152.33
25 227.27
30 296.01
35 377.13
40 498.04
45 646.1
50 743.11
55 867.7
60 950.19
65 1189.93
70 1351.46
75 1552.42
80 1619.37
85 2008.51
90 2386.07
95 2509.22
100 2778.19
};
\addlegendentry{resent+apprec}
\addplot [semithick, color2]
table {%
5 98.86
10 140.78
15 208.04
20 306.16
25 445.74
30 573.58
35 735.43
40 974.34
45 1164.23
50 1314.99
55 1461.96
60 1588.87
65 1942.83
70 2141.31
75 2503.69
80 2623.43
85 3085.82
90 3541.88
95 3660.13
100 4043.14
};
\addlegendentry{apprec}
\end{axis}

\end{tikzpicture}}
		\caption{Gaussian utilities}\label{fig:varying_n-Gauss}
	\end{subfigure}%
	\caption{Influence of the number of agents on the average convergence time.}
\end{figure*}
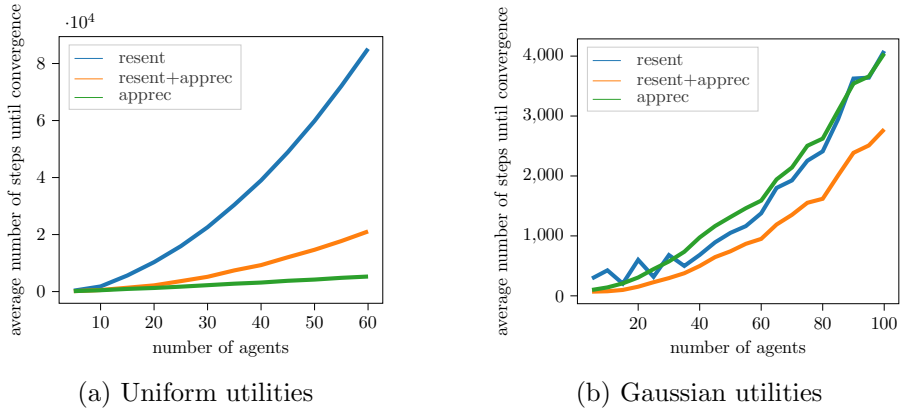

For uniform utilities, the relation of the different types of dynamics is independent of the number of agents.
For appreciative agents, the average convergence time is the lowest, while 
convergence is still faster for resentful agents than for resentful-appreciative agents.
The increase in the convergence time between $n=10$ and $n=60$ is almost linear for all three types.
However, the slope substantially depends on the considered type: 
For resentful agents, comparing $n=10$ and $n=60$, the convergence time increases by a factor of $47$, for appreciative agents the factor is $11$, and for resentful-appreciative agents the factor is $38$.
Thereby, the difference between the three becomes larger with a growing number of agents. 
This again highlights the different nature of the three types of NS dynamics for uniform utilities that we have observed above: 
For resentful agents, for uniform utilities, all utility values need to become negative, which requires substantially, but linearly, more deviations the higher the number of agents.
In contrast, for appreciative agents, agents typically only join each other until almost everyone has a positive valuation for a very large coalition.
Naturally also this requires more deviations the higher the number of agents, yet with slower scaling.

For Gaussian utilities, we also see an almost linear increase of the average convergence time.
Moreover, for most considered agent numbers, dynamics for resentful-appreciative agents converge faster than the ones for resentful agents which in turn converge faster than the ones for appreciative agents. 
However, in contrast to uniform utilities, here the difference between the three types of NS dynamics changes only slightly for different numbers of agents.
Furthermore, comparing $n=10$ and $n=60$, the convergence time of all three types of NS dynamics only increases by a factor of around $12$. 
Lastly, for resentful agents, we can observe a slightly unexpected phenomenon, as for a  smaller number of agents the convergence time does not steadily increase, e.g., for $n=10$ the average converge time is $425$ steps and for $n=15$ it is $204$ steps.

\section{Shortest Converge Sequences} \label{app:shortest}

In our simulations in \Cref{app:simulations}, we have analyzed how fast random executions of NS dynamics converge for resentful and/or appreciative agents. 
An interesting related direction to shed further light on the power of resent and appreciation is to analyze the length of the \emph{shortest} converging execution of dynamics.
The corresponding computational problem is to decide whether in a given game (where a stable outcome is guaranteed to be reachable), there is a converging deviation sequence of a given length from some given starting partition.

Notably, this problem has not been addressed in the literature for classical dynamics, yet is of no less relevance in the general case. 
While existing hardness results for deciding whether a game admits a stable outcome suggest the hardness of the shortest converge sequence problem for the general case, they do not directly imply hardness, as stable states might not be reachable from some initial partition via the allowed deviations.
Moreover, convergence might require exponential time from some initial partition \citep{BBT22a}.

We present three reductions showing that for ASHGs deciding whether we can converge in a given number of steps is NP-hard 
for CS, IS, CNS, and NS dynamics for resentful, for appreciative, and for classical agents (that is, agents with constant utility functions over time). 
More formally, for $\alpha\in \{NS,IS,CNS,CS\}$, we define the following decision problem:
\smallskip
\begin{center}
	{\small 
		\begin{tabularx}{1.0\columnwidth}{ll}
			\toprule
			\multicolumn{2}{c}{\sc{resentful-ashg-$\alpha$-sequence}} \\
			\midrule
			{\bf Given:}& \parbox[t]{0.7\columnwidth}{
				An additively separable hedonic game 
				with resentful agents and some integer~$k$.
				\vspace*{1mm}} \\
			{\bf Question:}& \parbox[t]{0.7\columnwidth}{
				Is there an execution $(\pi^t)_{t\ge 0}$ of the $\alpha$ dynamics
				that
				starts from the singleton partition $\pi^0$
				and
				converges in at most $k$ steps?
				\vspace*{.5mm}} \\ 
			\bottomrule
		\end{tabularx}
	}
\end{center}
\smallskip
Defining {\sc appreciative-ashg-$\alpha$-sequence}
analogously
for appreciative agents,
we can show that these problems are
$\NP$-hard for NS, IS, CNS, and CS.

\begin{restatable}{theorem}{sequence}
	\label{thm:sequence}
	For $\alpha\in \{NS,IS,CNS,CS\}$, {\sc resentful-} and {\sc appreciative-ashg-$\alpha$-sequence} are $\NP$-hard. 
\end{restatable}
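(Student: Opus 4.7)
The plan is to give three separate polynomial-time reductions from Exact Cover by 3-Sets (X3C) --- one for classical (constant-utility) agents, one for resentful agents, and one for appreciative agents --- each handling all four stability notions $\alpha \in \{\mathit{NS},\mathit{IS},\mathit{CNS},\mathit{CS}\}$. Given an X3C instance with universe $U = \{e_1,\dots,e_{3q}\}$ and sets $\mathcal{S} = \{S_1,\dots,S_m\}$, I would construct an ASHG containing one \emph{element agent} for each $e_i$, one \emph{set agent} $c_{S_j}$ for each $S_j$, and a small family of auxiliary trap agents. Utilities are set so that $c_{S_j}$ assigns a large positive value only to the three elements of $S_j$ (and symmetrically for those elements), while any element--to--set pair with $e_i \notin S_j$ receives a strongly negative value, as does every mixed cross-pair of different sets. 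Direct verification then shows that the $\alpha$-stable partitions are exactly those which split the element agents into $q$ groups of size four, each pairing a set agent with the three elements of its set; these partitions are in bijection with exact covers.

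From the singleton partition, the canonical converging sequence for classical agents consists of $3q$ join deviations (each element joining the set agent of its cover). I would set $k = 3q$ and argue that (a) if the X3C instance has a solution, performing the corresponding joins yields a stable state in exactly $k$ steps, and (b) if no cover exists, then any sequence reaching an $\alpha$-stable partition must perform additional corrective deviations, because the only stable partitions are the cover-type partitions. For the resentful and appreciative variants I would reuse the same underlying game but add small initial perturbations chosen so that each pair touched by the canonical sequence tolerates exactly one unit of resent or appreciation. In the resentful case, any superfluous join-then-leave between an element and a set agent lowers their pairwise utility below the IR threshold required for the cover coalitions, making the ideal stable partition unreachable without a restoring move; in the appreciative case an extra join symmetrically raises an undesired pair above zero and introduces a fresh profitable deviation that cannot be performed within $k$ steps. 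In both cases the shortest converging execution still has length $3q$ iff the X3C instance is a YES instance.

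To handle all four stability concepts uniformly, I rely on the set agent's sharp positive-versus-negative split over elements: it aligns its consent exactly with the cover structure, so NS, IS, and CNS deviations coincide on the canonical sequence. For CS the canonical sequence is replaced by $q$ group deviations of size four (one per cover set), with $k$ adjusted accordingly, and the same X3C correspondence still holds because every useful core deviation must form one of these ``set blocks''. The main obstacle is the joint utility calibration: the initial values must be tuned simultaneously so that (i) wrong deviations are truly irreversible under a single unit of resent or of appreciation, (ii) the canonical sequence remains admissible under \emph{all} four stability notions, and (iii) no shortcut execution of length less than $k$ exists in YES instances. This is achieved by scaling the canonical utilities on the order of $\pm M$ for some $M \gg k$ and attaching a single ``threshold token'' per set agent, so that exactly one perturbation suffices to destabilize the would-be solution while the bulk of the utility structure is preserved.
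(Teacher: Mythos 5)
Your reduction hinges on the claim that the $\alpha$-stable partitions of the constructed game are \emph{exactly} the partitions induced by exact covers, and both directions of your argument (in particular ``(b) if no cover exists, any converging sequence needs extra corrective deviations'') lean on it. This claim does not hold for the construction you sketch, and it is very hard to make it hold. With element agents valuing their set agents positively and everything else at $0$ or $-M$, a greedy partial assignment (each element joins \emph{some} set containing it, so that some set agents end up with only one or two of their elements) is already NS-, IS-, CNS- and CS-stable: every element sits with a set agent it values positively and gains nothing by switching to another set containing it, and every set agent has positive utility for its coalition. So stable non-cover partitions exist even in NO instances, and your equivalence collapses. Forcing \emph{non-existence} of stable partitions in NO instances would require conditionally activated no-stability gadgets (\`a la Sung--Dimitrov for CNS or Aziz--Brandt--Seedig for CS), which your unspecified ``trap agents'' do not provide; and even then you would be proving hardness of reachability of \emph{any} stable state, not hardness of converging within $k$ steps.

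The paper's proof avoids this entirely: stable partitions are not characterized at all. Instead, a counting device does the work. There are $3t$ set agents but only $2t$ filling agents; each set agent must end the sequence with utility at least roughly $60t$ minus the bounded drift, which is achievable only by grabbing a filling agent, the special agent $p_0$ (blocked by a penalizing gadget), or all three of its elements. Hence $t$ set agents must each collect their element triple, and since elements are disjoint across coalitions this yields an exact cover. The step budget $k$ enters precisely to bound the accumulated resent/appreciation (at most $O(t)$ units against utility gaps of order $t$), so the utility structure cannot be eroded within $k$ steps --- your instinct to scale utilities by $M\gg k$ is the right one, but your simultaneous claim that a \emph{single} unit of resent destabilizes the target partition contradicts that scaling and is not needed. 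Finally, the penalizing gadgets also inject mandatory extra deviations that make $k$ tight, which is what rules out shortcut executions; your proposal has no analogue of this. As written, the proposal does not yield a correct reduction for any of the four stability notions.
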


Before turning to the proof of \Cref{thm:sequence}, we want to make a few remarks.
First, note that all our constructions in the following proofs
also work for normal agents with constant utilities over time.
Hence, our constructions also imply the hardness
of the respective decision problem for (time-independent) ASHGs.

Second, we want to discuss membership in $\NP$. 
Note that the sequence of partitions leading to a stable partition is a certificate for a Yes-instance of size $\mathcal O(nk)$.
Hence, if $k$ is given in unary encoding, then this is a certificate of polynomial size. 
Moreover, for the reduced instance in our proofs, it holds that $k\in \mathcal O(n)$, and therefore the hardness holds on a class of instances for which membership in $\NP$ is satisfied. Still, it is not clear whether {\sc resentful-} and {\sc appreciative-ashg-$\alpha$-sequence} are in $\NP$ for a general $k$ in binary encoding.

The proof of \Cref{thm:sequence} consists of three individual constructions.
We showcase the technique by proving the statement for NS and IS in \Cref{thm:sequence-NS-IS}.
The proofs for CNS and CS are similar, and are deferred to \Cref{thm:sequence-CNS,thm:sequence-CS} in the appendix.

\begin{lemma}\label{thm:sequence-NS-IS}
	For $\alpha\in \{NS,IS\}$, {\sc resentful-} and {\sc appreciative-ashg-$\alpha$-Sequence} are $\NP$-hard. 
\end{lemma}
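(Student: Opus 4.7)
The plan is to reduce from \textsc{Exact Cover by 3-Sets} (X3C), which asks, given a universe $X=\{x_1,\dots,x_{3q}\}$ and a family $\mathcal{C}=\{C_1,\dots,C_m\}$ of $3$-element subsets of $X$, whether some $\mathcal{C}'\subseteq\mathcal{C}$ of size $q$ partitions $X$. A single ASHG construction should handle both NS and IS, and both the resentful and the appreciative setting, because the argument will rely only on the polarity of the initial utilities rather than on their dynamic evolution.

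Given an X3C instance, I would introduce an \emph{element agent} $e_i$ for each $x_i$, a \emph{set agent} $c_j$ and an \emph{activator} $a_j$ for each $C_j$, and set, with $M$ a sufficiently large constant: $u_{e_i}(c_j)=u_{c_j}(e_i)=1$ if $x_i\in C_j$ and $-M$ otherwise; $u_{a_j}(e_i)=1$ if $x_i\in C_j$ and $0$ otherwise; $u_{c_j}(a_j)=1$, $u_{a_j}(c_j)=0$; $u_{c_j}(c_{j'})=u_{a_j}(c_{j'})=-M$ for $j\ne j'$; and $0$ for every other pair. The starting partition is the singleton partition, and the target length is $k=4q$.

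For the forward direction, given an exact cover $\mathcal{C}'=\{C_{j_1},\dots,C_{j_q}\}$, I would execute the following $4q$ deviations: for each chosen $C_{j_\ell}=\{x_{i_1},x_{i_2},x_{i_3}\}$, let the three elements $e_{i_1},e_{i_2},e_{i_3}$ leave their singletons to join $c_{j_\ell}$'s coalition in turn, then let $a_{j_\ell}$ join. Each step strictly improves the deviator's utility and weakly improves every existing coalition member's utility (so it is both an NS and an IS deviation); no resent is triggered since every deviator leaves a singleton; and appreciation only strengthens values already used in the stability certificate of the final partition. Verifying that the resulting partition is NS-stable (hence IS-stable) is routine: switching to a conflicting coalition incurs a $-M$ penalty, and leaving to a singleton strictly decreases utility. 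For the reverse direction, I would establish two invariants for any NS- or IS-stable partition reachable from singletons: (i) every element $e_i$ must share a coalition with some $c_j$ with $x_i\in C_j$, otherwise $e_i$ could deviate to any such $c_j$'s coalition with IS-consent from $c_j$ (and at worst neutrality from the others); and (ii) every ``used'' $c_j$ (i.e., one that is not a singleton) must have $a_j$ in its coalition, since otherwise $a_j$ could profitably join, gaining $+1$ per element of $C_j$ present, and $c_j$ would consent. These invariants imply at least $3q$ element deviations plus at least one deviation per used set; a pigeonhole argument on coverage forces at least $q$ used sets, with equality exactly when they form an exact cover of~$X$.

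The hard part will be confirming that the lower bound $3q+t$ is robust against the utility dynamics. For resentful agents, utilities only decay, so the $-M$ penalties that enforce the ``conflict'' restrictions in invariants~(i)--(ii) stay in place, and the positive incentives that force elements and activators to deviate can only fade on pairs of agents that have already left each other, which does not happen in a fresh execution starting from singletons unless an agent has already moved (contributing to the count). For appreciative agents, the initial positive utilities only grow, which strictly strengthens both invariants. The remaining subtlety is ruling out ``shortcut'' sequences in which an agent is reused across different set coalitions (say, $a_j$ visiting a foreign coalition to save a deviation elsewhere); the large penalty $M$ disposes of this by making any cross-coalition contact immediately unprofitable, so the bookkeeping $3q+t\ge 4q$ survives unchanged.
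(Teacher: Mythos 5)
Your reduction is a genuinely different construction from the paper's (which uses filling agents, a large penalizing gadget, and all decision-relevant utilities scaled to multiples of $t$ such as $20t$ and $60t$, with budget $k=10t$), but as written it has concrete gaps. First, the forward direction fails outright: in your final partition every unchosen set agent $c_j$ and its activator $a_j$ are singletons, and since you set $u_{c_j}(a_j)=1$ while $u_{a_j}(c_j)=0$, agent $c_j$ has a strictly improving NS deviation to $\{a_j\}$ to which $a_j$ consents, so the partition reached after $4q$ steps is neither NS- nor IS-stable. This particular bug is fixable by setting $u_{c_j}(a_j)=0$, but it signals a broader problem with the unit-scale weights.

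The deeper gap is exactly the point you defer as ``the hard part'': because every decision-relevant utility gap in your construction equals $1$, a \emph{single} unit of resent or appreciation flips the strict inequalities your invariants rest on, and your dismissal of this is not a proof. Concretely, under resent an execution can have $c_j$ join $\{e_i\}$ and later leave it for a strictly better coalition such as $\{e_{i'},a_j\}$ (utility $1\to 2$); then $u_{e_i}(c_j)$ drops to $0$ and, if $C_j$ is the only set containing $x_i$, agent $e_i$ can remain uncovered in a stable partition, so invariant~(i) fails. Under appreciation the claim that growth ``strictly strengthens both invariants'' is wrong: if $a_j$ joins $\{e_i\}$, then $u_{e_i}(a_j)$ rises to $1$, $e_i$'s status-quo utility becomes $1$, and $e_i$ no longer has a \emph{strict} incentive to join a lone $c_{j'}$ --- again invariant~(i) (and, downstream, invariant~(ii), whose proof presupposes that all elements are covered) breaks. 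Whether such ``burning'' manoeuvres can ever save enough deviations to beat the $4q$ budget on a no-instance would require a careful amortized accounting that you have not supplied. The paper sidesteps all of this by making every relevant utility gap $\Omega(t)$ while bounding the total drift accumulated over $k=O(t)$ deviations by a strictly smaller constant times $t$, so no comparison can ever flip; some analogous slack (e.g., multiplying all your non-$M$ utilities by a factor exceeding $k$) is needed to make your argument sound.
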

\begin{proof}
	We first show the statement for IS and resentful agents. 
	We reduce from \textsc{Restricted Exact Cover by 3-Sets (RX3C)} where we are given a finite universe $U=\{x_1,\dots, x_{3t}\}$ and a family $\mathcal{S}=\{S_1,\dots , S_{3t}\}$ of $3$-subsets of $U$ where each element from $U$ appears in exactly three sets from~$\mathcal{S}$. 
	The question is whether there is a family $\mathcal{S}'\subseteq \mathcal{S}$ which is an exact cover of $U$. 
	
	\textbf{Construction.}
	Given 
	an instance $(U,\mathcal S)$ of \textsc{RX3C}, we 
	set $k=10t$ and
	create the following additively separable hedonic game. 
	We add one \emph{element agent} $x$ for each element $x\in U$ and one \emph{set agent} $S$ for each set $S\in \mathcal{S}$. Moreover, we add $2t$ \emph{filling agents} $F=\{f_1,\ldots,f_{2t}\}$. Lastly, we add a penalizing gadget
	consisting of $10t+1$ \emph{penalizing agents}
	$\{p\}\cup Q$ with $Q=\{p_{i,j}\mid i\in[5t],j\in [2]\}$. 
	The resulting set of agents is $N=U\cup \mathcal{S} 
	\cup F \cup \{p\}\cup Q$.
	The initial utilities of the players
	are illustrated in Figure~\ref{fig:sequence-NS-IS}:
	\begin{itemize}
		\item Each element agent has utility zero for each other element agent and
		utility~$1$ for each set agent.
		\item Each filling agent has utility~$1$ for each set agent.
		\item Each set agent $S\in \mathcal{S}$ has 
		utility $20t$ for the three element agents $x\in S$,  
		utility $60t$ for each filling agent, 
		utility~$60t$ for~$p$, and
		utility zero for each penalizing agent in $Q$.
		\item $p$ has 
		utility $10t$ for each set agent and
		utility zero for each penalizing agent in $Q$.
		\item Each penalizing agent $p_{i,j}\in Q$
		has utility $30t$ for each set agent,
		utility $30t$ for $p$, 
		utility $40t$ for her corresponding penalizing agent $p_{i,k}$ with $k\in [2], k\neq j$, and
		utility zero for all remaining penalizing agents in $Q$.
		\item All not explicitly mentioned utilities are $-1000t$.
	\end{itemize}
	\begin{figure}
		\centering
		\begin{tikzpicture}[
		element/.style={shape=circle,draw, fill=white}
		]
		\pgfmathsetmacro\xdist{2}
		\pgfmathsetmacro\ydist{-1}
		
		\node (x1) at (0,0.4*\ydist) {${x_1}$};
		\node (x3t) at (0,1.6*\ydist) {${x_{3t}}$};
		\node at (barycentric cs:x1=1,x3t=.7) {$\vdots$};
		\node (xphan) at (0,2.1*\ydist) {\phantom{${x_{3t}}$}};

		\node (S1) at (\xdist,0.4*\ydist) {$S_1$};
		\node (S3t) at (\xdist,1.6*\ydist) {$S_{3t}$};
		\node at (barycentric cs:S1=1,S3t=.7) {$\vdots$};
		\node (Sphan) at (\xdist,1.8*\ydist) {\phantom{$S_{3t}$}};
		\node (Sphan2) at (0.9*\xdist,1.6*\ydist) {\phantom{$S_{3t}$}};
		
		\node (f1) at (2*\xdist,0.4*\ydist) {$f_{1}$};
		\node (f2t) at (2*\xdist,1.6*\ydist) {$f_{2t}$};
		\node at (barycentric cs:f1=1,f2t=.7) {$\vdots$};
		\node (fphan) at (2*\xdist,1.8*\ydist) {\phantom{$f_{2t}$}};

		\node (p) at (1*\xdist,3.5*\ydist) {$p$};
		
		\node (p11) at (-2,3*\ydist) {$p_{1,1}$};
		\node (p12) at (0,3*\ydist) {$p_{1,2}$};

		\node (p5t1) at (-2,4*\ydist) {$p_{5t,1}$};
		\node (p5t2) at (0,4*\ydist) {$p_{5t,2}$};

		\node (dotsp1) at (barycentric cs:p11=1,p5t1=.7) {$\vdots$};
		\node (dotsp2) at (barycentric cs:p12=1,p5t2=.7) {$\vdots$};
		
		\node (E) [draw, dashed, rounded corners=3pt, fit={(x1) (x3t)}] {};
		\node at (x1) [above=0.35cm] {\footnotesize$0$};
		
		\node (E) [draw, dashed, rounded corners=3pt, fit={(p11) (p5t2)}] {};
		\node at (p11) [above=0.35cm] {\footnotesize$0$};
		
		\draw[<->] (p11) edge
		node[fill=white,anchor=center, pos=0.5, inner sep =2pt, above] {\footnotesize $40t$} (p12);
		\draw[<->] (p5t1) edge
		node[fill=white,anchor=center, pos=0.5, inner sep =2pt, above] {\footnotesize $40t$} (p5t2);
		\draw[<->] (p5t1) edge (p5t2);
		
		\draw[->, bend right=15] (p12) edge
		node[fill=white,anchor=center, pos=0.5, inner sep =2pt] {\footnotesize $30t$} (Sphan2);
		\draw[->, bend right=15] (Sphan2) edge
		node[fill=white,anchor=center, pos=0.5, inner sep =2pt] {\footnotesize $0$} (p12);

		\draw[->, bend right=15] (dotsp2) edge
		node[fill=white,anchor=center, pos=0.5, inner sep =2pt] {\footnotesize $30t$} (p);
		\draw[->, bend right=15] (p) edge
		node[fill=white,anchor=center, pos=0.5, inner sep =2pt] {\footnotesize $0$} (dotsp2);
		
		\draw[->, bend right=30] (Sphan) edge
		node[fill=white,anchor=center, pos=0.5, inner sep =2pt] {\footnotesize $60t$} (p);
		\draw[<-,bend left=30] (Sphan) edge
		node[fill=white,anchor=center, pos=0.5, inner sep =2pt] {\footnotesize $10t$} (p);

		\draw[->] (0.2,0.65*\ydist) -- (0.8*\xdist,0.65*\ydist) node[fill=white,anchor=center, pos=0.5, inner sep =2pt] {\footnotesize $1$};
		
		\draw[<-] (0.2,1.35*\ydist) -- (0.8*\xdist,1.35*\ydist) node[fill=white,anchor=center, pos=0.5, inner sep =2pt, above] {\footnotesize $20t$}
		node[fill=white,anchor=center, pos=0.5, inner sep =2pt, below] {\footnotesize $x\in S$};
		
		\draw[<-] (1.2*\xdist,0.8*\ydist) -- (1.8*\xdist,0.8*\ydist) node[fill=white,anchor=center, pos=0.5, inner sep =2pt] {\footnotesize $1$};
		\draw[->] (1.2*\xdist,1.2*\ydist) -- (1.8*\xdist,1.2*\ydist) node[fill=white,anchor=center, pos=0.5, inner sep =2pt] {\footnotesize $60t$};
		
		\end{tikzpicture}
		\caption{Scheme of the additively separable hedonic games in  \Cref{thm:sequence-NS-IS}. 
			All omitted utilities are $-1000t$.\label{fig:sequence-NS-IS}}
	\end{figure}
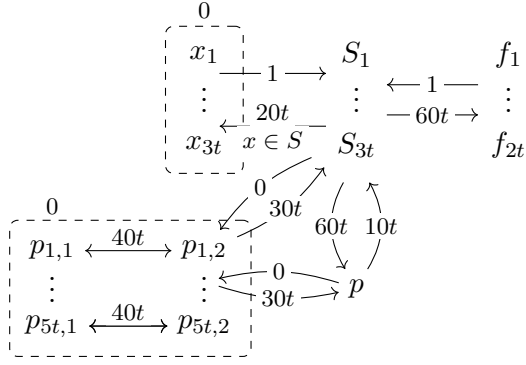 
	
	We now show that there is an exact cover of $U$
	if and only if there is a sequence starting with the singleton partition and reaching an IS partition after at most $k$ IS deviations, where agents are resentful.
	
	{\bfseries ($\Rightarrow$)}
	Given an exact cover $\mathcal{S}'\subseteq \mathcal{S}$, we let the agents deviate as follows. 
	For each $S\in \mathcal{S}\setminus \mathcal{S}'$, one filling agent deviates to $S$ (there are $2t$ of these sets and  $2t$ filling agents).
	For each $S\in \mathcal{S}'$, the three element agents $x$ with $x\in S$ deviate to~$S$.
	As $\mathcal{S'}$ is an exact cover, each filling and each element agent deviate exactly once, leading to $5t$ deviations so far. 
	Furthermore, for $i\in[5t]$,
	$p_{i,2}$ deviates to $p_{i,1}$.
	Thus, we have $10t$ deviations in total.
	If we order these deviations arbitrarily, then all deviations are NS deviations, as each agent deviates from the singleton coalition to a coalition for which she has positive utility. 
	Furthermore, these NS deviations are IS deviations as the agents 
	in the joined coalitions
	have non-negative utilities for the deviators.
	It remains to show that the resulting partition is NS (and thus also IS): 
	Observe that there is at most one set agent per coalition.
	As element and filling agents only derive a positive utility from set agents and are in a coalition with a set agent, no element or filling agent has an NS deviation. 
	Agent~$p$	only has a positive utility for the set agents but
	all set agents are accompanied by either element or filling agents
	for which $p$ has a large negative utility.
	Hence, $p$ has no NS deviation.
	Agents $p_{i,1}$ and $p_{i,2}$ with $i\in[5t]$
	do not want to deviate as they
	have utility $40t$ for coalition $\{p_{i,1},p_{i,2}\}$ and value no other available coalition with more than $30t$.
	Lastly, each set agent $S\in \mathcal{S}$ is in a coalition from which she derives utility $60t$. 
	All other coalitions are either the coalition just containing
	$p$ (for which $S$ has utility $60t$ as well), coalitions containing other penalizing agents (for which $S$ has utility zero), or contain a different set agent (for which $S$ has large negative utility). 
	Thus, $S$ has no NS deviation either and the created partition is NS.

	{\bfseries ($\Leftarrow$)}
	Assume that we have a sequence of $\ell\leq k=10t$ IS deviations leading from the singleton partition $\pi^0$ to an IS partition $\pi^{\ell}$. 
	As a first observation, note that
	in this sequence
	no two agents who have utility $-1000t$ for each other can ever be
	in the same coalition because they would never join each other.
	Second, as we start from the singleton partition, the resent an agent has build up for the
	other agents after $\ell$ deviations sums up to at most $\frac{\ell}{2}\leq 5t$.

	Next, we show that,
	in $\pi^{\ell}$, no set agent is in a joint coalition with~$p$. 
	For the sake of a contradiction,
	assume that $p\in \pi^{\ell}(S)$ for some $S\in\mathcal{S}$.
	By the first observation, this means that there is no other set agent,
	no element agent, and no filling agent in $\pi^{\ell}(S)$.
	Hence $\pi^{\ell}(S)\subseteq \{p,S\}\cup Q$.
	Let $p_{i,j}\in Q$ be a penalizing agent who has not 
	performed any deviation, yet. 
	This agent has to exist because, otherwise, we would have needed at least $10t+1$ deviations to reach $\pi^{\ell}$.
	Then, $p_{i,j}$
	has a utility of at least 
	$u_{p_{i,j}}^0(S)+u_{p_{i,j}}^0(p)-5t= 55t$ 
	for $\pi^{\ell}(S)$
	and thus has a possible NS deviation to~$\pi^{\ell}(S)$. 
	(As $p_{i,j}$ has not deviated yet, 
	she can have a utility of at most $40t$ for $\pi^{\ell}(p_{i,j})$ 
	if $p_{i,k}$ with $k\in [2], k\neq j$ has joined her.)
	This NS deviation is also an IS deviation as, initially, 
	no agent from~$\pi^{\ell}(S)$ has a negative utility for $p_{i,j}$.
	This is a contradiction to $\pi^{\ell}$ being~IS.
	So, $\pi^{\ell}(p)\subseteq \{p\}\cup Q$.
	
	As $\pi^{\ell}$ is IS, no set agent $S$ has any possible IS deviation
	from $\pi^{\ell}(S)$ to $\pi^{\ell}(p)$.
	As $p$ and all other penalizing agents would always allow $S$ to join her (note that they cannot have decreased their utility by more than $5t$ for $S$ until step $\ell$), $S$ has no NS deviation to join $\pi^{\ell}(p)$.
	Since 
	(a) the resent that $S$ has build up for any agent sums up to at most $-5t$,
	(b) $S$ initially has utilities zero for all penalizing agents $p_{i,j}\in \pi^{\ell}(p)$, 
	(c) $S$ initially has utility $60t$ for $p$, and
	(d) $S$ does not want to move to $p$,
	it follows that
	$u^{\ell}_S(\pi^{\ell}(S))\geq 60t -5t=55t$.
	
	Therefore, $\pi^{\ell}(S)$ contains at least one filling agent or three element agents $x$ with $x\in S$.
	Since this holds for all set agents
	and no two set agents are in one coalition, 
	the existence of an exact cover
	(namely $\mathcal{S'}=\{S\in\mathcal{S}\mid S \textnormal{ is together with three element agents}\}$)
	is implied.
	
	By the same proof as above,
	there is an exact cover of~$U$
	exactly if there is a sequence of at most~$k$ NS deviations
	leading to an NS partition for resentful agents.
	
	For appreciative agents, we can use the same construction.
	The proof of correctness is very similar.
\end{proof}

\section{Conclusion and Future Directions}\label{sec:discussion}
We have initiated the study of hedonic games with time-dependent utility functions, which are influenced by deviations in the past. 
In this course, we have studied a dynamic model of hedonic games by considering sequences of partitions evolving through sequences of deviations.
Whenever a deviation is performed, some of the agents influenced by this deviation modify their utility.
In particular, we have considered resentful agents, which decrease their utility for agents who abandon them,
 and appreciative agents, which increase their utility for agents who join them.
 
In our theoretical analysis, we have investigated whether the resentful or appreciative perception of other agents is sufficient to guarantee convergence for dynamics based on various deviation types.
To state our results in broad generality, we have proposed axioms for utility aggregation.
These consider the treatment of friends and enemies.
All of our axioms are satisfied for additively separable aggregation of utilities, while modified fractional utility aggregation satisfies all axioms except aversion to enemies.

For resentful agents, we have shown the convergence of dynamics based on all types of group and single-agent deviations under fairly general conditions. 
However, some of these results also need the individual rationality of deviations.
By contrast, for appreciative agents, we can only guarantee convergence of the CNS dynamics in ASHGs, while we have found several possibilities for cycling for the other types of dynamics.
Resent and appreciation do not need to be expressed by the agents affected by a deviation, but they may also affect the deviator herself.
We have seen that deviator-resent can be a strong force for stability over time.

Apart from our theoretical analysis, we have also studied the effects of resent and appreciation by means of simulations.
This gives insight in the actual running time of the dynamics as well as in the structure of the partitions obtained after the convergence of the dynamics.
We have considered two utility models, namely utilities selected uniformly at random and according to Gaussian distributions.
It turned out that the produced outcomes are fairly degenerate under uniformly random utilities.
For resentful agents, the outcome is usually the singleton partition, while the outcome consists of one large hub coalition and a few smaller coalitions for appreciative agents.
A compromise and a seemingly more realistic result is obtained if agents are affected by both resent and appreciation.
Then, the outcomes consist of several medium-size coalitions, and seem to be more related to the original base game.
By contrast, under Gaussian utilities, the differences of the outcomes for different agent types are much less profound, which suggests that results by simulations may highly depend on the creation of the random games.

We complement the simulation results on the running time by a theoretical consideration of the problem of convergence within a given time limit.
There, we obtain hardness results for dynamics in ASHGs based on any stability concept considered in this paper.
Notably, these hardness results not only hold for resentful and appreciative agents but also for for normal agents (with time-independent utilities).

Based on our work, one research direction is to consider other effects that could affect agents' valuations over time and potentially contribute to additional convergence results.
Apart from this, we have also posed several specific open questions throughout the paper (even showing the equivalence of some of them in \Cref{co:equivalence}).
In particular, it remains open, whether we can extend our convergence results for resentful agents to the modified fractional aggregation without the restriction to individually rational deviations.
There, we only know that we may cycle for general NS dynamics.
Moreover, complementing our simulations, it would be interesting to theoretically analyze the effects of combining resent and appreciation. 
A concrete open question here is whether CS dynamics are guaranteed to converge, which is the case for resentful agents but not for appreciative agents.
Finally, another path for future research is to consider fastest convergence in other classes of hedonic games apart from ASHGs.

\section*{Acknowledgments}
This work was partially supported by the Deutsche Forschungsgemeinschaft under grants NI 369/22, BR 2312/11-2, and BR 2312/12-1, and the NRW project \emph{Online Participation}.
We would like to thank the participants of the Dagstuhl seminar 21331 on \emph{Coalition Formation Games}, and especially Florian Brandl, Gr\'egory Bonnet, Edith Elkind, Bettina Klaus, Seckin \"Ozbilen, and Sanjukta Roy, for fruitful discussions.

\appendix
\section*{Appendix: Missing Proofs}

In the appendix, we provide all missing proofs.

\section{Additional Material for \Cref{se:prelims}}\label{app:prelims}

In this section, we provide the proof for our lemma considering infinite sequences of partitions.

\infinite*
\begin{proof}
	We simultaneously prove the part of the lemma about single-agent deviations and group deviations by showing a stronger lemma that implies both. To this end, we consider labeled transitions. A \emph{labeled sequence} with respect to $\mu$ is a sequence $a = (a^t)_{t\ge 0}$ together with a (labeling) function $\mu \colon \mathbb N_{\ge 0} \to L$. Given a labeled sequence $a = (a^t)_{t\ge 0}$ with respect to $\mu$, a \emph{labeled transition} of $a$ with respect to $L$ is a tuple $(l,a_1,a_2)$ such that there exists $t\ge 0$ with $a^t = a_1$, $a^{t+1} = a_2$, and $\mu(t) = l$. 
	Then, \Cref{lem:infoccurence} follows from the next claim by interpreting the deviating single agent or group of agents as labels of a transition between partitions.
	
	\begin{claim}
		Let $\ag$ be a non-empty and finite ground set and $L$ be a non-empty and finite set of labels. Let $a = (a^t)_{t\ge 0}$ be an infinite labeled sequence over $A$ with respect to $\mu$ where $a^t\in \ag$
		for all $t\ge 0$.  Then, there exists a $t_0\ge 0$ such that every labeled transition of $a$ with respect to $L$ performed at some time $t\ge t_0$ occurs infinitely often.
	\end{claim}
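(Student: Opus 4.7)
The plan is to observe that the set of possible labeled transitions, $L \times A \times A$, is finite because $L$ and $A$ are both finite. Hence, we can partition the labeled transitions into two classes: those that occur only finitely often in $a$ and those that occur infinitely often. The idea is to simply wait until all transitions of the first type are over; from that point onwards, the only remaining transitions must belong to the second class.

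More concretely, I would argue as follows. Let $T \subseteq L \times A \times A$ be the (finite) set of labeled transitions that occur only finitely often in $a$. For each $(l,a_1,a_2) \in T$, let $t(l,a_1,a_2)$ denote the largest index $t \geq 0$ such that $a^t = a_1$, $a^{t+1} = a_2$, and $\mu(t) = l$ (setting $t(l,a_1,a_2) = -1$ if no such index exists, so that the set is non-empty). Define
\[
t_0 = 1 + \max_{(l,a_1,a_2) \in T} t(l,a_1,a_2),
\]
where the maximum is well-defined since $T$ is finite (and we set $t_0 = 0$ if $T = \emptyset$). Then, by construction, no labeled transition from $T$ occurs at any time $t \geq t_0$. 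Consequently, every labeled transition performed at some time $t \geq t_0$ lies outside $T$ and therefore occurs infinitely often in $a$.

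The argument is essentially a pigeonhole/finiteness observation, and I do not expect any genuine obstacle. The only thing to be slightly careful about is handling the degenerate cases (the set $T$ being empty, or a transition in $T$ never occurring), which can be dealt with by the conventions above. The statement about the original lemma for single-agent (respectively group) deviations then follows by instantiating $A = \Partitions$ and taking $L = N$ (respectively $L = 2^N$), with $\mu(t)$ being the deviating agent (respectively deviating coalition) used to move from $\pi^t$ to $\pi^{t+1}$.
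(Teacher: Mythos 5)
Your proposal is correct and follows essentially the same argument as the paper: identify the last occurrence time of each finitely-occurring labeled transition, take the maximum over this finite set, and set $t_0$ just beyond it. The only (cosmetic) difference is that you restrict upfront to the finitely-occurring transitions while the paper takes a supremum over all transitions and then maximizes over those with finite supremum.
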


	\renewcommand\qedsymbol{$\vartriangleleft$} 
	
	\begin{proof}
		Let $a = (a^t)_{t\ge 0}$ be an infinite labeled sequence with respect to $\mu$ over finite and non-empty ground sets and label sets. Given $a_1,a_2\in \ag$
		and $l\in L$, define $t(l,a_1,a_2) = \sup\{t\ge 0\colon a^t = a_1, a^{t+1} = a_2, \mu(t) = l\}$, where we set $\sup \emptyset = 0$. In other words, $t(l,a_1,a_2)$ defines the last time step where the labeled transition $(l,a_1,a_2)$ occurs. We define $t_0 = \max\{t(l,a_1,a_2)\colon l\in L, a_1,a_2\in \ag, t(l,a_1,a_2) < \infty\}$. By construction, every labeled transition performed after $t_0$ must occur infinitely often.
	\end{proof}
	
	This completes the proof of the lemma.
	\renewcommand\qedsymbol{$\square$}
\end{proof}

\section{Additional Material for \Cref{sec:apprec}}

\resentappreciation*
\begin{proof}
	We only show how the first statement implies the second statement. The reverse implication is completely analogous.
	
	Assume that every agent $i\in N$ is resentful and uses the {\caf} $\mathit{AS}_i$. Assume that there exists an infinite and periodic sequence $(\pi^t)_{t\ge 0}$ of NS deviations.
	Assume that $\pi^t$ evolves from $\pi^{t-1}$ through an NS deviation of agent~$d^t$ for all $t\ge 1$. 
	By periodicity, there exist $t_0\in \mathbb N$ and $p\in \mathbb N$ such that, for all $k\in \mathbb N_0$ and $l\in \{0,\dots, p-1\}$, it holds that
	$\pi^{t_0+kp+l} = \pi^{t_0+l}$.
	We define an infinite and periodic sequence for appreciative agents. Essentially, this sequence reverts the order of one (periodic) segment of $(\pi^t)_{t\ge 0}$ and appends it indefinitely.
	
	Given $k\in \mathbb N_0$ and $l\in \{0,\dots, p-1\}$, define $\sigma^{kp+l} = \pi^{t_0+p-l}$. Clearly, this defines an infinite and periodic sequence of partitions $(\sigma^t)_{t\ge 0}$ induced by single-agent deviations. In particular, partition $\sigma^{kp+l+1}$ evolves from partition $\sigma^{kp+l}$ by a single-agent deviation of agent $d^{t_0+p-l}$.
	Moreover, define initial utilities by $\bar u^0_i(j) = - u^{t_0}_i(j)$. We assume that these utilities are updated in $(\sigma^t)_{t\ge 0}$ according to appreciation.
	
	Given a pair of agents $i,j$, let $r_i(j) = |\{l\in \{1,\dots, p\}\colon d^{t_0+l} = j, i\in \pi^{t_0+l}(j)\}|$, i.e., the resent that $i$ builds for $j$ during one period. Also, given a pair of agents $i,j$ and $s\in \{1,\dots, p\}$, let $r_i^s(j) = |\{l\in \{1,\dots, s\}\colon d^{t_0+l} = j, i\in \pi^{t_0+l}(j)\}|$, i.e., the resent that $i$ builds for $j$ until step $s$ of a period.
	
	We have to show that the single-agent deviations are even NS deviations.
	Therefore, fix $k\in \mathbb N_0$, $l\in  \{0,\dots, p-1\}$, and define $\sigma = \sigma^{kp+l}$, $\sigma' = \sigma^{kp+l+1}$, and $d = d^{t_0+p-l}$. We first claim that
	\begin{align}\label{eq:builtresent}
	\sum_{j\in \sigma'(d)}r_d(j) \ge \sum_{j\in \sigma(d)}r_d(j)\text{,} 
	\end{align}
	i.e., the appreciation of a deviator build during one period for an abandoned coalition is bounded by the appreciation build for the joined coalition. 
	
	Suppose the contrary. Clearly, the resent that agents aggregate in one iteration of the cycle $(\pi^{t_0},\dots, \pi^{t_0+p})$ is the same as the appreciation that agents aggregate in a cycle $(\sigma^0,\dots, \sigma^p)$. 
	
	Then, it holds for all $m\in \mathbb N_0$ that
	
	\begin{align*}
	& 0 < u_d^{t_0 + mp-l-1}(\pi^{t_0+p-l})-u_d^{t_0 + mp-l-1}(\pi^{t_0+p-l-1})\\
	& = u_d^{t_0 + p -l -1}(\pi^{t_0+p-l})-u_d^{t_0 + p -l -1}(\pi^{t_0+p-l-1})\\ & - (m-1) \left(\sum_{j\in \pi^{t_0+p-l}(d)}r_d(j) - \sum_{j\in \pi^{t_0+p-l-1}(d)}r_d(j)\right)\\
	& = u_d^{t_0 + p -l-1}(\pi^{t_0+p-l})-u_d^{t_0 + p -l -1}(\pi^{t_0+p-l-1}) \\ &- (m-1) \left(\sum_{j\in \sigma(d)}r_d(j) - \sum_{j\in \sigma'(d)}r_d(j)\right)\text{.}
	\end{align*}
	
	Therefore, 
	\begin{align*}
	& (m-1) \left(\sum_{j\in \sigma(d)}r_d(j) - \sum_{j\in \sigma'(d)}r_d(j)\right) \\ &  < u_d^{t_0 + p -l - 1}(\pi^{t_0+p-l})-u_d^{t_0 + p -l - 1}(\pi^{t_0+p-l-1})\text{.} 
	\end{align*}
	
	However, the left hand side of this equation is unbounded while the right hand side is fixed. Hence, this cannot be true for all $m\in \mathbb N_0$, a contradiction. We have thus established \Cref{eq:builtresent}.
	
	We can compute that the deviation is indeed a NS deviation in every period:
	
	\begin{align*}
	& \bar u_d^{kp+l}(\sigma') - \bar u_d^{kp+l}(\sigma)\\
	& = \bar u_d^{l}(\sigma') - \bar u_d^{l}(\sigma)\\
	& + k \left(\sum_{j\in \sigma'(d)}r_d(j) - \sum_{j\in \sigma(d)}r_d(j)\right)\\
	& \overset{\mathit{\eqref{eq:builtresent}}}{\ge} \bar u_d^{l}(\sigma') - \bar u_d^{l}(\sigma)\\
	& = \bar u_d^{0}(\sigma') + \sum_{j\in \sigma'(d)}(r_d(j)-r_d^{p-l}(j))\\& - \bar u_d^{0}(\sigma) - \sum_{j\in \sigma(d)}(r_d(j)-r_d^{p-l}(j))\\
	& = - u_d^{t_0}(\pi^{t_0+p-l-1}) + \sum_{j\in \pi^{t_0+p-l-1}(d)}(r_d(j)-r_d^{p-l}(j))\\& + u_d^{t_0}(\pi^{t_0+p-l}) - \sum_{j\in \pi^{t_0+p-l}(d)}(r_d(j)-r_d^{p-l}(j))\\
	& = - u_d^{t_0+p-l}(\pi^{t_0+p-l-1}) + \sum_{j\in \pi^{t_0+p-l-1}(d)}r_d(j)\\& + u_d^{t_0+p-l}(\pi^{t_0+p-l}) - \sum_{j\in \pi^{t_0+p-l}(d)}r_d(j)\\
	& \overset{\mathit{\eqref{eq:builtresent}}}{\ge} u_d^{t_0+p-l}(\pi^{t_0+p-l}) - u_d^{t_0+p-l}(\pi^{t_0+p-l-1}) > 0\text{.}
	\end{align*}
	
	The strict inequality in the end holds because the deviation performed by $d$ in the sequence $(\pi^t)_{t\ge 0}$ is a NS deviation. 
	Hence, the sequence $(\sigma^t)_{t\ge 0}$ evolves through NS deviations.
\end{proof}

\begin{table}
		\caption{Example from \Cref{th:mfhg-apprec}. Utilities after $x$ cycles of deviations. Each row depicts the utility that one agent has for the other agents.} \label{table:mfhg-apprec}
		\centering
		\begin{tabular}{ c|c|c|c|c|c|c } 
			& $a$ & $a'$ & $b$ & $b'$ & $c$ & $c'$ \\ \hline
			$a$ & $-$ & $110+x$ & $120+x$ & $-100+x$ & $130+x$ & $-100+x$ \\ \hline
			$a'$ & $20+x$ & $-$ & $100+x$ & $10+x$ & $100+x$ & $30+x$ \\ \hline
			$b$ & $130+x$ & $-100+x$ & $-$ & $110+x$ & $120+x$ & $-100+x$ \\ \hline
			$b'$ & $100+x$ & $30+x$ & $20+x$ & $-$ & $100+x$ & $10+x$ \\ \hline
			$c$ & $120+x$ & $-100+x$ & $130+x$ & $-100+x$ & $-$ & $110+x$ \\ \hline
			$c'$ & $100+x$ & $10+x$ & $100+x$ & $30+x$ & $20+x$ & $-$
		\end{tabular}
	\end{table}

\mfhgapprec*
\begin{proof}
	We now describe an involved example of an MFHG together with an infinite periodic sequence of NS deviations for appreciative agents. 
	Our example is very similar to the one presented for resentful agents in \Cref{th:mfhg-resent}. 
	In fact, following the general idea of \Cref{thm:NS-resent-vs-appreciation-ASHG}, we reverse the deviation sequence and appropriately adjust the initial utilities.
	As in \Cref{th:mfhg-resent},  each agent joins each other agent exactly once in each cycle. 
	This establishes the same invariant: if a deviating agent  prefers a joined coalition $C_1$ to an abandoned coalition $C_2$
	before (and during) the \emph{first} execution of the cycle, then it still prefers $C_1$ to $C_2$ before (and during) \emph{each} execution of the cycle.
	
	Consider the game with agent set $N = \{a,a',b,b',c,c'\}$ and utilities as depicted in \Cref{table:mfhg-apprec}. The initial utilities result from setting $x = 0$ in \Cref{table:mfhg-apprec}. 
	We now present an infinite sequence $(\pi^t)_{t\ge 0}$ of partitions, always consisting of three coalitions. 
	For each partition, we refer to the first listed coalition as $C_1$, to the second as $C_2$, and the third as $C_3$. 
	For the sake of clarity, for each partition, we also specify which agent deviates to which coalition in the next step. 
	Specifically, for $n\geq 0$, we have
	\begin{itemize}
		\item $\pi^{18n+1} = \{\{b'\},\{a,a'\},\{b,c,c'\}\}$ with agent $b$ deviating to $C_2$,
		\item $\pi^{18n+2} = \{\{b'\},\{b,a,a'\},\{c,c'\}\}$ with agent $c'$ deviating to $C_1$,
		\item $\pi^{18n+3} = \{\{b',c'\},\{b,a,a'\},\{c\}\}$ with agent $c'$ deviating to $C_2$,
		\item $\pi^{18n+4} = \{\{b'\},\{b,a,a',c'\},\{c\}\}$ with agent $b$ deviating to $C_1$,
		\item $\pi^{18n+5} = \{\{b',b\},\{a,a',c'\},\{c\}\}$ with agent $a$ deviating to $C_1$,
		\item $\pi^{18n+6} = \{\{b',b,a\},\{a',c'\},\{c\}\}$ with agent $c'$ deviating to $C_3$,
		\item $\pi^{18n+7} = \{\{b',b,a\},\{a'\},\{c,c'\}\}$ with agent $a$ deviating to $C_3$,
		\item $\pi^{18n+8} = \{\{b',b\},\{a'\},\{c,c',a\}\}$ with agent $b'$ deviating to $C_2$,
		\item $\pi^{18n+9} = \{\{b\},\{a',b'\},\{c,c',a\}\}$ with agent $b'$ deviating to $C_3$,
		\item $\pi^{18n+10} = \{\{b\},\{a'\},\{c,c',a,b'\}\}$ with agent $a$ deviating to $C_2$,
		\item $\pi^{18n+11} = \{\{b\},\{a',a\},\{c,c',b'\}\}$ with agent $c$ deviating to $C_2$,
		\item $\pi^{18n+12} = \{\{b\},\{a',a,c\},\{c',b'\}\}$ with agent $b'$ deviating to $C_1$,
		\item $\pi^{18n+13} = \{\{b,b'\},\{a',a,c\},\{c'\}\}$ with agent $c$ deviating to $C_1$,
		\item $\pi^{18n+14} = \{\{b,b',c\},\{a',a\},\{c'\}\}$ with agent $a'$ deviating to $C_3$,
		\item $\pi^{18n+15} = \{\{b,b',c\},\{a\},\{c',a'\}\}$ with agent $a'$ deviating to $C_1$,
		\item $\pi^{18n+16} = \{\{b,b',c,a'\},\{a\},\{c'\}\}$ with agent $c$ deviating to $C_3$,
		\item $\pi^{18n+17} = \{\{b,b',a'\},\{a\},\{c',c\}\}$ with agent $b$ deviating to $C_3$,
		\item $\pi^{18n+18} = \{\{b',a'\},\{a\},\{c',c,b\}\}$ with agent $a'$ deviating to $C_2$.
	\end{itemize}

	Then, it is possible to verify that for $k\ge 1$, $\pi^{k-1}$ leads to $\pi^k$ by means of an NS deviation. Hence, we have presented an MFHG with an infinite sequence of NS deviations for appreciative agents. 	
\end{proof}

\section{Additional Material for \Cref{app:shortest}}

The goal of this section is to complete the proof of \Cref{thm:sequence}.
Since we have already considered NS and IS in \Cref{thm:sequence-NS-IS}, it remains to consider CNS and CS.
We split the remaining proof into two more lemmas.

\begin{lemma}\label{thm:sequence-CNS}
	{\sc resentful-} and {\sc appreciative-ashg-CNS-Sequence} are $\NP$-hard. 
\end{lemma}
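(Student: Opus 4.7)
The plan is to reduce from \textsc{RX3C}, adapting the construction used in \Cref{thm:sequence-NS-IS}. The same four agent types (element, set, filling, and penalizing) and the same budget $k=10t$ are used, but with targeted modifications to the initial utilities that accommodate the CNS consent requirement from the abandoned coalition. Concretely, I would set $u_x(S)$ to a sufficiently negative value (e.g., $-10t$) for every element $x$ and set agent $S$, introduce an attraction $u_x(y)$ (e.g., $10t+1$) whenever $x$ and $y$ are distinct elements lying in a common set, and set $u_{p_{i,j}}(S)=0$ while keeping $u_{p_{i,j}}(p)=30t$. With these choices, no element or penalizing agent can accumulate positive utility for a set agent within the budgeted sequence (even accounting for appreciation), so none of them can block a set agent from performing a CNS departure; meanwhile, the intra-set attraction keeps the forward trajectory feasible.

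For the forward direction, I use a slight variant of the NS/IS sequence. For each $S\in\mathcal{S}'$, the set agent $S$ first joins the singleton $\{x_1\}$ of one of her elements, and then the remaining two elements $x_2,x_3\in S$ sequentially join the growing coalition, each obtaining at least $-10t+(10t+1)=1>0$ from $S$ and the elements already present. For each $S\in\mathcal{S}\setminus\mathcal{S}'$, a filling agent joins $\{S\}$, and each $p_{i,2}$ joins $\{p_{i,1}\}$. All deviations are performed from singletons, so each is trivially CNS. The total is $3t+2t+5t=10t=k$, and the resulting partition can be directly verified to be CNS-stable.

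For the backward direction, consider a sequence of at most $10t$ CNS deviations ending in a CNS-stable partition $\pi^\ell$. The preliminary facts from \Cref{thm:sequence-NS-IS} (no two agents with mutual utility $-1000t$ share a coalition; resent is bounded) carry over. Every $p_{i,j}$ must be non-singleton, since otherwise she would have a CNS deviation into $\pi^\ell(p)$ with vacuous consent, gaining $u_{p_{i,j}}(p)=30t$; this forces the penalizing gadget to consume at least $5t$ deviations. The argument that no set agent shares a coalition with $p$ adapts as before using an unused penalizing agent. The key structural claim is that every set agent $S$ lies in a coalition containing either at least three of her own elements or at least one filling agent: by the choice of $u_x(S)=-10t$ and the budget bound, in $\pi^\ell$ each co-member of $S$ has non-positive utility for $S$ and thus consents to her departure, so whenever $\pi^\ell(S)\in\{\{S\},\{S,x\},\{S,x_1,x_2\}\}$ with $x,x_1,x_2\in S$ and no filling, $S$ has a profitable CNS deviation into $\pi^\ell(p)\cup\{S\}$ yielding at least $55t$, contradicting CNS-stability. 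A combinatorial count then closes the argument: element-covered sets require at least $3$ non-penalizing deviations each and filling-covered sets at least $1$, so $3a+b\leq 5t$; combined with $a+b\geq 3t$, $a\leq t$, and $b\leq 2t$, this forces $a=t$ and $b=2t$, and the $t$ element-covered sets form an exact cover of $U$.

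The main obstacle that guided these modifications is precisely the CNS consent requirement from the abandoned coalition: in the unmodified NS/IS construction, elements with $u_x(S)=1$ and penalizing agents with $u_{p_{i,j}}(S)=30t$ could trap a set agent in an undersized coalition by blocking her departure, admitting CNS-stable partitions that do not encode exact covers. Pushing $u_x(S)$ strictly below zero by more than the budget eliminates every such trap, while the added intra-set attraction $u_x(y)$ preserves the forward trajectory without relying on elements being attracted directly to set agents. A separate case check is needed to rule out anomalous ``hybrid'' coalitions (such as $\{S,p\}$) that could in principle persist stably by consuming a single deviation; this is handled by the same counting argument since any such coalition only reshuffles the constraints by a constant and still forces a $t$-element-covered substructure, i.e., an exact cover. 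The appreciative case follows by essentially the same construction and argument, since appreciation can only raise $u_x(S)$ toward zero from the very negative starting value and the budget caps how many times this can occur.
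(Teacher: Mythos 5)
Your overall strategy (reduce from \textsc{RX3C}, make abandoned-coalition consent vacuous by ensuring co-members have non-positive utility for the departing set agent) is reasonable, but there is a concrete gap in the backward direction, centered on the hybrid coalition $\{S,p\}$ that you flag but do not actually dispose of. In the NS/IS construction, the coalition $\{S,p\}$ is ruled out because an as-yet-unused penalizing agent $p_{i,j}$, having utility $30t$ for $S$ and $30t$ for $p$, gains at least $55t$ by joining it, beating the $40t$ she can have from her partner. Your modification $u_{p_{i,j}}(S)=0$ destroys exactly this mechanism: a paired $p_{i,j}$ sits at $40t$ and would only get $30t$ from $\{S,p\}$, so she no longer threatens it, and your own first claim (every $p_{i,j}$ is non-singleton in $\pi^\ell$) guarantees there is no unpaired $p_{i,j}$ left to invoke. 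Once $\{S_1,p\}$ can persist, the entire covering force collapses: the lower bound $u_S^{\ell}(\pi^{\ell})\ge 55t$ for the other set agents came precisely from their option of joining $\pi^{\ell}(p)$, and with a set agent already sitting there that option is worth $-1000t$ to them. Concretely, one can then reach in exactly $10t$ steps a CNS-stable partition consisting of $\{S_1,p\}$, all $5t$ penalizing pairs, $2t$ set--filling pairs, $t-1$ set agents each with three of their own elements, and the three leftover elements clustered together (mutually blocking each other's departure via your intra-element attraction). This costs $1+5t+2t+3(t-1)+2=10t$ deviations, every deviation is from a singleton (hence trivially CNS), and the endpoint witnesses only a partial cover of size $t-1$, so the reduction would answer ``yes'' on instances with no exact cover. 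Your closing remark that the hybrid case ``only reshuffles the constraints by a constant'' is therefore not correct: it changes the structure of stable endpoints qualitatively, not by a constant. A secondary concern is that the added attraction $u_x(y)=10t+1$ between co-set elements creates many new candidate deviations (leftover elements joining quadruples or clustering) whose stability analysis you do not carry out.

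The paper sidesteps all of this with a genuinely different construction: it uses a small penalizing gadget $p_0,\dots,p_4$ in which $p_1,\dots,p_4$ admit no CNS partition at all (a variant of the Sung--Dimitrov example), sets $k=5t+1$, and gives element agents utility $1$ for one another rather than for set agents. The point is that if a set agent ever occupies $p_0$, then $p_1$ is forced to partition among $p_2,p_3,p_4$, where no CNS-stable arrangement exists, so the hybrid coalition cannot appear in any CNS-stable endpoint. If you want to salvage your variant, you need some mechanism of comparable strength to exclude $\{S,p\}$; the counting argument alone cannot do it.
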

\begin{proof}
	We first show the statement for resentful agents. 
	We again reduce from \textsc{RX3C}. 
	
	\textbf{Construction.}
	From an instance of \textsc{RX3C}, 
	given by a universe $U=\{x_1,\dots, x_{3t}\}$ and 
	a family $\mathcal{S}=\{S_1,\dots , S_{3t}\}$ of $3$-subsets of $U$,
	we create an additively separable hedonic game as follows. 
	We create the set of agents 
	$N=U\cup \mathcal{S} \cup F \cup Q$
	with element agents $U$,
	set agents~$\mathcal{S}$,
	filling agents $F=\{f_1,\ldots,f_{2t}\}$, and
	penalizing agents $Q=\{p_0,\ldots,p_4\}$.
	The initial utilities of the agents
	are illustrated in Figure~\ref{fig:sequenceCNS}:

	\begin{itemize}
		\item Each element agent has utility $1$ for each other element agent.
		\item Each set agent $S\in \mathcal{S}$ has 
		utility $20t$ for the three element agents $x\in S$,  
		utility $60t$ for each filling agent, 
		utility~$60t$ for~$p_0$, and
		utility zero for $p_1$.
		\item $p_0$ has 
		utility $10t$ for each set agent and
		utility zero for~$p_1$.
		\item $p_1$ has 
		utility $1$ for $p_0$.
		\item $p_2$, $p_3$, and $p_4$ have utility $200t$ for $p_1$.
		\item $p_2$ has 
		utility $-100t$ for $p_3$ and
		utility $-400t$ for $p_4$.
		\item $p_3$ has
		utility $-100t$ for $p_4$ and
		utility $-400t$ for $p_2$.
		\item $p_4$ has
		utility $-100t$ for $p_2$ and
		utility $-400t$ for $p_3$.
		\item All not explicitly mentioned utilities are $-1000t$.
	\end{itemize}
	The penalizing gadget consisting of agents~$p_1,\ldots,p_4$
	is a variant of an ASHG by
	\citet[Example~2]{sun-dim:j:on-myopic-stability-concepts-for-hedonic-games}
	for which no CNS partition exists (in normal ASHGs).
	
	We set $k$ to $5t+1$ and
	show that there is an exact cover of~$U$
	if and only if there is a sequence starting from the singleton partition of at most $k$ CNS deviations
	leading to an CNS partition, where agents are resentful.
		\begin{figure}
		\centering
		\begin{tikzpicture}[
		element/.style={shape=circle,draw, fill=white}
		]
		\pgfmathsetmacro\xdist{2}
		\pgfmathsetmacro\ydist{-1}
		
		\node (x1) at (0,0) {${x_1}$};
		\node (dots1) at (0,\ydist) {$\vdots$};;
		\node (x3t) at (0,2*\ydist) {${x_{3t}}$};
		
		\node (s1) at (\xdist,0) {$S_1$};
		\node (dots2) at (\xdist,\ydist) {$\vdots$};
		\node (s3t) at (\xdist,2*\ydist) {$S_{3t}$};
		
		\node (f1) at (2*\xdist,0.4*\ydist) {$f_{1}$};
		\node (dots3) at (2*\xdist,\ydist) {$\vdots$};
		\node (f2t) at (2*\xdist,1.6*\ydist) {$f_{2t}$};
		
		\node (p0) at (\xdist,4*\ydist) {$p_0$};
		\node (p1) at (2*\xdist,4*\ydist) {$p_1$};
		\node (p2) at (3*\xdist,2*\ydist) {$p_2$};
		\node (p3) at (4*\xdist,4*\ydist) {$p_3$};
		\node (p4) at (3*\xdist,6*\ydist) {$p_4$};

		\node (E) [draw, dashed, rounded corners=3pt, fit={(x1) (x3t)}] {};
		\node at (x3t) [below=0.5cm] {$1$};

		\draw[->] (dots2) edge  
		node[fill=white,anchor=center, pos=0.5, inner sep =2pt, above] {\footnotesize $20t$} 
		node[fill=white,anchor=center, pos=0.5, inner sep =2pt, below] {\footnotesize $x\in S$}
		(dots1);
		\draw[->] (dots2) edge  
		node[fill=white,anchor=center, pos=0.5, inner sep =2pt] {\footnotesize $60t$} (dots3);

		\draw[->, bend right] (s3t) edge  
		node[fill=white,anchor=center, pos=0.5, inner sep =2pt] 
		{\footnotesize $60t$} (p0);
		\draw[->, bend right=20] (p0) edge  
		node[fill=white,anchor=center, pos=0.5, inner sep =2pt] 
		{\footnotesize $10t$} (s3t);
		
		\draw[->, bend right=20] (p0) edge  
		node[fill=white,anchor=center, pos=0.5, inner sep =2pt] 
		{\footnotesize $0$} (p1);
		\draw[->, bend right=20] (p1) edge  
		node[fill=white,anchor=center, pos=0.5, inner sep =2pt] 
		{\footnotesize $1$} (p0);
		
		\draw[->, bend left=20] (s3t) edge  
		node[fill=white,anchor=center, pos=0.5, inner sep =2pt] {\footnotesize $0$} (p1);
		
		\draw[->] (p2) edge  
		node[fill=white,anchor=center, pos=0.5, inner sep =2pt] 
		{\footnotesize $200t$} (p1);
		\draw[->] (p3) edge  
		node[fill=white,anchor=center, pos=0.75, inner sep =2pt] 
		{\footnotesize $200t$} (p1);
		\draw[->] (p4) edge  
		node[fill=white,anchor=center, pos=0.5, inner sep =2pt] 
		{\footnotesize $200t$} (p1);
		
		\draw[->, bend right=15] (p4) edge  
		node[fill=white,anchor=center, pos=0.4, inner sep =2pt] 
		{\footnotesize $-400t$} (p3);
		\draw[->, bend right=15] (p3) edge  
		node[fill=white,anchor=center, pos=0.4, inner sep =2pt] 
		{\footnotesize $-100t$} (p4);
		
		\draw[->, bend right=15] (p2) edge  
		node[fill=white,anchor=center, pos=0.7, inner sep =2pt] 
		{\footnotesize $-400t$} (p4);
		\draw[->, bend right=15] (p4) edge  
		node[fill=white,anchor=center, pos=0.6, inner sep =2pt] 
		{\footnotesize $-100t$} (p2);
		
		\draw[->, bend right=15] (p3) edge  
		node[fill=white,anchor=center, pos=0.6, inner sep =2pt] 
		{\footnotesize $-400t$} (p2);
		\draw[->, bend right=15] (p2) edge  
		node[fill=white,anchor=center, pos=0.6, inner sep =2pt] 
		{\footnotesize $-100t$} (p3);
		
		\end{tikzpicture}
		\caption{Additively separable hedonic games in  \Cref{thm:sequence-CNS}. All omitted utilities are $-1000t$.\label{fig:sequenceCNS}}
	\end{figure}
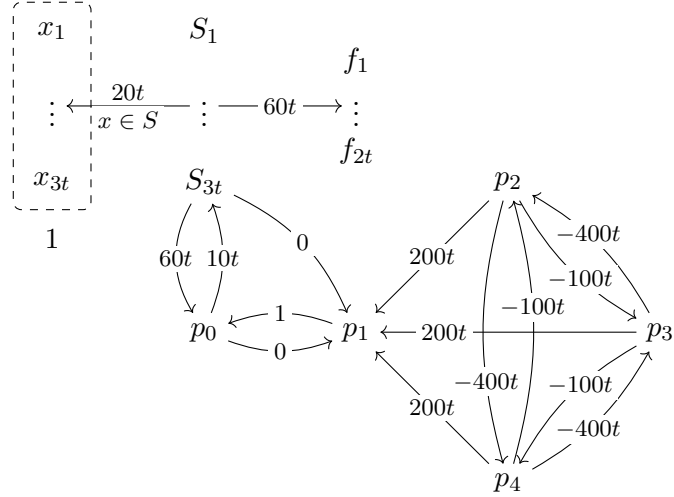 
	{\bfseries ($\Rightarrow$)}
	Given an exact cover $\mathcal{S}'\subseteq \mathcal{S}$, 
	consider the following sequence of $k=5t+1$ deviations.
	First, for each set $S\in \mathcal{S}'$ with $S=\{x_i,x_j,x_l\}$,
	$x_i$ and $x_j$ join $x_l$
	and afterwards the set agent $S$ joins $\{x_i,x_j,x_l\}$
	($3t$ deviations).
	Second, all set agents $S$ with $S\notin \mathcal{S}'$ join 
	a filling agent each
	($2t$ deviations).
	Last, $p_1$ joins $p_0$.
	
	All these deviations are NS deviations as 
	each of these deviations is performed from a singleton 
	to a coalition with positive utility for the deviator.
	Moreover, all these NS deviations are CNS deviations
	because they are performed from singletons;
	so there are no other agents who could veto the deviation.
	
	Next, we show that the resulting partition $\pi$ is CNS.
	First, each set agent has utility $60t$ for $\pi(S)$.
	Since every other coalition either contains an agent for which she has utility $-1000t$
	or is the coalition $\{p_0,p_1\}$, for which she has utility $60t$,
	she does not want to deviate.
	Second, each element and filling agent do not have any CNS deviation
	because they have a set agent in their coalition who would
	veto their deviation.
	Third, $p_1$ would veto a deviation of $p_0$, so $p_0$ has no CNS deviation.
	Fourth, $p_1$ does not want to deviate to another coalition as 
	she has utility $1$ for $\pi(p_1)=\{p_0,p_1\}$ 
	and every other coalition in $\pi$ contains an agent for which 
	she has utility $-1000t$.
	Last, $p_2$, $p_3$, and $p_4$ do not want to deviate as 
	they only have a positive utility (of $200t$) for agent $p_1$ but
	$p_1$ is together with $p_0$ for which they have utility $-1000t$.
	
	{\bfseries ($\Leftarrow$)}
	Assume that we have a sequence of $\ell\leq k=5t+1$ CNS deviations 
	leading from the singleton partition $\pi^0$ 
	to a CNS partition $\pi^{\ell}$. 
	In this sequence,
	no two agents who mutually have utility $-1000t$ for each other can ever be
	in the same coalition because they would never join each other.
	Furthermore, as we start from the singleton partition, the resent an agent has accumulated for the
	other agents after $\ell$ deviations sums up to at most $\frac{\ell}{2}\leq 3t$.
	
	We will now show that no set agent is together with~$p_0$ in $\pi^{\ell}$.
	For the sake of a contradiction,
	assume that $S\in \pi^{\ell}(p_0)$ for some $S\in \mathcal{S}$.
	By the first observation, this implies that
	$\pi^{\ell}(p_0)\subseteq \{S,p_0,p_1\}$.
	It further holds that $p_1\notin \pi^{\ell}(p_0)$
	because, otherwise, $p_1$ would have a CNS deviation
	to a singleton coalition.
	Hence, $\pi^{\ell}(p_0)= \{S,p_0\}$.
	Next, 
	consider coalition $\pi^{\ell}(p_1)$.
	By the first observation, we have 
	$\pi^{\ell}(p_1)\subseteq \mathcal{S}\cup Q$.
	Furthermore, there is no set agent in $\pi^{\ell}(p_1)$:
	Indeed, assume for contradiction that there is some set agent $S'$ in $\pi^{\ell}(p_1)$.
	Then, by the first observation, neither $p_2$, $p_3$, nor $p_4$ 
	are in $\pi^{\ell}(p_1)$.
	Furthermore, we already now that $p_0$ is not in $\pi^{\ell}(p_1)$.
	It follows that $\pi^{\ell}(p_1)=\{S',p_1\}$ which means that
	$p_1$ has a CNS deviation to a singleton coalition, a contradiction.
	Hence, $\pi^{\ell}(p_1)\subseteq \{p_1,\ldots,p_4\}$.
	For $i\in \{2,3,4\}$, it also holds by the first observation 
	that
	$\pi^{\ell}(p_i)\subseteq \{p_1,\ldots,p_4\}$.
	We will now show that $\{p_1,\ldots,p_4\}$ can not be divided into a CNS partition
	\citep[cf.][Example~2]{sun-dim:j:on-myopic-stability-concepts-for-hedonic-games}.
	Note that the arguments also hold when lowering any utilities among this agents by at most $3t$ and therefore $\pi^{\ell}$ is no CNS partition,
	which is a contradiction. 
	
	First, none of $p_2$, $p_3$, and $p_4$ can be in the same coalition 
	because the agent with a initial utility of $-400t$ for another agent in her coalition
	would have a CNS deviation to a singleton.
	Hence, the remaining possible partitions of $\{p_1,\ldots,p_4\}$ are
	$\pi'=\{\{p_1\},\{p_2\},\{p_3\},\{p_4\}\}$,
	$\pi''=\{\{p_1,p_2\},\{p_3\},\{p_4\}\}$,
	$\pi'''=\{\{p_2\},\{p_1,p_3\},\{p_4\}\}$, and
	$\pi''''=\{\{p_2\},\{p_3\},\{p_1,p_4\}\}$.
	But all these partitions are not CNS:
	$\pi'$ is not stable because $p_2$ has a CNS deviation to $\{p_1\}$;
	$\pi''$ is not stable because $p_4$ has a CNS deviation to $\{p_1,p_2\}$;
	$\pi'''$ is not stable because $p_2$ has a CNS deviation to $\{p_1,p_3\}$;
	$\pi''''$ is not stable because $p_3$ has a CNS deviation to $\{p_1,p_4\}$.
	Thus, we have shown that $\pi^{\ell}$ is not CNS. This is a contradiction
	and therefore the initial assumption was wrong, i.e., no set agent is together with $p_0$ in $\pi^{\ell}$.
	
	Since $\pi^{\ell}$ is CNS, no set agent has a CNS deviation to $p_0$.
	Note that $p_0$ is the only agent who has positive utilities for the set agents.
	So, for any set agent $S\in \mathcal{S}$,
	there is no agent in $\pi^{\ell}(S)$ who would veto a deviation of $S$.
	Therefore, no set agent has an NS deviation to $p_0$.
	Furthermore, we know that $\pi^{\ell}(p_0)\subseteq\{p_0,p_1\}$
	because of the first observation and because no set agent is with $p_0$.
	So, for every set agent $S\in \mathcal{S}$, 
	we have
	$u_{S}^{\ell}(\pi^{\ell})\geq
	u_{S}^{\ell}(\pi^{\ell}(p_0)\cup\{S\})\geq 60t-3t=57t$ 
	(where $60t$ is $S$'s initial utility for $p_0$ and
	$-3t$ is a bound for the resent that $S$ might have build for $p_0$ or $p_1$).
	Therefore, $\pi^{\ell}(S)$ contains at least one filling agent or three element agents $x$ with $x\in S$.
	Since no two set agents are in the same coalition, 
	this implies the existence of an exact cover for $U$.
	
	For appreciative agents, we can use almost the same construction.
	We just set $p_0$'s initial utility for $p_1$ to $-1$
	while all other utilities stay the same.
	Then, the proof of correctness is very similar to the proof for resentful agents.
\end{proof}

\begin{lemma}\label{thm:sequence-CS}
	{\sc resentful-} and {\sc appreciative-ashg-CS-Sequence} are $\NP$-hard.
\end{lemma}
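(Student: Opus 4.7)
The plan is to reduce from Restricted Exact Cover by 3-Sets (RX3C), adapting the template of \Cref{thm:sequence-NS-IS} and \Cref{thm:sequence-CNS}. From an instance $(U,\mathcal{S})$ with $|U|=|\mathcal{S}|=3t$, I would build an ASHG with $3t$ element agents, $3t$ set agents, $2t$ filling agents, and a penalizing gadget specifically designed to preclude CS partitions unless an exact cover is present. Mirroring the scheme of \Cref{thm:sequence-NS-IS}, each set agent $S$ starts with large positive utility for the three elements $x\in S$, for every filling agent, and for a designated bridge agent of the gadget; element and filling agents receive mildly positive utility for every set agent; all other cross-type utilities are strongly negative (e.g.\ $-1000t$), which forbids any undesired mixing during a dynamics of length $\Theta(t)$. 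The step budget will be set to $k=\Theta(t)$, tight enough that only the intended forward sequence is feasible.

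The penalizing gadget will be a small ASHG with empty core, together with a bridge agent $p_0$ that strongly prefers being with a set agent. Its role is to force one ``free'' set agent to be consumed by stabilizing the gadget, so that the remaining $3t-1$ set agents must be split cleanly between (a) forming the coalition $\{S\}\cup S$ for each $S$ in an exact cover and (b) pairing with a distinct filling agent for each $S$ outside the cover. I expect to use an instance in the spirit of the Sung--Dimitrov gadget from \Cref{thm:sequence-CNS}, strengthened so that even \emph{group} deviations are blocked only via the intended bridge; the utility weights can be calibrated so that, unless $p_0$ is placed with a set agent, the gadget always admits some blocking coalition among its own agents. In the forward direction, given an exact cover $\mathcal{S}'$, the sequence first forms each cover coalition $\{S\}\cup S$ by a single group deviation of four agents, then pairs each non-cover set agent with a filling agent, and finally stabilizes the gadget with one group deviation involving $p_0$ and a designated free set agent. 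Each step is a CS deviation because every participant strictly improves from its current coalition, and the final partition is CS because any candidate blocking coalition would contain a $-1000t$ pair.

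The backward direction argues that any partition reached by at most $k$ CS deviations that is actually CS must stabilize the gadget in the intended configuration, forcing one set agent to serve as the bridge and the rest to be distributed according to an exact-cover pattern; counting the deviations needed to achieve this layout shows the $k$-budget to be tight, so the layout precisely encodes an exact cover. The main obstacle will be that CS permits blocks by arbitrary subsets, so ``shortcut'' blocking coalitions that cross layers or exploit partial cover structures must be ruled out; this is where the $-1000t$ penalty has to be large enough that neither resentful decreases nor appreciative increases within $k$ steps can drag any forbidden coalition into the stable range. The same construction should handle appreciative agents with only sign flips in the small utilities attached to $p_0$, since appreciation updates cannot bridge the $-1000t$ gap within $\Theta(t)$ steps and the intended forward sequence uses each deviation at most once.
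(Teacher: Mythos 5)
Your overall template (RX3C, element/set/filling agents, $-1000t$ penalties, an empty-core penalizing gadget, a tight budget $k=\Theta(t)$) matches the paper's, but the role you assign to the gadget is inverted in a way that breaks the backward direction. You propose that the gadget is stabilized by absorbing a ``free'' set agent together with $p_0$, i.e.\ that it ``always admits some blocking coalition among its own agents'' \emph{unless} $p_0$ is placed with a set agent. The paper does the opposite: the gadget $\{p_1,\dots,p_6\}$ (a variant of the empty-core example of \citet{ABS11c}) is stabilized precisely by pulling $p_1$ \emph{out} of it into the pair $\{p_0,p_1\}$, and the backward direction first proves that \emph{no} set agent can be with $p_0$ in the final partition (otherwise $p_1,\dots,p_6$ would have to form a core-stable sub-partition on their own, which is impossible). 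The point of keeping $p_0$ away from all set agents is that $p_0$ then sits at utility only $10t$ while valuing every set agent at $20t$, so the pair $\{S,p_0\}$ is a standing blocking coalition against every set agent $S$; core stability therefore forces $u_S^{\ell}(\pi^{\ell})\ge 60t-O(t)$ for \emph{every} $S$, which is exactly what pins each set agent to either one filling agent or its three elements and yields the exact cover. In your design, once $p_0$ is happily absorbed into a gadget coalition containing a set agent, it is no longer a credible blocking partner, and nothing forces the remaining set agents into the cover pattern: a non-cover set agent stranded with, say, a single element agent (utility $20t$) need not be blocked by $\{S,f\}$ (if all filling agents are already with set agents and would not improve) nor by $\{S\}\cup S$ (if its elements are with other set agents), so non-cover-encoding partitions could be core stable. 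This is the missing idea, not a calibration detail that large penalties can absorb.

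Two smaller issues compound this. First, your counting does not close: consuming one set agent in the gadget leaves $3t-1$ set agents to be split into $t$ cover sets and non-cover sets matched to $2t$ filling agents, which does not balance without further modification. Second, ``the final partition is CS because any candidate blocking coalition would contain a $-1000t$ pair'' is not sufficient for core stability: the dangerous blocking coalitions are exactly the small penalty-free ones such as $\{S,p_0\}$, $\{S,f\}$, $\{S\}\cup S$, and subsets of the gadget, and each must be excluded by an explicit utility comparison (the paper also needs the observation that, over at most $3t+2$ group deviations from the singleton partition, accumulated resent or appreciation per agent is $O(t)$ and hence cannot cross any of the $\Theta(t)$-sized utility gaps). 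Your forward direction and the treatment of appreciative agents by a small sign adjustment are in the right spirit, but the proof as proposed does not go through without redesigning the gadget so that $p_0$ ends up isolated from the set agents.
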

\begin{proof}
	We first show the statement for resentful agents 
	and reduce from \textsc{RX3C}.
	
	\textbf{Construction.}
	Let an instance of \textsc{RX3C} $(U,\mathcal{S})$ with 
	a universe $U=\{x_1,\dots, x_{3t}\}$ and 
	a family $\mathcal{S}=\{S_1,\dots , S_{3t}\}$ of $3$-subsets of $U$
	be given.
	We set $k=3t+2$ and
	create the following additively separable hedonic game. 
	The set of agents is
	$N=U\cup \mathcal{S} \cup F \cup Q$
	with element agents $U$,
	set agents~$\mathcal{S}$,
	filling agents $F=\{f_1,\ldots,f_{2t}\}$, and
	penalizing agents $Q=\{p_0,\ldots,p_6\}$.
	The initial utilities of the agents
	are illustrated in Figure~\ref{fig:sequence-CS}:
	\begin{figure}
		\centering
		\begin{tikzpicture}[
		element/.style={shape=circle,draw, fill=white}
		]
		\pgfmathsetmacro\xdist{2}
		\pgfmathsetmacro\ydist{-1}
		
		\node (x1) at (0,0) {${x_1}$};
		\node (dots1) at (0,\ydist) {$\vdots$};;
		\node (x3t) at (0,2*\ydist) {${x_{3t}}$};
		\node (xphan) at (0,2.1*\ydist) {\phantom{${x_{3t}}$}};

		\node (S1) at (\xdist,0) {$S_1$};
		\node (dots2) at (\xdist,\ydist) {$\vdots$};
		\node (S3t) at (\xdist,2*\ydist) {$S_{3t}$};
		\node (Sphan) at (\xdist,2.2*\ydist) {\phantom{$S_{3t}$}};
		\node (Sphan2) at (0.9*\xdist,2*\ydist) {\phantom{$S_{3t}$}};
		
		\node (f1phan) at (1.8*\xdist,0.2*\ydist) {\phantom{$f_{1}$}};
		\node (f1) at (2*\xdist,0.4*\ydist) {$f_{1}$};
		\node (dots3) at (2*\xdist,\ydist) {$\vdots$};
		\node (f2t) at (2*\xdist,1.6*\ydist) {$f_{2t}$};
		\node (fphan) at (2*\xdist,1.8*\ydist) {\phantom{$f_{2t}$}};

		\node (p0) at (1*\xdist,3.5*\ydist) {$p_0$};
		
		\node (p1) at (2*\xdist,3.5*\ydist) {$p_{1}$};
		\node (p2) at (3*\xdist,3.5*\ydist) {$p_{2}$};
		\node (p3) at (3.5*\xdist,4.8*\ydist) {$p_{3}$};
		\node (p4) at (3*\xdist,6.1*\ydist) {$p_{4}$};
		\node (p5) at (2*\xdist,6.1*\ydist) {$p_{5}$};
		\node (p6) at (1.5*\xdist,4.8*\ydist) {$p_{6}$};
		
		\node (pphan) at (1*\xdist,5.4*\ydist) {\phantom{$p_{6}$}};

		\node (E) [draw, dashed, rounded corners=3pt, fit={(x1) (x3t)}] {};
		\node at (x1) [above=0.35cm] {\footnotesize$0$};
		
		\draw[->, bend right=15] (p0) edge
		node[fill=white,anchor=center, pos=0.5, inner sep =2pt] {\footnotesize $10t$} (p1);
		\draw[<-, bend left=15] (p0) edge
		node[fill=white,anchor=center, pos=0.5, inner sep =2pt] {\footnotesize $300t$} (p1);
		
		\draw[<->] (p1) edge
		node[fill=white,anchor=center, pos=0.5, inner sep =2pt] {\footnotesize $60t$} (p2);
		\draw[<->] (p3) edge
		node[fill=white,anchor=center, pos=0.5, inner sep =2pt] {\footnotesize $60t$} (p4);
		\draw[<->] (p5) edge
		node[fill=white,anchor=center, pos=0.5, inner sep =2pt] {\footnotesize $60t$} (p6);
		
		\draw[<->] (p2) edge
		node[fill=white,anchor=center, pos=0.5, inner sep =2pt] {\footnotesize $50t$} (p3);
		\draw[<->] (p4) edge
		node[fill=white,anchor=center, pos=0.5, inner sep =2pt] {\footnotesize $50t$} (p5);
		\draw[<->] (p6) edge
		node[fill=white,anchor=center, pos=0.5, inner sep =2pt] {\footnotesize $50t$} (p1);
		
		\draw[<->] (p1) edge
		node[fill=white,anchor=center, pos=0.5, inner sep =2pt] {\footnotesize $40t$} (p3);
		\draw[<->] (p3) edge
		node[fill=white,anchor=center, pos=0.5, inner sep =2pt] {\footnotesize $40t$} (p5);
		\draw[<->] (p5) edge
		node[fill=white,anchor=center, pos=0.5, inner sep =2pt] {\footnotesize $40t$} (p1);

		\draw[->, bend right] (S3t) edge
		node[fill=white,anchor=center, pos=0.5, inner sep =2pt] {\footnotesize $60t$} (p0);
		\draw[<-,bend left] (S3t) edge
		node[fill=white,anchor=center, pos=0.5, inner sep =2pt] {\footnotesize $20t$} (p0);

		\draw[->] (0.3,0.6*\ydist) -- (0.8*\xdist,0.6*\ydist) node[fill=white,anchor=center, pos=0.5, inner sep =2pt] {\footnotesize $1$};
		\draw[<-] (0.3,1.4*\ydist) -- (0.8*\xdist,1.4*\ydist) node[fill=white,anchor=center, pos=0.5, inner sep =2pt, above] {\footnotesize $20t$}
		node[fill=white,anchor=center, pos=0.5, inner sep =2pt, below] {\footnotesize $x\in S$};
		
		\draw[<-] (1.2*\xdist,0.8*\ydist) -- (1.7*\xdist,0.8*\ydist) node[fill=white,anchor=center, pos=0.5, inner sep =2pt] {\footnotesize $1$};
		\draw[->] (1.2*\xdist,1.2*\ydist) -- (1.7*\xdist,1.2*\ydist) node[fill=white,anchor=center, pos=0.5, inner sep =2pt] {\footnotesize $60t$};
		
		\end{tikzpicture}
		\caption{Additively separable hedonic games in  \Cref{thm:sequence-CS}. 
			All omitted utilities are $-1000t$.\label{fig:sequence-CS}}
	\end{figure}
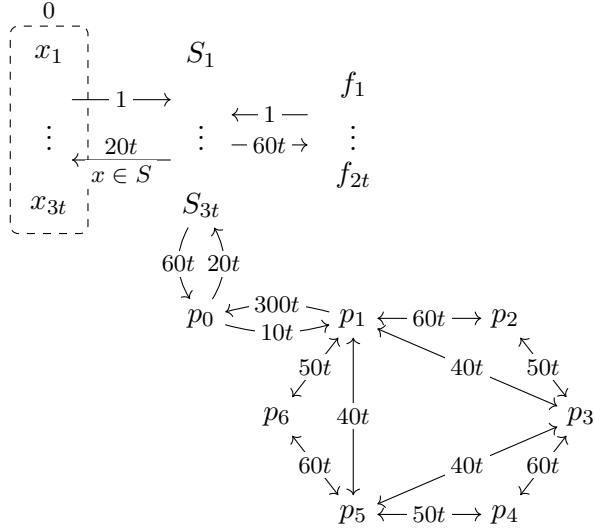 
	\begin{itemize}
		\item Each set agent $S$ has 
		utility $60t$ for each filling agent, 
		utility~$20t$ for element agents $x$ with $x\in S$, 
		and utility $60t$ for the penalizing agent $p_0$. 
		\item Each element agent 
		has utility $1$ for each set agent
		and utility $0$ for all other element agents.
		\item Each filling agent has utility $1$ for each set agent. 
		\item $p_0$ has utility $20t$ for all set agents and
		utility $10t$ for $p_1$.
		\item Agents $p_1$, $p_3$, and $p_5$ have utility $40t$ for each other.
		\item For $i\in\{1,3,5\}$, agent $p_i$ has 
		utility $60t$ for agent $p_{i+1}$ and vice versa.
		\item For $i\in\{2,4,6\}$, agent $p_i$ has 
		utility $50t$ for agent $p_{i+1}$ and vice versa (where $p_7=p_1$).
		\item All not explicitly mentioned utilities are $-1000t$. 
	\end{itemize}
	
	Note that the penalizing gadget consisting of agents~$p_1,\ldots,p_6$
	is a variant of an ASHG by
	\citet[Example~1]{ABS11c}
	for which no CS partition exists (for normal ASHGs).
	
	{\bfseries ($\Rightarrow$)}
	Given an exact cover $\mathcal{S}'\subseteq \mathcal{S}$, 
	consider the following sequence of group deviations.
	For each $S\in \mathcal{S}'$,
	$S$ and the three $x\in S$ deviate together ($t$ deviations).
	For each $S\in \mathcal{S}\setminus\mathcal{S}'$,
	$S$ deviates with one filling agent that is still in a singleton ($2t$ deviations).
	Now, $p_0$ deviates with $p_1$.
	Lastly, $p_3$, $p_4$, and $p_5$ deviate together.
	Hence, we have $3t+2$ deviations in total.
	
	All these deviations are CS deviations as 
	all these deviations are performed from singletons 
	to coalitions with positive utilities for all deviators.
	
	Next, we show that the resulting partition $\pi$ is CS.
	First note that no two agents who have utility $-1000t$ for each other
	want to deviate together.
	Each element agent $x$ 
	only has a positive utility of $1$ for all set agents and
	is in a coalition with one set agent.
	As no two set agents will ever deviate together, there is no coalition that
	$x$ could deviate with to improve her utility.
	The same holds for all filling agents.
	Each set agent $S$ has utility $60t$ for $\pi(S)$.
	As no element or filling agent wants to deviate, 
	the only remaining agent for which $S$ has positive utility is $p_0$.
	Yet, $S$ only has utility $60t$ for $p_0$ which is no improvement
	compared to $\pi(S)$.
	So no set agent has a CS deviation.
	Therefore, also $p_0$ has no CS deviation as she would only 
	like to deviate with some set agents.
	$p_1$ has utility $300t$ for $\pi(p_1)=\{p_0,p_1\}$ and
	would thus not deviate without $p_0$.
	Similarly, the remaining agents $p_2,\ldots,p_6$ have
	no CS deviation without $p_1$.
	Thus, the whole partition $\pi$ is CS.
	
	{\bfseries ($\Leftarrow$)}
	Assume that we have a sequence of $\ell\leq k=3t+2$ CS deviations 
	leading from the singleton partition $\pi^0$ to a CS partition $\pi^{\ell}$. 
	In this sequence,
	no agent is ever in a coalition 
	with an agent for which she has utility $-1000t$ 
	because she would never deviate with this agent.
	Therefore, the largest coalitions in this sequence have size at most four
	(containing one set agent and three element agents).
	It follows that
	the sum of the resent that a single agent accumulates for the other agents 
	during $\ell\leq 3t+2$ deviations 
	is bounded by $\frac{\ell}2 \cdot 3 \le \frac{3t+2}{2}\cdot 3=\frac 92 t+3$ because
	she can be left by at most three agents every second deviation.
	
	We will now show that no set agent is in a joint coalition with~$p_0$ in $\pi^{\ell}$.
	For the sake of a contradiction,
	assume that $S\in \pi^{\ell}(p_0)$ for some $S\in \mathcal{S}$.
	By the first observation, this implies that
	$\pi^{\ell}(p_0)= \{S,p_0\}$.
	This means that $p_1,\ldots,p_6$ have formed coalitions among each other.
	Yet, as argued by \citet[Example~1]{ABS11c},
	there is no CS partition for these six agents.
	Note that resent, which influences each player by a total utility change of at most $\frac 92t+3$,
	does not facilitate stability because the differences between the utilities in 
	this group are at least $10t$.
	It follows that $\pi^{\ell}$ is not CS which is a contradiction.
	So no set agent is together with~$p_0$ in $\pi^{\ell}$.
	
	Next, since $\pi^{\ell}$ is CS, no coalition $\{S,p_0\}$ with
	$S\in\mathcal{S}$ has a CS deviation.
	Since $p_0$ is together with no set agent in $\pi^{\ell}$,
	$p_0$ can have a utility of at most $10t$ for $\pi^{\ell}$ and
	wants to deviate to $\{S,p_0\}$.
	So, $S$ does not want to deviate to $\{S,p_0\}$.
	Therefore, 
	$u_{S}^{\ell}(\pi^{\ell})\geq
	u_{S}^{\ell}(\{S,p_0\})\geq 60t-(\frac 92t+3)=\frac{111}2 t-3$.
	Hence,
	$\pi^{\ell}(S)$ contains at least one filling agent or three element agents $x$ with $x\in S$.
	Since no two set agents are in the same coalition, 
	this implies the existence of an exact cover for $U$.
	
	For appreciative agents, 
	we can use the same construction and the proof is very similar.
\end{proof}

\end{document}